\newcommand{\nop}[1]{}
\newcommand\vldbdoi{XX.XX/XXX.XX}
\newcommand\vldbpages{XXX-XXX}
\newcommand\vldbvolume{14}
\newcommand\vldbissue{1}
\newcommand\vldbyear{2020}
\newcommand\vldbauthors{\authors}
\newcommand\vldbtitle{\shorttitle} 
\newcommand\vldbavailabilityurl{URL_TO_YOUR_ARTIFACTS}
\newcommand\vldbpagestyle{plain} 
\newtheorem{definition}{Definition}
\newtheorem{example}{Example}
\newtheorem{lemma}{Lemma}
\begin{document}

\title{S$^3$AND: Efficient Subgraph Similarity Search Under Aggregated Neighbor Difference Semantics (Technical Report)}

\author{Qi Wen}
\affiliation{%
  \institution{East China Normal University}
  \city{Shanghai}
  \country{China}
}
\email{51265902057@stu.ecnu.edu.cn}
\author{Yutong Ye}
\affiliation{%
  \institution{East China Normal University}
  \city{Shanghai}
  \country{China}
}
\email{52205902007@stu.ecnu.edu.cn}

\author{Xiang Lian}
\affiliation{%
  \institution{Kent State University}
  \city{Kent}
  \state{Ohio}
  \country{USA}
}
\email{xlian@kent.edu}

\author{Mingsong Chen}
\affiliation{%
  \institution{East China Normal University}
  \city{Shanghai}
  \country{China}
}
\email{mschen@sei.ecnu.edu.cn}








\begin{abstract}

For the past decades, the \textit{subgraph similarity search} over a large-scale data graph has become increasingly important and crucial in many real-world applications, such as social network analysis, bioinformatics network analytics, knowledge graph discovery, and many others. While previous works on subgraph similarity search used various graph similarity metrics such as the graph isomorphism, graph edit distance, and so on, in this paper, we propose a novel problem, namely \textit{subgraph similarity search under aggregated neighbor difference semantics} (S$^3$AND), which identifies subgraphs $g$ in a data graph $G$ that are similar to a given query graph $q$ by considering both keywords and graph structures (under new keyword/structural matching semantics). To efficiently tackle the S$^3$AND problem, we design two effective pruning methods, \textit{keyword set} and \textit{aggregated neighbor difference lower bound pruning}, which rule out false alarms of candidate vertices/subgraphs to reduce the S$^3$AND search space. Furthermore, we construct an effective indexing mechanism to facilitate our proposed efficient S$^3$AND query answering algorithm. Through extensive experiments, we demonstrate the effectiveness and efficiency of our S$^3$AND approach over both real and synthetic graphs under various parameter settings.

\end{abstract}

\maketitle

\pagestyle{\vldbpagestyle}
\begingroup\small\noindent\raggedright\textbf{PVLDB Reference Format:}\\
\vldbauthors. \vldbtitle. PVLDB, \vldbvolume(\vldbissue): \vldbpages, \vldbyear.\\
\href{https://doi.org/\vldbdoi}{doi:\vldbdoi}
\endgroup
\begingroup
\renewcommand\thefootnote{}\footnote{\noindent
This work is licensed under the Creative Commons BY-NC-ND 4.0 International License. Visit \url{https://creativecommons.org/licenses/by-nc-nd/4.0/} to view a copy of this license. For any use beyond those covered by this license, obtain permission by emailing \href{mailto:info@vldb.org}{info@vldb.org}. Copyright is held by the owner/author(s). Publication rights licensed to the VLDB Endowment. \\
\raggedright Proceedings of the VLDB Endowment, Vol. \vldbvolume, No. \vldbissue\ %
ISSN 2150-8097. \\
\href{https://doi.org/\vldbdoi}{doi:\vldbdoi} \\
}\addtocounter{footnote}{-1}\endgroup

\ifdefempty{\vldbavailabilityurl}{}{
\vspace{.3cm}
\begingroup\small\noindent\raggedright\textbf{PVLDB Artifact Availability:}\\
The source code, data, and/or other artifacts have been made available at \url{https://github.com/Luminous-wq/S3AND}.
\endgroup
}


\section{Introduction}
\label{sec:intro}

The \textit{subgraph similarity search} over graphs has been widely used as an important and fundamental tool for real-world applications, such as social network analysis~\cite{rai2023top}, knowledge graph discovery~\cite{song2018mining}, bioinformatics mining~\cite{GliozzoPGCV23}, and so on. Specifically, a subgraph similarity search query retrieves those subgraphs $g$ in a large-scale data graph $G$ that are similar to a given query graph pattern $q$.


Existing works on the subgraph similarity search used graph similarity metrics, such as \textit{graph edit distance}~\cite{gao2010survey,blumenthal2020comparing} and \textit{chi-square statistics}~\cite{dutta2017neighbor}, to measure the similarity between subgraphs $g$ and query graph $q$. While different graph similarity semantics are helpful for different real applications (e.g., with similar graph structures or statistics), in this paper, we propose a novel graph similarity measure, called \textit{aggregated neighbor difference} (AND), which is given by aggregating the neighbor differences of the matching vertices between subgraph $g$ and query graph $q$. Based on this AND semantic, we formulate a new problem, namely \textit{subgraph similarity search under aggregated neighbor difference semantics} (S$^3$AND), which obtains subgraphs $g \subseteq G$ that match with $q$ with low AND scores.




Below, we give a motivation example of our S$^3$AND problem in the application of collaboration social network analysis.

\setlength{\textfloatsep}{2pt}
\begin{figure}[t]
    \centering\vspace{2ex}
    \subfigure[collaboration social network $G$] {
        \scalebox{0.26}[0.26]{\includegraphics{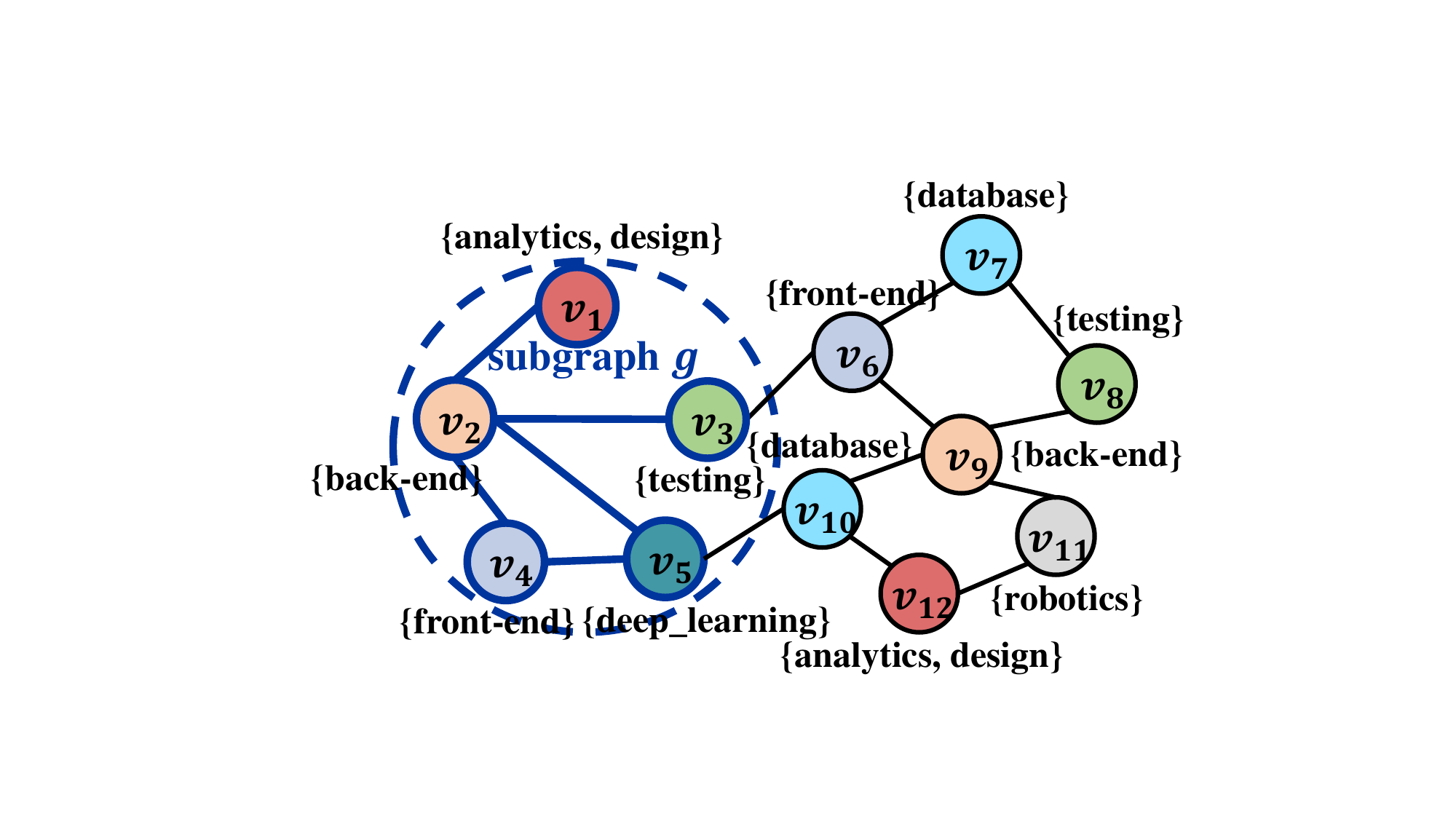}}
        \label{subfig:example_a}
    }\hspace{-1.9ex}%
    \subfigure[target team (query graph) $q$]{
    \scalebox{0.298}[0.298]{\includegraphics{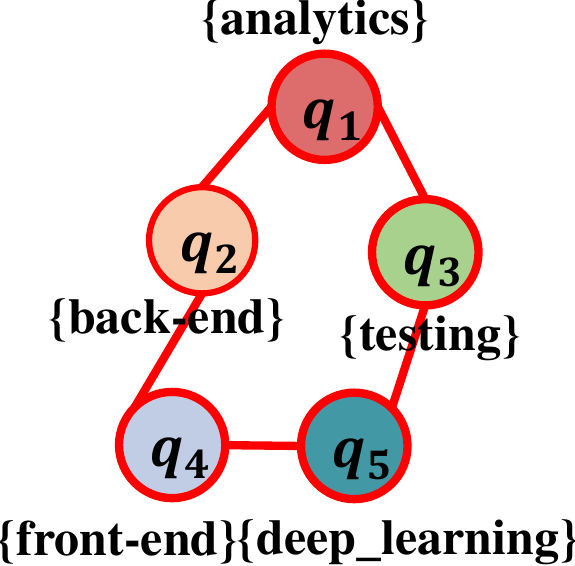}}
        \label{subfig:example_b}
    }
    \vspace{-2ex}
    \caption{An S$^3$AND example of the skilled team search.}
    \label{fig:AND_application}\vspace{1ex}
\end{figure}

\begin{example}
\label{example1}
\textbf{(The Skilled Team Search in Collaboration Social Network)} 
To accomplish a new project, a manager wants to recruit an experienced team that consists of members with relevant skills and previous collaboration experiences. Figure~\ref{subfig:example_a} shows a collaboration social network $G$, which consists of 12 user vertices, $v_1 \sim v_{12}$, each with a set of skill labels (e.g., user $v_2$ has the ``back-end'' development skills), and collaborative edges (each connecting two users, e.g., $v_2$ and $v_3$, indicating that they collaborated on some project before). 

Figure ~\ref{subfig:example_b} shows a target (query) graph pattern $q$, which represents a desirable team structure, specified by the project manager. In particular, each member $q_i$ ($1\leq i \leq 5$) in this experienced team must have certain skills (i.e., query keywords), for example, team member $q_2$ should have the ``back-end'' skill. Moreover, during the project period, team members are required to communicate frequently for accomplishing the project together. Thus, it is preferred that they have previous experience in project collaboration to reduce the one-to-one communication overhead. As an example, the ``back-end'' team member $q_2$ is expected to have collaborative experience (i.e., edge $e(q_2, q_4)$)  with a ``front-end'' member $q_4$ before. 



However, in practice, it is rare to find a perfect subgraph that exactly matches the query graph $q$. For example, in Figure ~\ref{fig:AND_application}, we cannot find a subgraph of $G$ that is structurally isomorphic to the query graph $q$. 

Alternatively, the manager can issue an S$^3$AND query to obtain a team from $G$ (e.g., the subgraph $g$ within the dashed circle of Figure~\ref{subfig:example_a}), whose members have the required skills to accomplish project tasks (i.e., data vertices in $g$ must contain the required keywords in query vertices), but follow some relaxed constraints on the collaboration experience. As an example, in subgraph $g$ of Figure~\ref{subfig:example_a}, although user $v_1$ does not have experience working with user $v_3$ before (as required by edge $e(q_1, q_3)$ in query graph $q$), they can still build good collaborative relationships through the new project, however, with some extra costs (e.g., time delays and/or communication efforts, as they were not familiar with each other before). Similarly, compared with the target team $q$, collaborative edge $e(v_3, v_5)$ is also missing in $g$. Thus, although subgraph $g$ and query graph $q$ do not structurally match with each other, subgraph $g$ can still be a potential candidate team that follows strict skill constraints and meets the relaxed collaboration requirements (e.g., within some budget of extra collaboration costs, defined later as the aggregated neighbor differences (AND) in Section \ref{subsec:AND}). The S$^3$AND query can exactly help obtain such a team (subgraph) $g$ in $G$, satisfying the keyword set containment relationship between query and data vertices and with low collaboration/communication overheads (i.e., AND scores). \qquad $\blacksquare$

\end{example}



{\color{red}
}

The S$^3$AND problem has many other real applications. For example, in the Semantic Web applications~\cite{zheng2016semantic}, a SPARQL query can be considered as a query graph $q$ over a large knowledge graph $G$. Our S$^3$AND query can be used to return RDF subgraphs that follow the keyword constraints and have minor structural changes compared with $q$. 



Inspired by the examples above, our S$^3$AND problem considers novel \textit{aggregated neighbor difference} (AND) semantics for subgraph similarity search over a large data graph $G$. Efficient and effective S$^3$AND query answering is quite challenging, due to complex graph manipulations (e.g., graph structure/keyword checking and AND score calculation over a large-scale data graph). Therefore, in this paper, we will design a general framework for S$^3$AND query processing, which seamlessly integrates our proposed effective pruning strategies (with respect to keywords and AND scores) to reduce the problem search space, effective indexing mechanisms over pre-computed data from graph $G$, and efficient S$^3$AND query algorithm via the index traversal. 


Specifically, we make the following contributions in this paper.


\begin{enumerate}
    \item We formulate a novel problem, \textit{subgraph similarity search under aggregated neighbor difference semantics} (S$^3$AND) in Section \ref{sec:problem_def}, which is useful for real application scenarios.
    \item We propose a general framework for tackling our S$^3$AND problem efficiently and effectively in Section \ref{sec:framework}.
    \item We design two effective pruning strategies (w.r.t. constraints of keywords and aggregated neighbor differences) in Section \ref{sec:pruning} to filter out false alarms of candidate vertices/subgraphs and reduce the S$^3$AND search space.
    \item We devise an effective indexing mechanism to facilitate our proposed query algorithm for efficiently retrieving S$^3$AND query answers in Section \ref{sec:S3AND_query_processing}.
    \item We validate the effectiveness of our proposed pruning strategies and the efficiency of the S$^3$AND algorithm in Section \ref{sec:exper} through extensive experiments on real/synthetic graphs.
\end{enumerate}
Section \ref{sec:related_work} overviews previous works on subgraph matching and subgraph similarity search. Finally, Section \ref{sec:conclusions} concludes this paper.

\section{Problem Definition}

\label{sec:problem_def}

In this section, we give the definitions of the graph data model, neighbor difference semantics, and the \textit{subgraph similarity search under aggregated neighbor difference semantics} (S$^3$AND) problem.

\subsection{Graph Data Model}

We first provide the formal definition of a large-scale data graph $G$. 

\begin{definition}
    \textbf{(Data Graph, \bm{$G$})} A data graph $G$ is in the form of a triple $(V(G), E(G), \Phi(G))$, where $V(G)$ is a set of vertices, $v_i$, in graph $G$, each with a keyword set $v_i.W$, $E(G)$ represents a set of edges $e(v_i, v_j)$ (connecting two ending vertices $v_i$ and $v_j$), and $\Phi(G)$ is a mapping function: $V(G) \times V(G) \rightarrow E(G)$. 
    \label{def:data_graph}
\end{definition}



Examples of the data graph in Definition \ref{def:data_graph} include social networks \cite{Al-BaghdadiL20, rai2023top, subramani2023gradient}, bioinformatics networks \cite{GliozzoPGCV23}, financial transaction networks \cite{SongZK23}, and so on.

\subsection{Aggregated Neighbor Difference Semantics}
\label{subsec:AND}

\noindent {\bf The Vertex-to-Vertex Mapping, $M: V(q)\to V(g)$:} Consider a target (query) graph pattern $q$ and a subgraph $g$ of data graph $G$ with the same graph size, that is, $|V(g)| = |V(q)|$. We say that there is a vertex-to-vertex mapping, $M: V(q)\to V(g)$, between $q$ and $g$, if each vertex $q_j \in V(g)$ has a 1-to-1 mapping to a query vertex $q_j\in V(q)$, such that their keyword sets satisfy the condition that $q_j.W \subseteq v_i.W$.

\noindent {\bf The Vertex Subset Mapping Function, $\mu(\cdot)$:} Accordingly, we denote $\mu(\cdot)$ as a mapping function from any vertex subset, $V'(q)$, of $V(q)$ to its mapping subset, $V'(g)$, of $V(g)$ (via the vertex-to-vertex mapping $M$). That is, we have $\mu(V'(q)) = V'(g)$, where any vertex $q_j \in V'(q)$ is mapped to a vertex $M(q_j) = v_i\in V'(g)$.


\noindent {\bf Neighbor Difference Semantics, $ND(q_j, v_i)$:} Let $N(v_i)$ be a set of 1-hop neighbors of vertex $v_i \in V(g)$ in the subgraph $g$. Similarly, $N(q_j)$ is a set of $q_j$'s 1-hop neighbors in the query graph $q$.

Then, for each vertex pair $(v_i, q_j)$ between $g$ and $q$, their \textit{neighbor difference}, $ND(q_j, v_i)$, is defined as the number of (matching) 1-hop neighbors (or edges) that $v_i$ is missing, based on the target vertex $q_j$ (and its neighbors). Formally, we have the following definition of the neighbor difference semantics.


\begin{definition} \textbf{(Neighbor Difference, \bm{$ND(q_j, v_i)$})} Given a target vertex, $q_j$, of a query graph $q$, a vertex, $v_i$, of a subgraph $g$, and a mapping function $\mu(\cdot)$ from any subset of $V(q)$ to its corresponding subset of $V(g)$ (via vertex-to-vertex mapping $M$), their \textit{neighbor difference}, $ND(q_j, v_i)$, is given by:
    \begin{equation}
        ND(q_j, v_i)=|\mu(N(q_j)) - N(v_i)|,\label{eq:ND}
    \end{equation}
where $N(\cdot)$ is a set of 1-hop neighbor vertices,  ``$-$'' is a \textit{set difference} operator, and $|\cdot|$ is the cardinality of a set. 
\label{def:neighbor_diff}
\end{definition}

Intuitively, in Definition \ref{def:neighbor_diff}, for each vertex pair $(q_j,v_i)$ in $q$ and $g$, the neighbor difference, $ND(q_j, v_i)$, is given by the number of missing edges $e(v_i, v_j)$ with an ending vertex $v_i$ (when their corresponding edges $e(q_j, q_j)$ in the target query graph $q$ exist).



\begin{example}
    (Continue with Example~\ref{example1}). In the previous example of Figure~\ref{fig:AND_application}, compared with the query vertex $q_1$ in query graph $q$, the data vertex $v_1$ in subgraph $g$ has one missing edge between $v_1$ and $v_3$ (while edge $e(q_1, q_3)$ exists in $q$). Thus, we have the neighbor difference $ND(q_1, v_1) = 1$. Similarly, since vertex $v_2$ has 1-hop neighbors $v_1$ and $v_4$ (while edges $e(q_2, q_1)$ and $e(q_2, q_4)$ exist in $q$), we have $ND(q_2, v_2) = 0$. \qquad $\blacksquare$


\end{example}




\noindent {\bf The Aggregation Over Neighbor Differences, $AND(q,g)$:} Next, we consider the \textit{aggregated neighbor differences} (AND), $AND(q, g)$, for vertex pairs $(q_j, v_i)$ from query graph $q$ and subgraph $g$, respectively. Intuitively, $AND(q, g)$ outputs an aggregation over the numbers of missing edges $e(v_i, v_j)$ (or 1-hop neighbors $v_j$ in $N(v_i)$) for all vertices $v_i$ in $g$, according to the targeted query graph $q$ (i.e., edges $e(q_j, q_j)$ in $q$).


\begin{definition}
    \textbf{(Aggregated Neighbor Difference, \bm{$AND(q,g)$})} Given a query graph $q$, a subgraph $g$, and a 1-to-1 vertex mapping $M$ from $V(q)$ to $V(g)$ (note: $|V(q)|=|V(g)|$), the \textit{aggregated neighbor difference}, $AND(q,g)$, between $q$ and $g$ is defined as the aggregation over neighbor differences of all the matching vertex pairs $(q_j, v_i)$, i.e.,
    \begin{equation}
        AND(q,g)=f\left(\left\{ ND(q_j, v_i) | \forall (q_j, v_i), s.t. M(q_j) = v_i \right\}\right),
        \label{eq:AND}
    \end{equation}
    where $ND(q_j, v_i)$ is given by Eq.~(\ref{eq:ND}), and $f(S)$ is an aggregate function (e.g., MAX, AVG, or SUM) over a set $S$.
\label{def:AND}
\end{definition}


In Definition \ref{def:AND}, the aggregated neighbor difference, $AND(q,g)$, is given by the aggregation over neighbor differences $ND(q_j, v_i)$ of all the matching pairs $(q_j, v_i)$ between $q$ and $v_i$. The aggregation function $f(S)$ can have different semantics such as MAX, AVG, or SUM. In Example \ref{example1} (i.e., the skilled team search), the MAX aggregate function returns the \textit{maximum} possible collaboration effort (i.e., $ND(q_j, v_i)$) that team members $v_i$ need (due to no collaboration experience with other team members before). Similarly, AVG (or SUM) aggregate function obtains the extra collaboration cost each team member has to spend on average (or the total collaboration cost for the entire team).




\begin{figure}\vspace{-2ex}
    \centering
    \subfigure[query graph $q$ and its matching subgraph $g$] {
        \includegraphics[height=3.8cm]{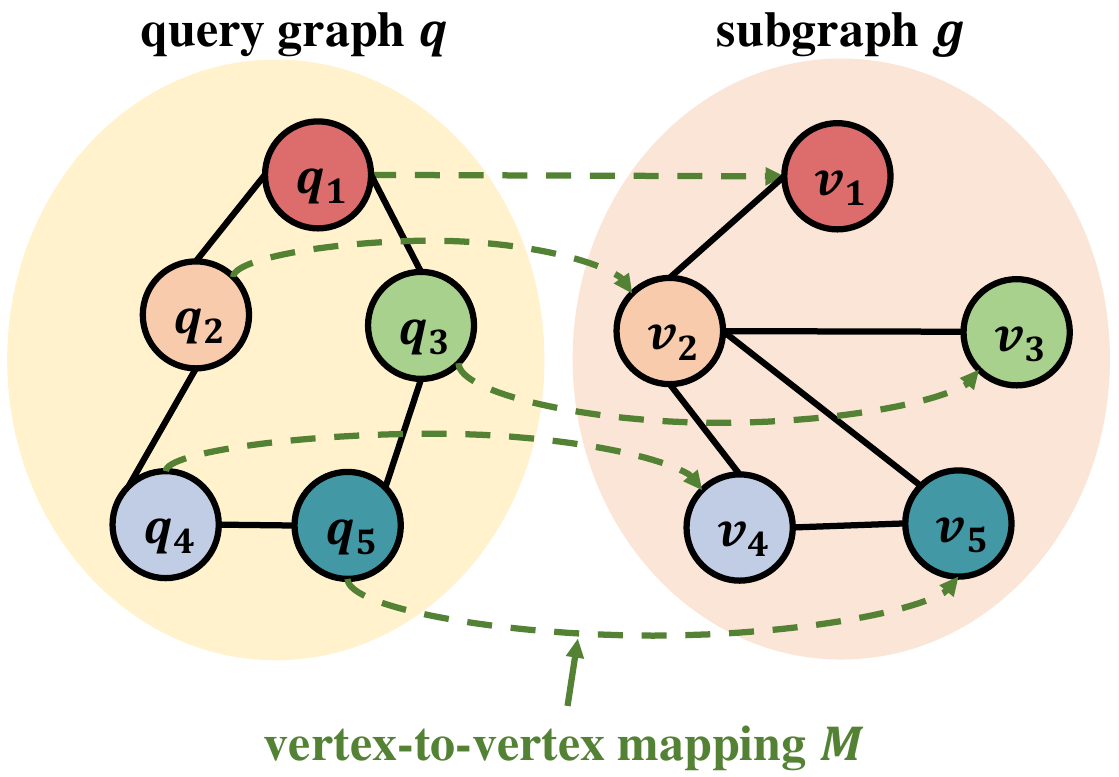}
        \label{subfig:fig2_a}
    }\\
    \subfigure[aggregated neighbor difference between $q$ and $g$]{
        \includegraphics[height=3cm]{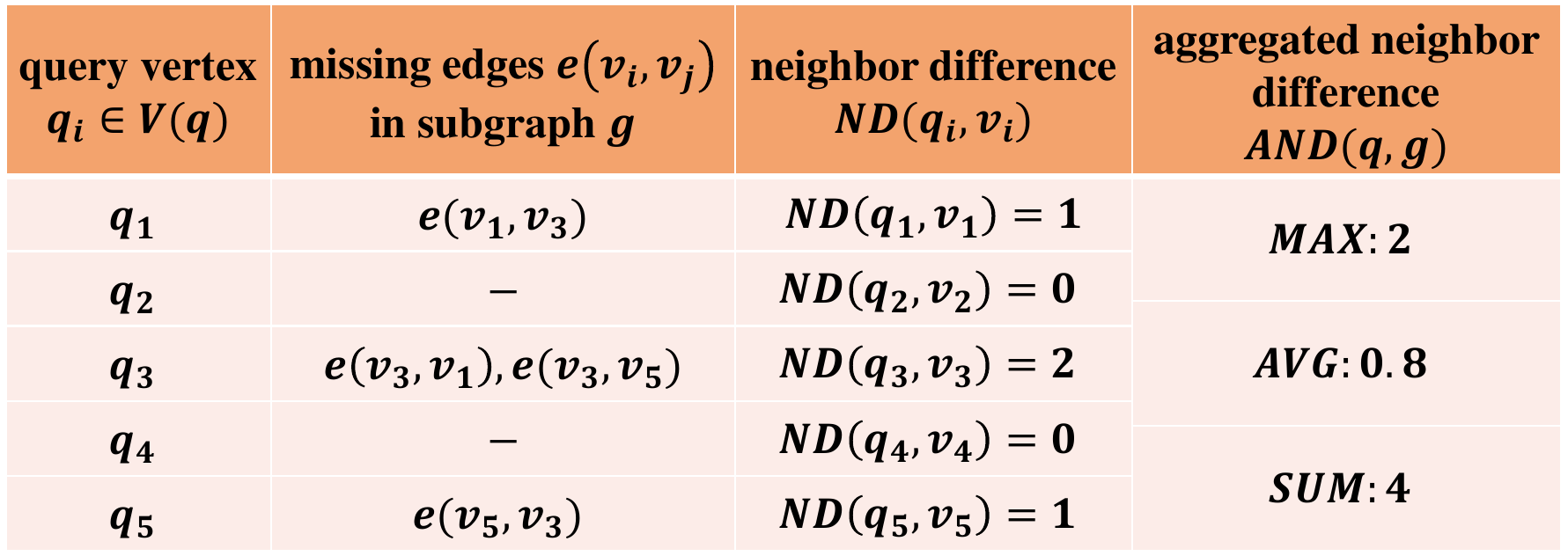}
        \label{subfig:fig_b}
    }
    \vspace{-3ex}
    \caption{An Example of the Aggregated Neighbor Difference.}
    \label{fig:AND}
\end{figure}

\begin{example}
Figure \ref{subfig:fig2_a} illustrates the vertex-to-vertex mapping $M$ from vertices of query graph $q$ to that of subgraph $g$ (as given in Example \ref{example1}), whereas Figure~\ref{subfig:fig_b} shows the neighbor differences, $ND(q_j, v_i)$, in subgraph $g$, for each query vertex $q_j$, and their aggregated neighbor differences $AND(q, g)$ under different semantics. 

Specifically, in Figure~\ref{subfig:fig_b}, we can see that vertex $v_1$ is not connected to $v_3$ in $g$, compared with the edge $e(q_1, q_3)$ in the target query graph $q$. Thus, we have $ND(q_1, v_1) = 1$. Similarly, we can compute neighbor differences for other vertex pairs $(q_j, v_i)$ (for $i\geq 2$). By aggregating these neighbor differences $ND(q_j, v_i)$ (for $1\leq i\leq 5$) with different aggregation functions $f(\cdot)$ (as given in Definition \ref{def:AND}), we can obtain their aggregated neighbor difference, $AND(q,g)$, that is, 2 ($=\max\{1, 0, 2, 0, 1\})$ for MAX, 0.8 $\big(= \frac{1+0+2+0+1}{5}\big)$ for AVG, and 4 ($=1+0+2+0+1$) for SUM. \qquad $\blacksquare$

\end{example}

Note that, since the AND score $AND(q, g)$ with the AVG aggregate is given by the AND score with the SUM aggregate divided by a constant (i.e., $|V(q)|$). In subsequent discussions, we will only focus on MAX and SUM aggregates for $f$ (note: AVG has the same S$^3$AND query answers as SUM).

\begin{table}[t!]
\caption{Symbols and Descriptions}
\vspace{-2ex}
\small
\label{tab1}
\begin{center}
\begin{tabular}{|l|p{6cm}|}
\hline
\textbf{Symbol}&{\textbf{Description}} \\
\hline\hline
$G$ & a data graph\\
\hline
$V(G)$ & a set of vertices $v_i$\\
\hline
$E(G)$ & a set of edges $e(u,v)$\\
\hline
$q$ & a query graph\\
\hline
$g$ & a subgraph of data graph $G$\\
\hline
$v_i.W$ & a keyword set of vertex $v_i$\\
\hline
$N(v_i)$ & a set of vertex $v_i$'s 1-hop neighbors\\
\hline
$ND(q_j, v_i)$ & the 1-hop neighbor difference between vertices $q_j$ and $v_i$\\
\hline
$AND(q, g)$ & the aggregation over all the neighbor differences of vertex pairs $(q_j, v_i)$ in graphs $q$ and $g$ \\
\hline
$\sigma$ & a threshold for the aggregated neighbor difference\\
\hline
\end{tabular}
\end{center}\vspace{2ex}
\end{table}

\vspace{-1ex}
\subsection{The S$^3$AND Problem Definition}
\label{sec:S3AND}
In this subsection, we formulate the \textit{subgraph similarity search problem under the aggregated neighbor difference semantics} (S$^3$AND).

\begin{definition}
    \label{def:S3AND}
    \textbf{(Subgraph Similarity Search Under Aggregated Neighbor Difference Semantics, \bm{$S^3AND(G,q)$})} Given a data graph $G$, a query graph $q$, a vertex-to-vertex mapping $M: V(q)\to V(g)$, and an aggregation threshold $\sigma$, a \textit{subgraph similarity search under the aggregated neighbor difference semantics} (S$^3$AND) retrieves connected subgraphs $g$ of $G$, such that:
    \begin{itemize}
        \item {\bf (Equal Subgraph Size)} $|V(q)|=|V(g)|$;
        \item {\bf (Keyword Set Containment)} for the mapping vertices $q_j\in V(q)$ and $v_i\in V(g)$, it holds that $q_j.W\subseteq v_i.W$, and;
        \item {\bf (Aggregated Neighbor Difference)} the aggregated neighbor difference satisfies the condition that $AND(q,g) \leq \sigma$,
    \end{itemize}
where $AND(q,g)$ is given by Eq.~(\ref{eq:AND}).
\end{definition}


Intuitively, in Definition \ref{def:S3AND}, the S$^3$AND problem retrieves all the subgraphs $g$ that satisfy the AND constraints, with respect to the query graph $q$. In particular, there exists a 1-to-1 vertex mapping, $M$, from each subgraph $g$ to query graph $q$. Thus, they have equal graph size, that is, $|V(q)|=|V(g)|$. Moreover, for the mapping vertices $q_j$ and $v_i$ from graphs $q$ and $g$, respectively, their associated keyword sets satisfy the containment relationship, that is, $q_j.W\subseteq v_i.W$. Further, their aggregated neighbor difference $AND(q,g)$ should be as low as possible (i.e., $AND(q,g) \leq \sigma$). The three conditions above guarantee that the subgraphs $g$ can maximally match with the required target graph pattern $q$.

In Definition~\ref{def:S3AND}, we used the constraint of the \textit{Keyword Set Containment}. In practice, we may also consider other keyword matching constraints such as keyword embedding similarity, ontology similarity, and so on, and adapt our proposed techniques (e.g., pruning and indexing) to handle such keyword matching constraints. 
Moreover, the AND score, $AND(q, g)$, considers the (mis)matching of edges between query/data vertices and their 1-hop neighbors. As in Example \ref{example1}, the AND score implies the communication overhead between team members and their collaborators (i.e., 1-hop neighbors in collaboration networks).
We would like to leave interesting topics of considering variants of S$^3$AND query semantics (e.g., with different keyword matching or topological similarity options) as our future work.




\noindent {\bf Challenges:} A straightforward method to answer the S$^3$AND query is to enumerate all possible subgraphs $g$ in the data graph $G$, compute the aggregated neighbor difference $AND(q, g)$ between each subgraph $g$ and query graph $q$, and return all S$^3$AND query answers with $AND(q, g)$ lower than threshold $\sigma$. However, this straightforward method is rather inefficient, due to a large number of candidate subgraphs within large-scale data graph $G$ and high refinement costs (w.r.t. vertex mapping, keywords, and AND computations). Therefore, it is quite challenging to process S$^3$AND queries efficiently and effectively.



Table \ref{tab1} depicts the commonly used notations and their descriptions in this paper.

\section{The S$^3$AND Processing Framework}
\label{sec:framework}

Algorithm~\ref{alg:framework} illustrates  a general framework for S$^3$AND query answering in a large-scale data graph $G$. Figure~\ref{fig:framework} provides a visual workflow of the pseudo code in Algorithm~\ref{alg:framework}, which consists of two phases, \textit{offline pre-computation} (lines 1-3 of Algorithm~\ref{alg:framework}) and \textit{online S$^3$AND query processing phases} (lines 4-7 of Algorithm~\ref{alg:framework}).

Specifically, as illustrated in Figure~\ref{fig:framework}, in the \textit{offline pre-computation phase}, we offline pre-compute some auxiliary data, $v_i.Aux$, of each vertex $v_i$ in large-scale data graph $G$ (lines 1-2 of Algorithm~\ref{alg:framework}), and construct a tree index $\mathcal{I}$ over these pre-computed data $v_i.Aux$  to facilitate online query optimizations like pruning (line 3 of Algorithm~\ref{alg:framework}). 

In the \textit{online S$^3$AND query processing phase}, for each S$^3$AND query, we traverse the tree index $\mathcal{I}$ by applying our proposed pruning strategies (e.g., the keyword set and AND lower bound pruning) to retrieve candidate vertices w.r.t. query vertices $q_j$ in the query graph $q$ (lines 4-5 of Algorithm~\ref{alg:framework}).
Next, we assemble candidate vertices of query vertices $q_j$ and obtain candidate subgraphs $g$ (line 6 of Algorithm~\ref{alg:framework}). 
Finally, we refine candidate subgraphs $g$ and return a set, $S$, of actual S$^3$AND subgraph answers (line 7 of Algorithm~\ref{alg:framework}).




\nop{
Specifically, in the \textit{offline pre-computation phase}, we pre-compute some auxiliary data over large-scale data graph $G$ to facilitate online query optimizations (e.g., pruning), and construct an index over these pre-computed data (lines 1-3). 
That is, for each vertex $v_i \in V(G)$, we pre-compute auxiliary data $v_i.Aux$ (lines 1-2). Then, we build a tree index $\mathcal{I}$ over the pre-computed data $v_i.Aux$ from graph $G$ to speed up the online computation (line 3).

In the \textit{online S$^3$AND query processing phase}, for each S$^3$AND query, we traverse the tree index $\mathcal{I}$ by applying our proposed pruning strategies (e.g., the keyword set and AND lower bound pruning) to retrieve candidate vertices w.r.t. query vertices $q_j$ in the query graph $q$ (lines 4-5).
Next, we assemble candidate vertices of query vertices $q_j$ and obtain candidate subgraphs $g$ (line 6). 
Finally, we refine candidate subgraphs $g$ and return a set, $S$, of actual S$^3$AND subgraph answers (line 7).

}



\nop{
Figure~\ref{fig:framework} illustrates the workflow of S$^3$AND query processing. Intuitively, for a data graph $G$, we first perform offline pre-computation. Specifically, we pre-computed auxiliary data for each vertex $v_i \in V(G)$ and use this to construct a balanced tree index $\mathcal{I}$. 
Then, in the online S$^3$AND query processing phase, given a query graph $q$, we traverse the index and use pruning strategies to obtain the potential candidate subgraphs. Finally, we refine the candidate subgraphs to obtain the S$^3$AND query results $S$.
}


\section{Pruning Strategies}
\label{sec:pruning}

In this section, we present effective pruning strategies that reduce the problem search space during the online S$^3$AND query processing phase (lines 5-7 of Algorithm~\ref{alg:framework}).

\subsection{Keyword Set Pruning}
\label{subsec:keyword_pruning}


In Definition~\ref{def:S3AND}, the keyword set $v_i.W$ of each vertex $v_i$ in the S$^3$AND subgraph answer $g$ must be a superset of the keyword set $q_j.W$ for its corresponding query vertex $q_j$ in the query graph $q$. Based on this, we design an effective \textit{keyword set pruning} method to rule out candidate vertices that do not satisfy this keyword set constraint.

\begin{algorithm}[!t]
    \caption{\bf The $S^3AND$ Processing Framework}
    \label{alg:framework}
    \KwIn{
        \romannumeral1) a data graph $G$, 
        \romannumeral2) a query graph $q$, and
        \romannumeral3) an aggregated neighbor difference threshold $\sigma$
    }
    \KwOut{
        a set, $S$, of subgraphs $g$ matching with the query graph $q$ under AND semantics
    }
    
    \tcp{\bf offline pre-computation phase}

    \For{each $v_i \in V(G)$}{
    


        

        compute the auxiliary data $v_i.Aux$
        
    }

    \textcolor{black}{construct a tree index $\mathcal{I}$ over pre-computed aggregate data in graph $G$}

    \tcp{\bf online S$^3$AND query processing phase}

    \For{each S$^3$AND query}{

        
        \textcolor{black}{traverse the tree index $\mathcal{I}$ by applying the keyword set and AND lower bound pruning strategies to retrieve candidate vertices w.r.t. query vertices $q_j$ in the query graph $q$}

        \textcolor{black}{assemble candidate vertices of query vertices $q_j$ and obtain candidate subgraphs $g$}

        \textcolor{black}{refine candidate subgraphs $g$ and return a set, $S$, of actual $S^3AND$ subgraph answers}
        
        
    }
    
\end{algorithm}%
\begin{figure}[t]
    \centering\vspace{-2ex}
    \includegraphics[width=1.0\linewidth]{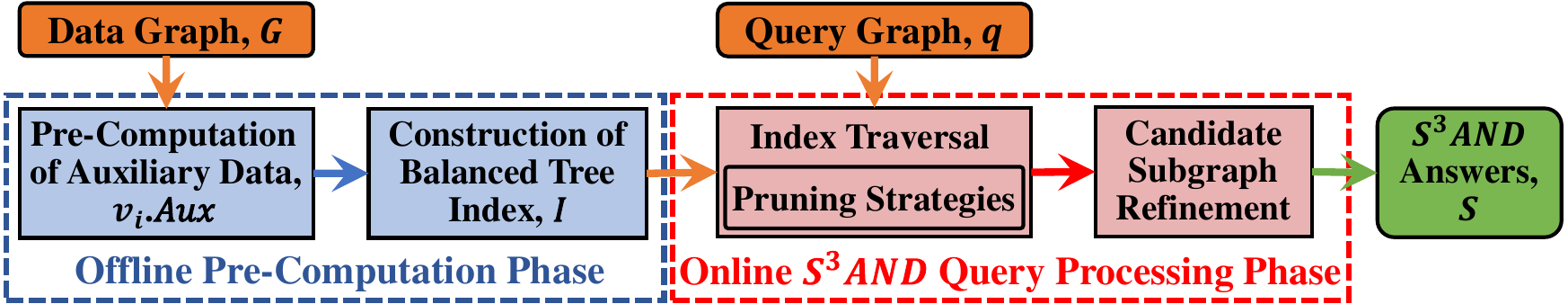}
    \vspace{-3ex}
    \caption{The workflow of S$^3$AND query processing.}
    \label{fig:framework}
\end{figure}


\begin{lemma}
    \label{lemma:keyword_pruning}
    {\bf (Keyword Set Pruning)} Given a candidate vertex $v_i$ and a query graph $q$, vertex $v_i$ can be safely pruned, if it holds that: $v_i.W \cap q_j.W \neq q_j.W$ (i.e., $q_j.W  \not\subseteq v_i.W$), for all $q_j \in V(q)$.
\end{lemma}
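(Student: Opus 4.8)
The plan is to prove the contrapositive. Rather than arguing directly that $v_i$ is prunable, I would show that \emph{any} vertex $v_i$ capable of appearing in a genuine S$^3$AND answer subgraph must violate the stated pruning condition, i.e., there must exist at least one query vertex $q_j \in V(q)$ with $q_j.W \subseteq v_i.W$. The lemma then follows by contraposition: if no such $q_j$ exists (the hypothesis of the lemma), $v_i$ cannot occur in any answer, so discarding it is harmless.

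The first step is to pin down what ``safely pruned'' means here: removing $v_i$ from the candidate set must never eliminate a legitimate S$^3$AND answer. Concretely, $v_i$ is safe to discard exactly when it cannot serve as the image $M(q_k) = v_i$ of any query vertex $q_k$ under any vertex-to-vertex mapping $M$ that produces a valid answer subgraph $g$ containing $v_i$. Establishing this equivalence up front is the conceptual crux, since the rest is a direct consequence of the problem's definition.

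Next, I would invoke the \textbf{Keyword Set Containment} requirement from Definition~\ref{def:S3AND}: in every answer subgraph $g$ with its 1-to-1 mapping $M$, each matched pair $(q_k, v_i)$ satisfying $M(q_k) = v_i$ obeys $q_k.W \subseteq v_i.W$. Assuming for contradiction that the lemma's hypothesis holds (so $q_j.W \not\subseteq v_i.W$ for \emph{all} $q_j \in V(q)$) while $v_i$ nonetheless participates in some answer $g$, we would have $v_i = M(q_k)$ for some query vertex $q_k$. The containment condition then forces $q_k.W \subseteq v_i.W$, which contradicts the hypothesis instantiated at $q_j = q_k$. Hence $v_i$ belongs to no answer subgraph and is safely pruned.

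The only delicate point — and the step I would state most carefully — is that keyword containment is a \emph{necessary} condition on every matched data vertex, so its failure across all query vertices simultaneously rules out every possible role $v_i$ could play in a valid mapping. Once this universality is made explicit, the contradiction is immediate. Notably, no structural or AND-score reasoning is needed: the keyword constraint alone is already violated, so $v_i$ is disqualified regardless of its neighborhood or edge configuration.
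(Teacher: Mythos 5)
Your proposal is correct and follows essentially the same route as the paper's proof: both arguments rest solely on the Keyword Set Containment condition of Definition~\ref{def:S3AND}, concluding that since no $q_j.W$ is contained in $v_i.W$, the vertex $v_i$ cannot be the image of any query vertex under a valid mapping and is therefore safe to discard. Your version merely makes explicit what the paper leaves implicit — the precise meaning of ``safely pruned'' and the contrapositive/contradiction framing — which is a presentational refinement rather than a different argument.
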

\begin{proof}
    For all query vertices $q_j \in V(q)$, if $v_i.W \cap q_j.W \neq q_j.W$ holds for a candidate vertex $v_i$, it indicates that query keyword sets $q_j.W$ are not subsets of $v_i.W$. Thus, according to the keyword set containment property in Definition \ref{def:S3AND}, we can infer that vertex $v_i$ cannot match with any query vertex $q_j$ in the query graph $q$. Hence, we can safely prune vertex $v_i$, which completes the proof. 
\end{proof}


\noindent{\bf Discussions on How to Implement the Keyword Set Pruning:} In order to enable the keyword set pruning (as given in Lemma~\ref{lemma:keyword_pruning}), we can offline pre-compute a bit vector, $v_i.BV$, of size $B$ for the keyword set $v_i.W$ in each vertex $v_i$. In particular, we first initialize the bit vector $v_i.BV$ with $\vv{0}$, and then hash each keyword in $v_i.W$ into a bit position in $v_i.BV$ (via a hashing function; setting the position to ``1''). The case of computing query bit vector $q_j.BV$ for query keyword set $q_j.W$ (w.r.t. query vertex $q_j$) is similar.

As a result, the pruning condition, $v_i.W \cap q_j.W \neq q_j.W$, in the keyword set pruning method can be written as: 
\begin{eqnarray}
    v_i.BV \, \land \, q_j.BV \neq q_j.BV,
    \label{eq:keyword_set_pruning_condition2}
\end{eqnarray}
where ``$\land$'' is a bit-AND operator between two bit vectors.

\noindent{\bf Enhancing the Pruning Power via Keyword Grouping:} 
Since the keyword domain of the real data may be large, the size, $B$, of bit vectors $v_i.BV$ is much smaller than the keyword domain size, which may lead to hashing conflicts (i.e., different keywords are hashed to the same bit position in $v_i.BV$). In order to enhance the pruning power of keyword set pruning, we propose a \textit{keyword grouping} optimization approach, which can reduce the probability of incurring false positives via keyword set pruning. 

Specifically, we divide the keyword domain into $m$ disjoint groups. For each vertex $v_i$, if a keyword in $v_i.W$ falls into the $x$-th keyword group, we will hash this keyword into the $x$-th bit vector  $v_i.BV^{(x)}$ (for $1 \leq x \leq m$) via a hashing function. This way, the pruning condition, $v_i.W \cap q_j.W \neq q_j.W$, in the keyword set pruning (Lemma \ref{lemma:keyword_pruning}) can be rewritten as:

\begin{equation}
    \bigvee_{x=1}^m \left(v_i.BV^{(x)} \: \bigwedge \: q_j.BV^{(x)} \neq q_j.BV^{(x)}\right), 
    \label{eq:keyword_set_pruning_condition3}
\end{equation}
where $v_i.BV^{(x)}$ and $q_j.BV^{(x)}$ are bit vectors with the hashed keywords from the $x$-th keyword group in $v_i.W$ and $q_j.W$, respectively.


\subsection{AND Lower Bound Pruning}
\label{subsec:AND_bound_pruning}

{\color{black}

According to Definition~\ref{def:S3AND}, the aggregated neighbor difference (AND) between a subgraph $g$ and a query graph $q$ must satisfy the AND constraint, that is, $AND(q,g) \leq \sigma$. Therefore, we present an \textit{AND lower bound pruning} method, which effectively filters out candidate subgraphs with high AND values below.


\begin{lemma}
    \label{lemma:AND_bound_pruning}
    {\bf (AND Lower Bound Pruning)} Given a candidate subgraph $g$, a query graph $q$, and an aggregated neighbor difference threshold $\sigma$, subgraph $g$ can be safely pruned, if it holds that $lb\_AND(q,g) > \sigma$, where $lb\_AND(q,g)$ is a lower bound of $AND(q,g)$.
\end{lemma}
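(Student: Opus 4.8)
The plan is to argue by unwinding the single piece of structure the statement actually gives us, namely the meaning of ``lower bound.'' The quantity $lb\_AND(q,g)$ is treated here as a black box: the lemma asserts nothing about how it is computed, only that it satisfies the defining property of a lower bound, i.e. $lb\_AND(q,g) \leq AND(q,g)$. So first I would fix this inequality as the sole structural fact to exploit, since there is nothing internal to $lb\_AND(q,g)$ to manipulate at this level of generality.

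Next I would chain the pruning hypothesis with that defining inequality via transitivity. Assuming $lb\_AND(q,g) > \sigma$, we immediately obtain $AND(q,g) \geq lb\_AND(q,g) > \sigma$, and therefore $AND(q,g) > \sigma$. This contradicts the third requirement of Definition~\ref{def:S3AND}, which demands $AND(q,g) \leq \sigma$ for any valid S$^3$AND answer. Hence $g$ cannot be a query answer, so discarding it removes no genuine result — which is exactly what ``safely pruned'' means. This completes the correctness argument in essentially one line.

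The main obstacle is therefore not proving this lemma, whose proof is a routine transitivity step, but recognizing that the lemma is vacuous in isolation: its usefulness depends entirely on an accompanying construction of a \emph{concrete, computable} $lb\_AND(q,g)$ that is simultaneously (i) a genuine lower bound, so that the prune remains safe, and (ii) nontrivially large, so that the prune is actually effective. I expect the real verification burden to fall on that later derivation — likely obtained by under-counting matched neighbors or by aggregating per-vertex minimum neighbor differences before the full mapping $M$ is resolved — where one must prove the bound never exceeds $AND(q,g)$. The present lemma merely packages the generic guarantee that \emph{any} such valid lower bound yields a safe filter.
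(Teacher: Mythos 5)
Your proof is correct and follows essentially the same route as the paper's: both invoke the defining inequality $AND(q,g) \geq lb\_AND(q,g)$, chain it with the hypothesis $lb\_AND(q,g) > \sigma$ by transitivity, and conclude from Definition~\ref{def:S3AND} that $g$ cannot be an answer, hence is safely pruned. Your closing observation that the real burden lies in the later construction of a concrete, valid $lb\_ND/lb\_AND$ bound matches how the paper itself proceeds immediately after the lemma.
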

\begin{proof}
Since $lb\_AND(q,g)$ is a lower bound of the aggregated neighbor difference $AND(q,g)$, we have $AND(q,g) \geq lb\_AND(q,g)$. From the lemma assumption that $lb\_AND(q,g) > \sigma$ holds, by the inequality transition, it holds that $AND(q,g) \geq lb\_AND(q,g) > \sigma$. Thus, from Definition~\ref{def:S3AND}, candidate subgraph $g$ cannot be the S$^3$AND answer and can be safely pruned, which completes the proof.
\end{proof}



}

{\color{black}
\noindent{\bf Discussions on How to Compute an AND Lower Bound,\\ $lb\_AND(q,g)$: } Based on Eq.~(\ref{eq:AND}), in order to compute a lower bound, $lb\_AND(q,g)$, of the AND score $AND(q,g)$, we only need to obtain a lower bound, $lb\_ND(q_j, v_i)$, of the neighbor difference $ND(q_j, v_i)$ (as given in Eq.~(\ref{eq:ND})) for each matching vertex pair $(q_j, v_i)$. This way, we have: 
\begin{equation}
    lb\_AND(q,g)=f\left(\left\{ lb\_ND(q_j, v_i) | \forall (q_j, v_i), s.t. M(q_j) = v_i \right\}\right).
    \label{eq:lb_AND}
\end{equation}

\underline{\it The Computation of the Neighbor Difference Lower Bound}\\ \underline{$lb\_ND(q_j, v_i)$.} To compute a lower bound $lb\_ND(q_j, v_i)$ of the neighbor difference $ND(q_j, v_i)$, we can rewrite the neighbor difference $ND(q_j, v_i)$ in Eq.~(\ref{eq:ND}) as follows: 
\begin{eqnarray}
ND(q_j, v_i)&=&|\mu(N(q_j)) - N(v_i)|\label{eq:ND2} \\
&=&|\mu(N(q_j))|-|\mu(N(q_j)) \cap N(v_i)|.  \notag
\end{eqnarray}

In Eq.~(\ref{eq:ND2}), the first term $|\mu(N(q_j))|$ is a constant during online S$^3$AND query processing (i.e., the number of vertex $q_j$'s neighbors in the query graph $q$). Thus, in order to calculate the ND lower bound $lb\_ND(q_j, v_i)$, we alternatively need to compute an upper bound of the second term in Eq.~(\ref{eq:ND2}) (i.e., $|\mu(N(q_j)) \cap N(v_i)|$). Since $N(v_i) \supseteq (\mu(N(q_j)) \cap N(v_i))$ holds, we can obtain its upper bound: $|N(v_i)| \geq |\mu(N(q_j)) \cap N(v_i)|$.

In other words, we have the ND lower bound below: 
\begin{eqnarray}
lb\_ND(q_j, v_i) = \max\{0, |\mu(N(q_j))| - |N(v_i)|\}.
\label{eq:lb_ND1}
\end{eqnarray}


\underline{\it The Computation of a Tighter Neighbor Difference Lower Bound}\\ \underline{$lb\_ND(q_j, v_i)$.} Note that, some neighbors, $q_l$, of query vertex $q_j$ may not match with that, $v_l$, of data vertex $v_i$ with respect to their keyword sets (i.e., $q_l.W \nsubseteq v_l.W$).
Therefore, $|N(v_i)|$ may not be a tight upper bound of $|\mu(N(q_j)) \cap N(v_i)|$, and in turn $lb\_ND(q_j, v_i)$ in Eq.~(\ref{eq:lb_ND1}) is not a tight neighbor difference lower bound.

Below, we will consider the keyword set matching between (neighbors of) vertices $v_i$ and $q_j$, and derive a tighter neighbor difference lower bound. 
Specifically, for each neighbor $q_l$ of query vertex $q_j$, if its keyword set $q_l.W$ is a subset of the union of keyword sets from $N(v_i)$ (i.e., $v_i$'s neighbors), we can count 1, for the upper bound of $|\mu(N(q_j)) \cap N(v_i)|$). Formally, we have this upper bound given by: 
\begin{eqnarray}
\sum_{q_l\in N(q_j)} \Phi\left(q_l.W \subseteq \cup_{\forall v_l \in N(v_i)} v_l.W\right) \geq |\mu(N(q_j)) \cap N(v_i)|,\notag
\end{eqnarray}
where $\Phi(\cdot)$ is an indicator function (i.e., $\Phi(z) = 1$, if $z$ is \textit{true}; $\Phi(z) = 0$, otherwise).

As a result, we can obtain a tighter ND lower bound below:
\begin{equation}
    lb\_ND(q_j,v_i) = |\mu(N(q_j))|- \hspace{-3ex}\sum_{q_l\in N(q_j)} \hspace{-3ex}\Phi\left(q_l.W \subseteq \cup_{\forall v_l \in N(v_i)} v_l.W\right).\label{eq:lb_ND2}
\end{equation}

To efficiently check the containment of two keyword sets in Eq.~(\ref{eq:lb_ND2}) (i.e., $q_l.W \subseteq \cup_{\forall v_l \in N(v_i)} v_l.W$), we can also use their keyword bit vectors to replace the parameter of the indicator function $\Phi(\cdot)$ in  Eq.~(\ref{eq:lb_ND2}), that is,
\begin{eqnarray}
    &&\hspace{-5ex}lb\_ND(q_j, v_i) = |\mu(N(q_j))| \label{eq:lb_ND3}\\
    &&- \hspace{-2ex}\sum_{q_l\in N(q_j)} \hspace{-1ex}\Phi\left(\bigwedge_{x=1}^{m} \left( q_l.BV^{(x)} \bigwedge \left(\bigvee_{\forall v_l \in N(v_i)} v_l.BV^{(x)}\right) = q_l.BV^{(x)}\right)\right),\hspace{-3ex}\notag
\end{eqnarray}
where $\land$ and $\lor$ are bit-AND and bit-OR operators between two bit vectors, respectively.

\noindent {\bf ND Lower Bound Pruning for Individual Vertices:} Note that, Lemma~\ref{lemma:AND_bound_pruning} uses AND lower bound, $lb\_AND(q, g)$, to prune the entire candidate subgraphs $g$ (which are however not available during the filtering phase). Therefore, in the sequel, we will discuss how to utilize ND lower bounds, $lb\_ND(q_j, v_i)$ (w.r.t. individual vertices $v_i$), to filter out false alarms of vertices $v_i$ (or retrieve candidate vertices $v_i$), for different aggregation functions $f(\cdot)$ (e.g., $MAX$ or $SUM$).


\underline{\it ND Lower Bound Pruning on Individual Vertices.} From Lemma~\ref{lemma:AND_bound_pruning} and Eq.~(\ref{eq:lb_AND}), we can derive that: a candidate vertex $v_i$ can be safely pruned (for either $MAX$ or $SUM$), if its ND lower bound $lb\_ND(q_j,v_i)$ is greater than the aggregate threshold $\sigma$. Formally, we have the corollary below.

\begin{corollary} {\bf (ND Lower Bound Pruning)}
Given a query vertex $q_j \in V(q)$ and an aggregate threshold $\sigma$, a vertex $v_i$ can be safely pruned, if it holds that  $lb\_ND(q_j, v_i) > \sigma$.
\label{coro:ND_pruning_MAX_SUM}
\end{corollary}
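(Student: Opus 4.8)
The plan is to reduce the corollary to the single-term domination of the aggregate $AND(q,g)$ by any one of its neighbor-difference components, combined with the fact that $lb\_ND(q_j,v_i)$ is a genuine lower bound of $ND(q_j,v_i)$. First I would fix an arbitrary candidate subgraph $g$ whose vertex-to-vertex mapping $M$ sends the query vertex $q_j$ to the data vertex $v_i$ (i.e. $M(q_j)=v_i$); if no such subgraph can ever qualify as an answer, then $v_i$ may be discarded as a candidate match for $q_j$. The neighbor difference $ND(q_j,v_i)$ is then one of the terms in the set aggregated by $f$ in Eq.~(\ref{eq:AND}), so the entire argument hinges on showing $AND(q,g)\geq ND(q_j,v_i)$.

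The key step is to verify this single-term domination for both admissible aggregate functions. For $f=MAX$ it is immediate, since the maximum of a set is at least any of its elements, giving $AND(q,g)=\max\{\cdots\}\geq ND(q_j,v_i)$. For $f=SUM$ the crucial observation is that every neighbor difference is non-negative: by Definition~\ref{def:neighbor_diff}, $ND(q_k,v_l)=|\mu(N(q_k))-N(v_l)|$ is the cardinality of a set, hence $\geq 0$. Consequently, dropping all the other (non-negative) terms from the sum can only decrease it, so again $AND(q,g)=\sum\{\cdots\}\geq ND(q_j,v_i)$. This non-negativity is the one place the proof genuinely uses structure rather than pure logic, and it is the step I would expect a careful referee to want spelled out.

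Next I would chain the two inequalities. Because $lb\_ND(q_j,v_i)$ was constructed as a lower bound of $ND(q_j,v_i)$ in Eqs.~(\ref{eq:lb_ND1})--(\ref{eq:lb_ND3}), we have $ND(q_j,v_i)\geq lb\_ND(q_j,v_i)$, and combining with the domination result yields $AND(q,g)\geq ND(q_j,v_i)\geq lb\_ND(q_j,v_i)$. Under the corollary's hypothesis $lb\_ND(q_j,v_i)>\sigma$, transitivity gives $AND(q,g)>\sigma$, so by Definition~\ref{def:S3AND} (the aggregated neighbor difference condition) the subgraph $g$ cannot be an S$^3$AND answer; equivalently this is an instance of Lemma~\ref{lemma:AND_bound_pruning} with $lb\_AND(q,g)\geq lb\_ND(q_j,v_i)$.

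Finally I would close the argument by noting that $g$ was an arbitrary subgraph realizing $M(q_j)=v_i$, so \emph{no} qualifying subgraph can map $q_j$ to $v_i$; therefore $v_i$ is safely prunable as a candidate for $q_j$, completing the proof. The main obstacle is not any difficult computation but the need to treat $MAX$ and $SUM$ uniformly through the single-term domination bound; once the non-negativity of $ND$ secures the $SUM$ case, the rest is a short transitivity chain.
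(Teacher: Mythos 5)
Your proof is correct and takes essentially the same approach as the paper: the paper derives this corollary from Lemma~\ref{lemma:AND_bound_pruning} and Eq.~(\ref{eq:lb_AND}) via exactly the single-term domination you establish (immediate for $MAX$, via non-negativity of the $ND$ terms for $SUM$), which is also the argument spelled out explicitly in the paper's proof of the index-level analogue (Lemma~\ref{lemma:index_lb_ND_pruning}). The only cosmetic difference is that you chain the inequalities through the true score $AND(q,g)$ directly ($AND(q,g)\geq ND(q_j,v_i)\geq lb\_ND(q_j,v_i)>\sigma$), whereas the paper routes them through $lb\_AND(q,g)\geq lb\_ND(q_j,v_i)>\sigma$ and then invokes Lemma~\ref{lemma:AND_bound_pruning}; the two orderings are equivalent.
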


\nop{

\underline{\it Candidate Vertex Retrieval for $f=SUM$.} In the case of $SUM$ aggregate, the AND score $AND(q, g)$ in Eq.~(\ref{eq:AND}) is given by the summation of ND scores $ND(q_j, v_i)$ (for all $q_j\in V(q)$). Thus, if $g$ and $q$ match with each other (satisfying the AND score constraint $AND(q, g)\leq \sigma$ in Definition \ref{def:S3AND}), according to the \textit{Pigeonhole Principle}~\cite{trybulec1990pigeon}, there must exist at least one query vertex $q_j\in V(q)$, such that $ND(q_j, v_i) \leq \sigma/|V(q)| = \sigma_{SUM}$ (otherwise, by contradiction, $g$ can be pruned by Lemma \ref{lemma:AND_bound_pruning}). Alternatively, ND lower bounds $lb\_ND(q_j, v_i)$ can be also used for retrieving candidate vertices $v_i$ satisfying $lb\_ND(q_j, v_i) \leq \sigma_{SUM}$. We have the ND lower bound pruning corollary below:




\begin{corollary} {\bf (Candidate Vertex Retrieval ($f=SUM$))}
Given a query vertex $q_j \in V(q)$ and a SUM aggregate threshold $\sigma_f= \sigma_{SUM} = \sigma/|V(q)|$, a vertex $v_i$ is a candidate that matches with $q_j$, if it holds that  
$lb\_ND(q_j, v_i) \leq \sigma_{SUM}$.

\label{coro:SUM}
\end{corollary}

}



}

\nop{

\bf [To Qi: for each neighbor $q_l$ of query vertex $q_j$, if its keyword set $q_l.W$ is a subset of $v_i$'s neighbors' keyword set (union), we can count 1; this is also an upper bound of the set intersection; we only need to offline maintain $v_i$'s neighbors' keyword set (union) or its bit vector! please rewrite this part]
}

\nop{
Since we obtain the set of 1-hop neighbors $N(v_i)$ of vertex $v_i$, we can compute the lower bound of AND via the union keyword of $N(v_i)$. From the Equation~\ref{eq:ND}, which is the base of AND, we can deduce that the neighbor difference $|\mu(N(q_j))-N(v_i)|$ is equivalent to $|\mu(N(q_j))|-|\mu(N(q_j)) \cap N(v_i)|$. In online search processing, the $|\mu(N(q_j))|$ is a fixed value for a query vertex $q_j$, so, the lower bound $lb\_AND$ is equivalent to the upper bound of $|\mu(N(q_j)) \cap N(v_i)|$ for each $ND(\cdot)$, i.e., the maximum number of possible matches between $q_j \in V(q)$ and $v_i \in V(g)$. Then, we can design a two-phase $lb\_AND$ computation:
\begin{enumerate}
    \item  Since we can easily obtain the number of $N(v_i)$, then $lb\_AND$ can be obtained by $f(\{|\mu(N(q_j))|-|N(v_i)|\})$. However, in this way, the pruning power of obtained $lb\_AND$ is weak because the coverage of keywords of $q$ and $g$ is not taken into account;
    \item  We add the clearing of keywords in the online phase to solve the problem of too small $lb\_AND$ caused by keywords coverage. Specifically, we remove those covered keywords, e.g., $f(\{|\mu(N(q_j))|-|N(v_i)| + |\mu(N(q_j)) \cap N(v_i)|\})$, and we can compute it by the associated bit vector: i.e., 
    $lb\_AND = f(\{|\mu(N(q_j))|-|N(v_i)|+ |\cup_{\forall v_n\in N(v_i)} v_n.W \cap \cup_{\forall q_j \in N(q_j)}q_j.W|$.
\end{enumerate}

}

\section{Offline Pre-Computation}
In this section, we discuss how to offline pre-compute data over a data graph $G$ to enable effective pruning as discussed in Section~\ref{sec:pruning}, and construct an index $\mathcal{I}$ over the pre-computed data. 

\subsection{Offline Pre-Computed Auxiliary Data}


To facilitate efficient online S$^3$AND computation, Algorithm~\ref{alg:offline} offline pre-computes relevant aggregation information for each vertex in graph $G$, which can be used for pruning candidate vertices/subgraphs during the S$^3$AND query processing.


\begin{algorithm}[!t]
    \caption{\bf Offline Pre-Computation of Auxiliary Data}
    \label{alg:offline}
    \KwIn{
        \romannumeral1) a data graph $G$, and
        \romannumeral2) the number, $m$, of keyword groups
    }
    \KwOut{
        the pre-computed auxiliary data $v_i.Aux$ for each vertex $v_i$
    }

    \For{each vertex $v_i \in V(G)$}{
        

        \tcp{keyword bit vectors}
        \For{keyword group $x = 1$ to $m$}{

            hash the keywords in the $x$-th keyword group of $v_i.W$ into a bit vector $v_i.BV^{(x)}$ of size $B$
        

        

            

        }    

        

        
    }

    \For{each vertex $v_i \in V(G)$}{

        \tcp{neighbor keyword bit vectors}

        initialize neighbor keyword bit vectors $v_i.NBV^{(x)}$  with $\vv{0}$ (for $1\leq x\leq m$)
        
        \For{each neighbor vertex $v_l \in N(v_i)$}{        
            \For{keyword group $x = 1$ to $m$}{
                $v_i.NBV^{(x)} = v_i.NBV^{(x)} \: \lor \: v_l.BV^{(x)}$
            
            }
            }

        \tcp{the number of distinct neighbor keywords}
        
        $v_i.nk=|\cup_{\forall v_l\in N(v_i)} v_l.W|$

        \tcp{obtain the auxiliary data structure $v_i.Aux$}
        $v_i.Aux = \big(v_i.BV^{(x)}, v_i.NBV^{(x)}, v_i.nk\big)$
    }
    \Return $v_i.Aux$
\end{algorithm}


Specifically, for each vertex $v_i \in V(G)$, we maintain an auxiliary data structure $v_i.Aux$ (lines 1-10). Since the domain size of keywords can be quite large, to improve the pruning power, we divide the keyword domain into $m$ disjoint groups to reduce the chance of false positives via keyword bit vectors. Then, for each vertex $v_i$,  we hash the keywords in $v_i.W$ that fall into the $x$-th keyword group (for $1 \leq x\leq m$), and obtain a keyword bit vertor $v_i.BV^{(x)}$ with size $B$ (lines 2-3). Next, for each vertex $v_i$, we first initialize neighbor keyword bit vectors $v_i.NBV^{(x)} \: (1 \leq x\leq m)$ with $\vv{0}$, and then conduct the bit-OR operation over bit vectors, $v_l.BV^{(x)}$, of all $v_i$'s neighbor vertices $v_l \in N(v_i)$ to compute neighbor keyword bit vectors $v_i.NBV^{(x)}$ (lines 4-8). We also count the number, $v_i.nk$, of distinct keywords from $v_i$'s neighbors in $N(v_i)$ (line 9). This way, we store $v_i.BV^{(x)}$, $v_i.NBV^{(x)}$, and $v_i.nk$ in the auxiliary data structure $v_i.Aux$ (line 10). Finally, we return all the pre-computed auxiliary data $v_i.Aux$ (line 11).


To summarize, $v_i.Aux$ contains the following information:
\begin{itemize}[noitemsep, topsep=0pt, partopsep=0pt, itemsep=1pt, parsep=1pt]
    \item {\bf $m$ keyword bit vectors, \boldsymbol{$v_i.BV^{(x)}$} (for \boldsymbol{$1 \leq x \leq m$}), of size \boldsymbol{$B$},} which is obtained by using a hashing function $f(w)$ to hash each keyword $w \in v_i.W$ of each group to an integer between $[0, B-1]$ and set the $f(w)$-th bit position to 1 (i.e., $v_i.BV^{(x)}[f(w)]=1$);
    \item {\bf $m$ neighbor keyword bit vectors, \boldsymbol{$v_i.NBV^{(x)}$} (for \boldsymbol{$1 \leq x \leq m$}),} which is computed by aggregating all keywords in $v_l.W$ from neighbor vertices $v_l \in N(v_i)$ (i.e., $v_i.NBV^{(x)}=\lor_{\forall v_l\in N(v_i)} v_l.BV^{(x)}$), and;
    \item {\bf the number, \boldsymbol{$v_i.nk$}, of distinct neighbor keywords,} which is given by counting the number of distinct keywords from neighbors $v_l\in N(v_i)$ of vertex $v_i$ (i.e., $v_i.nk=|\cup_{\forall v_l\in N(v_i)} v_l.W|$).
\end{itemize}


\vspace{-1ex}
\subsection{Indexing Mechanism}
\label{sec:index}

In this subsection, we show the offline construction of a tree index $\mathcal{I}$ on data graph $G$ with the pre-computed auxiliary data in detail to support online S$^3$AND query computation. 

\noindent
\textbf{The Data Structure of Index $\mathcal{I}$:}
We will construct a tree index $\mathcal{I}$ on the data graph $G$.
Specifically, the tree index $\mathcal{I}$ contains two types of nodes, leaf and non-leaf nodes.

\underline{\textit{Leaf Nodes:}} 
Each leaf node $\mathcal{N}$ contains a vertex set in the data graph $G$. Each vertex $v_i$ is associated with the following pre-computed data in $v_i.Aux$:

\begin{itemize}[noitemsep, topsep=1pt, partopsep=1pt, itemsep=1pt, parsep=1pt]
    \item $m$ keyword bit vectors, $v_i.BV^{(x)}$ (for $1 \leq x \leq m$);
    \item $m$ neighbor keyword bit vectors, $v_i.NBV^{(x)}$ (for $1 \leq x \leq m$), and;
    \item the number of distinct neighbor keywords, $v_i.nk$.
\end{itemize}

\underline{\textit{Non-Leaf Nodes:}}
Each non-leaf node $\mathcal{N}$ has multiple entries $\mathcal{N}_i$, each corresponding to a subgraph of $G$.
Each entry $\mathcal{N}_i$ is associated with the following aggregates:

\begin{itemize}[noitemsep, topsep=1pt, partopsep=1pt, itemsep=1pt, parsep=1pt]
    \item a pointer to a child node, 
    $\mathcal{N}_i.ptr$;
    \item $m$ aggregated keyword bit vectors, $\mathcal{N}_i.BV^{(x)}$\\ $=$ $\lor_{\forall v_l \in \mathcal{N}_i} v_l.BV^{(x)}$ (for $1 \leq x \leq m$);
    \item $m$ aggregated neighbor keyword bit vectors, 
    $\mathcal{N}_i.NBV^{(x)} = \lor_{\forall v_l \in \mathcal{N}_i} v_l.NBV^{(x)}$  (for $1 \leq x \leq m$), and;
    \item the maximum number of distinct neighbor keywords for vertices $v_l$ under entry $\mathcal{N}_i$, that is,
    $\mathcal{N}_i.nk = \max_{\forall v_l \in \mathcal{N}_i} v_l.nk$.
\end{itemize}

\nop{
\begin{algorithm}[!t]
    \caption{The Balanced Tree Index Construction}
    \label{alg:index2}
    \KwIn{
        \romannumeral1) the pre-computed auxiliary data $v_i.Aux$ over a data graph $G$,
        \romannumeral2) the number, $\tau$, of iterations, and
        \romannumeral3) the number, $\gamma$, of index fanout
    }
    \KwOut{a balanced tree index, $\mathcal{I}$, of data graph $G$}

    \tcp{top-down partitioning and aggregate}
    
    $P \leftarrow V(G)$

    aggregate the $P$ and associate it with pre-computed data as the root of $\mathcal{I}$
    
    \While{$|P| \geq \gamma$}{
        obtain the balanced partitions $Part = \{p_1,\cdots,p_\gamma\}$ by iteratively partitioning $P$ using the Cost Model

        \For{each partition $p_i \in Part$}{

            update $P \leftarrow p_i$ to iteratively partition $p_i$ 

            aggregate the $P$ and associate it with corresponding aggregation information as the current loop-level non-leaf node of $\mathcal{I}$
            
        }
    
    }


    
    
    \Return $\mathcal{I}$
\end{algorithm}

}

\noindent \textbf{Index Construction:} Algorithm~\ref{alg:index2} illustrates the pseudo code of constructing a balanced tree index $\mathcal{I}$ in a top-down manner.

Specifically, given the fanout, $fanout$, of index nodes, we first calculate the tree height $h = \lceil log_{fanout}(|V(G)|)\rceil$, and use the tree root $root(\mathcal{I})$ to represent the entire vertex set $V(G)$ (lines 1-2).
Then, we construct the index $\mathcal{I}$ in a top-down fashion (i.e., level $l$ from $h$ to $1$). 
In particular, for each node $\mathcal{N}^{(l)}$ on the $l$-th level of index $\mathcal{I}$, we invoke a cost-model-based partitioning function, {\sf CM\_Partitioning}$(\mathcal{N}^{(l)}, fanout)$, to obtain $fanout$ partitions $\mathcal{N}_i^{(l-1)}$ ($1\leq i \leq fanout$) of similar sizes as child nodes (lines 3-5).
After we partition each non-leaf node on level $l=1$, we obtain leaf nodes on level 0 and complete the construction of index $\mathcal{I}$. Finally, we return this balanced index $\mathcal{I}$ (line 6).


\begin{algorithm}[!t]
    \caption{\bf The Balanced Tree Index Construction}
    \label{alg:index2}
    \KwIn{
        \romannumeral1) the pre-computed auxiliary data $v_i.Aux$ over a data graph $G$, and
        \romannumeral2) the fanout, $fanout$, of the index node
    }
    \KwOut{a balanced tree index, $\mathcal{I}$, of data graph $G$}

    tree height $h = \lceil log_{fanout}(|V(G)|)\rceil$

    tree root $root(\mathcal{I}) = V(G)$

    \For{level $l = h$ to 1}{
        \For{each node $\mathcal{N}^{(l)}$ on the $l$-th level of index $\mathcal{I}$}{
            \tcp{cost-model-based top-down partitioning}
            invoke function {\bf CM\_Partitioning}$(\mathcal{N}^{(l)}, fanout)$ to obtain $fanout$ partitions $\mathcal{N}_i^{(l-1)}$ ($1\leq i \leq fanout$) of similar sizes as child nodes 
        }
    }
    

    
    
    \Return $\mathcal{I}$
    
\end{algorithm}

\noindent
\textbf{A Construction Example of Tree Index $\mathcal{I}$:} We use Example~\ref{ex:index} to clearly illustrate the index construction process.

\begin{figure}[t]
    \centering
    \includegraphics[width=0.7\linewidth]{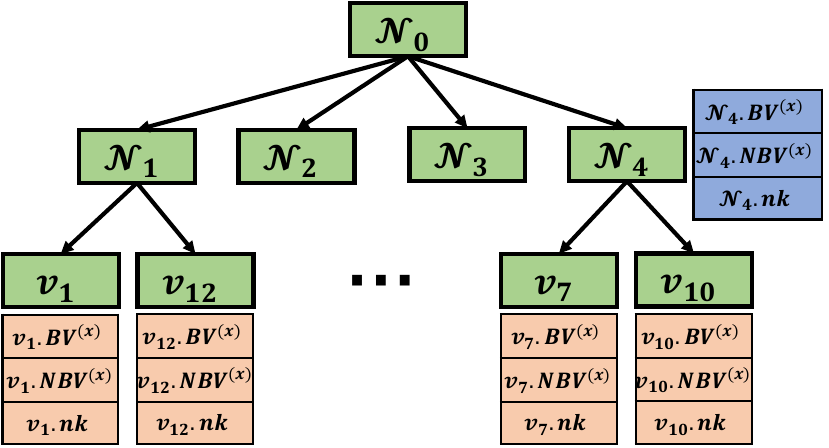}
    \caption{An example of constructing a tree index $\mathcal{I}$ based on Figure~\ref{fig:AND_application} ($n$=4, $\gamma$=0.2).}
    \label{fig:indexShow}
\end{figure}

\begin{example}
\label{ex:index}
    \textbf{(The Construction of a Tree Index, $ \mathcal{I}$)} Figure~\ref{fig:indexShow} illustrates a tree index $\mathcal{I}$ over a data graph $G$ in the example of Figure~\ref{fig:AND_application}, via cost-model-based graph partitioning (i.e., Algorithm \ref{alg:cm}), where $n=4$ and $\gamma=0.2$. 
    In particular, the tree root $\mathcal{N}_0$ (i.e., graph $G$) contains 4 leaf entries (i.e., 4 subgraph partitions, $\mathcal{N}_{1}$, $\mathcal{N}_{2}$, $\mathcal{N}_{3}$, and $\mathcal{N}_{4}$, respectively). Each entry (e.g., $\mathcal{N}_1$) is associated with a branch pointer $\mathcal{N}_1.ptr$ and some aggregates (i.e., $\mathcal{N}_1.BV^{(x)}$, $\mathcal{N}_1.NBV^{(x)}$, and $\mathcal{N}_1.nk$).
    
    Similarly, each leaf node (e.g., $\mathcal{N}_1$) contains a set of vertices (e.g., $v_1$ and $v_{12}$), where each vertex, say $v_1$, is associated with auxiliary data, for example, $v_1.Aux = (v_1.BV^{(x)}, v_1.NBV^{(x)}, v_1.nk)$. 
\end{example}


\noindent
\textbf{The Cost Model for Vertex Partitioning:}  In line 5 of Algorithm \ref{alg:index2}, we need to divide a set of vertices into $n$ partitions of similar sizes, by invoking function {\sf CM\_Partitioning $(\cdot, \cdot)$}. Since different partitioning strategies may result in different pruning effects, our goal is to propose a formal cost model to guide such partitioning. Intuitively, we would like to group those vertices with similar keyword bit vectors in the same partitions (achieving high pruning power), and dissimilar bit vectors in different partitions. We use a cost model, $Cost(Par)$, to evaluate the ``goodness'' (quality) of the partitioning strategy $Par$ as follows.
\setlength{\abovedisplayskip}{0pt}
\setlength{\belowdisplayskip}{0pt}
\begin{equation}
    Cost(Par) = \frac{\overbrace{\sum_{i=1}^{n} \sum_{v \in Par_{i}} \sum_{x=1}^{m} dist(v.BV^{(x)},c_i.BV^{(x)})}^{intra-partition \: distance \: \downarrow}}{\underbrace{\sum_{1\leq a< b\leq n} \sum_{x=1}^{m} dist(c_a.BV^{(x)},c_b.BV^{(x)}) +1}_{inter-partition \: distance \: \uparrow}},
    \label{eq:score}
\end{equation}

where $c_i$ is the center of partition $Par_i$ (i.e., mean of all bit vectors in this partition), and $dist(v.BV^{(x)},c_i.BV^{(x)})$ is given by the \textit{$L_1$-norm distance} \cite{malkauthekar2013analysis} between vectors $v.BV^{(x)}$ and $c_i.BV^{(x)}$ (note: in the special case of bit vectors, this is the \textit{Hamming distance} \cite{liu2011large}).

Here, we have:
\begin{eqnarray}
    dist(v.BV^{(x)},c_i.BV^{(x)}) =\sum_{e=1}^{B} \left|v.BV^{(x)}[e] - c_i.BV^{(x)}[e]\right|,
\end{eqnarray}
where $v.BV^{(x)}[e]$ is the $e$-th position in the vector $v.BV^{(x)}$, and $\Phi(\cdot)$ is an indicator function (i.e., $\Phi(z) = 1$, if $z$ is $true$; $\Phi(z) = 0$, otherwise).

Intuitively, the lower value of the cost $Cost(Par)$ indicates the good quality of the partitioning strategy (i.e., with small intra-partition distances and large inter-partition distances). Thus, we aim to find a good vertex partitioning approach that minimizes the cost function $Cost(Par)$.

\begin{algorithm}[h]
    \caption{ {\bf CM\_Partitioning}}
    \label{alg:cm}

    \KwIn{
        \romannumeral1) a set, $Par$, of vertices (or an index node) to be partitioned, 
        \romannumeral2) the number, $n$, of partitions (or the fanout of the index node),
        \romannumeral3) the number, $global\_iter$, of global iterations, and
        \romannumeral4) the number, $local\_iter$, of local iterations

    }
    \KwOut{a set, $global\_Par$, of $n$ partitions}

    $global\_cost = +\infty;$

    \For{$k = 1 \: to \: global\_iter$}{

        randomly select $n$ center vertices $C = \{c_1, c_2, \cdots, c_n\}$, and assign each vertex $v$ in $Par$ to a partition, $Par_i$, with the closest distance $\sum_{x=1}^{m} dist(v.BV^{(x)}, c_i.BV^{(x)})$
        
        
                obtain an initial partitioning strategy $local\_Par$ $=$ $\{Par_1, Par_2, \cdots, Par_n\}$ with cost $local\_cost= Cost(local\_Par)$ \quad// {\it Eq.~(\ref{eq:score})}


    \For{$j = 1 \: to \: local\_iter $}{

        \tcp{update $n$ center vertices}

        \For{$i = 1 \: to \: n$}{

            $c_i.BV^{(x)} = (\sum_{v \in Par_i} v.BV^{(x)})/|Par_i|$ (for $1 \leq x \leq m$)
        
        }



        \For{each vertex $v \in Par$}{

            assign $v$ to a partition $Par'_i$ ($|Par'_i| \leq (1+\gamma) \cdot |Par| / n  $) with the closest distance $\sum_{x=1}^{m} dist(v.BV^{(x)}, c_i.BV^{(x)})$
            



                    

                
                


        }
        


            
                
            

        

        


        

            obtain a new partitioning strategy $local\_Par'=\{Par'_1, Par'_2, \cdots, Par'_n\}$ with new cost $local\_cost' = Cost(local\_Par')$ \quad// {\it Eq.~(\ref{eq:score})}
    
            \If{$local\_cost' < local\_cost$}{
                
                $local\_Par$$\leftarrow$$local\_Par',$ $local\_cost$$\leftarrow$$local\_cost'$
                
            }

            
        
        }

    \If{$local\_cost < global\_cost$}{

        $global\_Par \leftarrow local\_Par$, $global\_cost \leftarrow local\_cost$
        
    }
    
}

    \Return $global\_Par$
\end{algorithm}

\nop{
\begin{figure}[t!]
    \centering
    \subfigure[Initial partitioning strategy ($n=2$, $m$=2)]{
            \includegraphics[width=\linewidth]{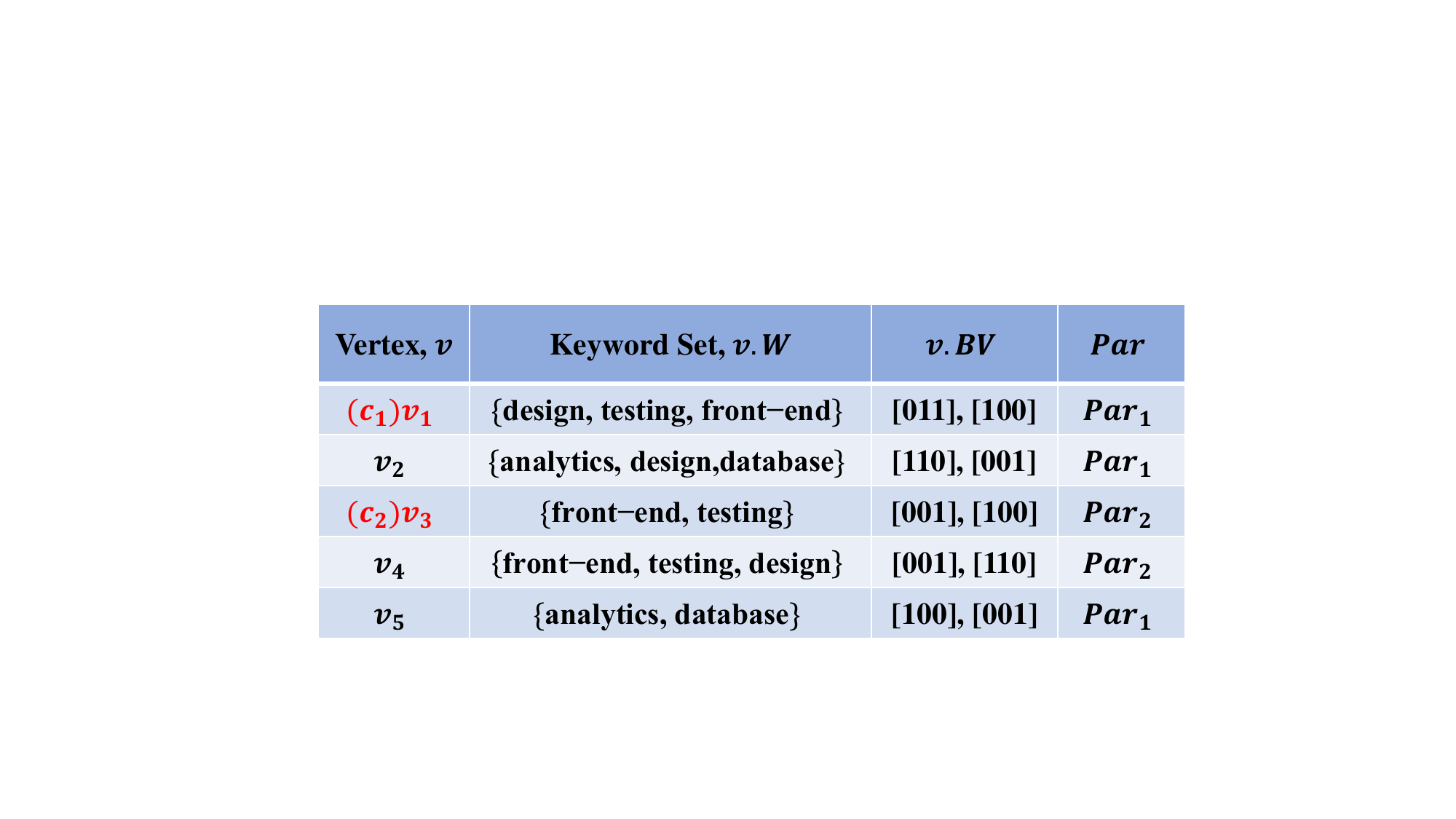}
            \label{subfig:index_a}
    }\\ 
    \subfigure[Cost-model-based partitioning processing]{
        \includegraphics[width=\linewidth]{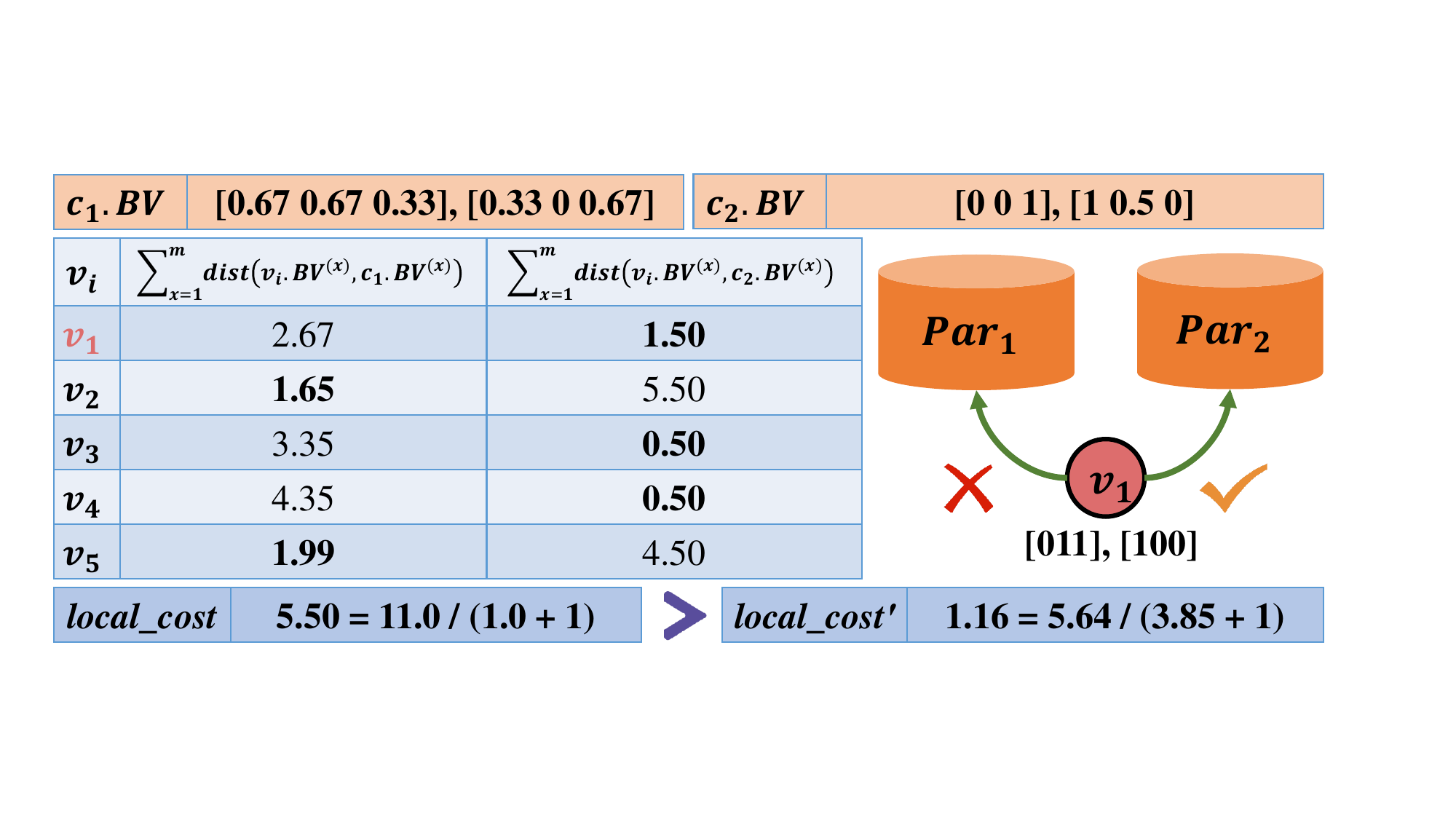}
        \label{subfig:index_b}
    }
    \\ 
    \subfigure[Top-down partitioning result to construct a tree index $\mathcal{I}$]{
        \includegraphics[width=\linewidth]{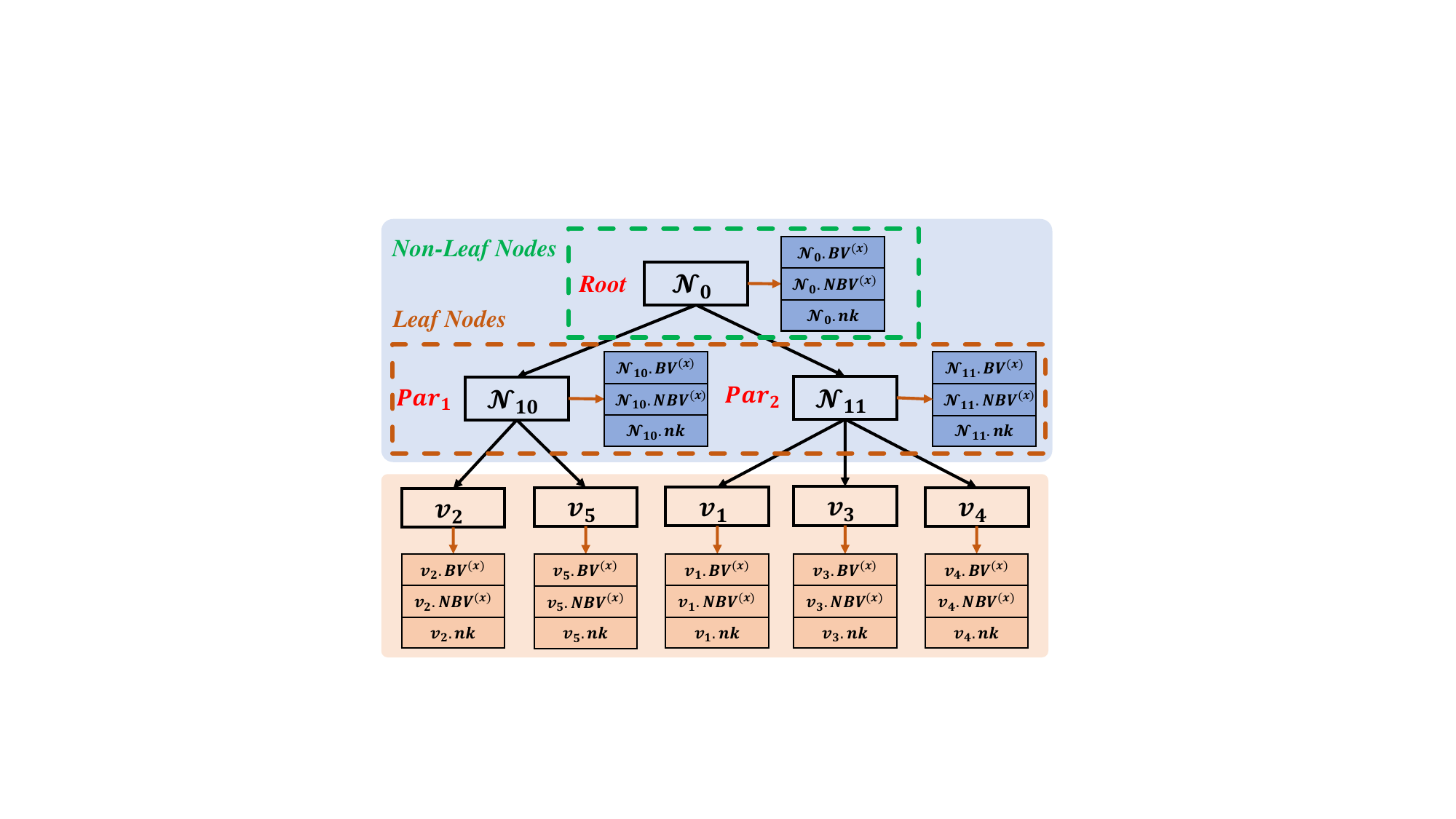}
        \label{subfig:index_c}
    }
    \vspace{-3ex}
    \caption{An example of constructing a tree index $\mathcal{I}$.}
    \label{fig:treeIndex}
\end{figure}
}


\noindent \textbf{Cost-Model-Based Partitioning Function, {\sf CM\_Partitioning $(\cdot, \cdot)$}:} 
Algorithm~\ref{alg:cm} illustrates 
the partitioning process based on our proposed cost model. Specifically, to avoid local optimality, we run our partitioning algorithm for $global\_iter$ iterations to achieve the lowest cost $global\_cost$ (lines 1-14). For each iteration, we start with a set of $n$ random initial center vertices $C = \{c_1, c_2, \cdots, c_n\}$ and assign each vertex $v$ in $Par$ to a partition $Par_i$ with the closest center vertex $c_i$ (for $1\leq i\leq n$; line 3). This way, we can obtain an initial partitioning strategy $local\_Par$ with the cost $local\_cost = Cost(local\_Par)$ (as given in Eq.~(\ref{eq:score}); line 4). 

Then, we will iteratively update center vertices $c_i$ and in turn their corresponding partitions $Par_i$, by minimizing intra-partition distances and maximizing inter-partition distances (in light of our cost model in Eq.~(\ref{eq:score}); lines 5-12). In particular, for $local\_iter$ iterations, we update the bit vectors, $c_i.BV^{(x)}$, of center vertices $c_i$ (for $1 \leq x \leq m$), by taking the mean of bit vectors for all vertices in each partition $Par_i$ (lines 6-7). We then assign vertices $v \in Par$ to a partition $Par_i'$ with the distance $\sum_{x=1}^{m} dist(v.BV^{(x)}, c_i.BV^{(x)})$ closest to the updated centers $c_i$, satisfying the constraint of the balanced partitions (i.e., $|Par'_i| \leq (1+\gamma) \cdot |Par| / n $), where $\gamma$ is a relaxation coefficient for the partition size (lines 8-9). As a result, we obtain a new partitioning strategy, $local\_Par'$, with cost $local\_cost' = Cost(local\_Par')$ (line 10). If this new partitioning $local\_Par'$ has lower cost (i.e., $local\_cost' < local\_cost$), we will accept this new partitioning strategy by letting $local\_Par = local\_Par'$ with the lower cost $local\_cost = local\_cost'$ (lines 11-12). After $local\_iter$ iterations, we will update $global\_Par$ and $global\_cost$ with the best partitioning strategy so far and its cost, respectively  (lines 13-14). Finally, we return the best partitioning strategy $global\_Par$ (after $global\_iter$ iterations) to obtain $n$ good-quality partitions (line 15).

\nop{

\begin{example}
    \textbf{(An Example of Constructing the Tree Index $ \mathcal{I}$)} Figure~\ref{fig:treeIndex} shows an example of constructing a tree index $\mathcal{I}$ with 2 leaf nodes. Figure~\ref{subfig:index_a} shows the initial partitioning strategy when we set the partition number $n$ as 2 and the group number $m$ as 2. And, the center vertices $c_1$ and $c_2$ are initialized to $v_1$ and $v_3$.
    Figure~\ref{subfig:index_b} shows the cost-model-based partitioning processing. Specifically, we first update the center bit vector (e.g., $c_1.BV$ is updated from ``[011], [100]'' to ``[0.67 0.67 0.33], [0.33 0 0.67]''). Then, we calculate the keyword distance from each vertex to each center to determine which partition each vertex should be partitioned into. We take $v_1$ as an example, the keyword distance between $v_1$ and $c_1$ is 2.67, but that to $c_2$ is 1.50. Therefore, $v_1$ is assigned to $Par_2'$. Moreover, the local cost of the initial partition strategy is 5.50, and the local cost of the new partition strategy is 1.16.
    This illustrates that the new partition strategy is closer to the optimal result, and we accept the new partition strategy and continue iterating.
    Figure~\ref{subfig:index_c} illustrates the process of constructing the index $\mathcal{I}$ from top to down using the partitioning results.
    In Figure~\ref{subfig:index_c}, each node contains $2m+1$ items (i.e., $m$ group of bit vectors, $m$ group of neighbor bit vectors, and a number of distinct neighbor keywords). Each leaf node contains a vertex set, and each vertex in the set is associated with pre-computed data.
\end{example}

}

\noindent
\textbf{Time Complexity Analysis:} 
For Algorithm~\ref{alg:offline}, for each vertex $v_i \in V(G)$, the time complexity of computing a keyword bit vector $v_i.BV$ is given by $O(|v_i.W|)$ (lines 1-3).
The computation of $v_i.NBV$ will cost $O(m\cdot deg)$, where $deg$ denotes the average degree of vertices in the data graph (lines 4-8).
The time complexity of computing the number, $v_i.nk$, of distinct neighbor keywords is given by $O(deg)$ (line 9).
Moreover, the cost of updating $v_i.Aux$ is $O(1)$ (line 10).
Thus, the time complexity of offline pre-computation is given by $O(|V(G)| \cdot deg\cdot m+\sum_{i=1}^{|V(G)|}|v_i.W|)$.

The index construction in Algorithm~\ref{alg:index2} includes local partitioning and cost calculation (as illustrated in Algorithm~\ref{alg:cm}), where the height of the tree index $\mathcal{I}$ is given by $\lceil log_{fanout} |V(G)| \rceil$. 
In particular, the time complexity of the cost-model-based graph partitioning for index nodes is given by $O(|V(G)|\cdot fanout \cdot m \cdot local\_iter \cdot global\_iter)$.
Moreover, the time complexity of the index construction is given by 
$O(|V(G)| \cdot fanout \cdot m \cdot local\_iter \cdot global\_iter \cdot \lceil log_{fanout} |V(G)| \rceil)$.

Overall, the offline pre-computation takes $O(|V(G)| \cdot m\cdot(deg + \lceil log_{fanout} |V(G)| \rceil \cdot fanout \cdot local\_iter \cdot global\_iter) + \sum_{i=1}^{|V(G)|}|v_i.W|)$.


\section{Online S$^3$AND Query Computation}
\label{sec:S3AND_query_processing}

In this section, we discuss in detail our online S$^3$AND query computation processing in Algorithm~\ref{alg:online}.

\subsection{Pruning for Index Nodes}
\label{sec: pruning for index}
In this subsection, we propose effective pruning methods on index nodes to prune index nodes with (a group of) vertex false alarms.

\noindent
\textbf{Index-Level Keyword Set Pruning:} 
If all vertices under an index entry $\mathcal{N}_i$ do not contain some keyword in $q_j.W$ for a query vertex $q_j\in V(q)$, then index entry $\mathcal{N}_i$ can be pruned w.r.t. this query vertex $q_j$ (i.e., $\mathcal{N}_i$ does not contain any vertices matching with $q_j$). 

Below, we provide an effective \textit{index-level keyword set pruning} method, using the $m$ aggregated keyword bit vectors $\mathcal{N}_i.BV^{(x)}$ stored in $\mathcal{N}_i$.

\begin{lemma}
    \label{lemma:index keyword set pruning}
    {\bf (Index-Level Keyword Set Pruning)} Given an index entry $\mathcal{N}_i$ and a query vertex $q_j\in V(q)$, index entry $\mathcal{N}_i$ can be pruned with respect to $q_j$, if $\bigvee_{x=1}^m \left(\mathcal{N}_i.BV^{(x)} \: \bigwedge \: q_j.BV^{(x)} \neq q_j.BV^{(x)}\right)$ holds.
\end{lemma}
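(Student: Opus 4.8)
The plan is to prove Lemma~\ref{lemma:index keyword set pruning} by reducing the index-level pruning condition to the vertex-level keyword set pruning already established in Lemma~\ref{lemma:keyword_pruning}, together with the aggregation property of the bit vectors $\mathcal{N}_i.BV^{(x)}$. Recall that each aggregated keyword bit vector is defined as $\mathcal{N}_i.BV^{(x)} = \bigvee_{\forall v_l \in \mathcal{N}_i} v_l.BV^{(x)}$. The key observation is a \emph{monotonicity} property of the bit-OR aggregation: since $\mathcal{N}_i.BV^{(x)}$ is the bitwise OR over all vertices $v_l$ under entry $\mathcal{N}_i$, every bit set in any individual $v_l.BV^{(x)}$ is also set in $\mathcal{N}_i.BV^{(x)}$; equivalently, $v_l.BV^{(x)} \, \land \, \mathcal{N}_i.BV^{(x)} = v_l.BV^{(x)}$ for all $v_l \in \mathcal{N}_i$.

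First I would establish the contrapositive direction, which is the natural way to argue a \emph{safe pruning} claim: I will show that if the entry $\mathcal{N}_i$ contains \emph{any} vertex $v_l$ that could match query vertex $q_j$ (i.e., $q_j.W \subseteq v_l.W$, so $v_l$ survives Lemma~\ref{lemma:keyword_pruning}), then the pruning condition $\bigvee_{x=1}^m \left(\mathcal{N}_i.BV^{(x)} \land q_j.BV^{(x)} \neq q_j.BV^{(x)}\right)$ must \emph{fail}. Concretely, for a surviving vertex $v_l$ we have, for every group $x$, the relation $v_l.BV^{(x)} \land q_j.BV^{(x)} = q_j.BV^{(x)}$ (matching the grouped keyword-containment encoding of Eq.~(\ref{eq:keyword_set_pruning_condition3})). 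Using the monotonicity property above, namely $\mathcal{N}_i.BV^{(x)} \supseteq v_l.BV^{(x)}$ at the bit level, the set bits of $q_j.BV^{(x)}$ that are covered by $v_l.BV^{(x)}$ are also covered by $\mathcal{N}_i.BV^{(x)}$, so $\mathcal{N}_i.BV^{(x)} \land q_j.BV^{(x)} = q_j.BV^{(x)}$ holds for every $x$. Hence each disjunct is false and the full disjunction is false.

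Taking the contrapositive, if the disjunction in the lemma \emph{does} hold, then no vertex $v_l \in \mathcal{N}_i$ satisfies $q_j.W \subseteq v_l.W$. By Definition~\ref{def:S3AND}, no such vertex can be a mapping target for $q_j$, so by Lemma~\ref{lemma:keyword_pruning} every vertex under $\mathcal{N}_i$ is individually prunable with respect to $q_j$. Therefore the entire entry $\mathcal{N}_i$ can be safely pruned with respect to $q_j$, which completes the argument. I expect the main obstacle to be stating the bit-level monotonicity cleanly, since the condition involves a mix of per-group bit-AND comparisons and a cross-group disjunction; I would make this rigorous by fixing an arbitrary bit position $e$ and arbitrary group $x$ and checking the implication $q_j.BV^{(x)}[e] = 1 \land v_l.BV^{(x)}[e] = 1 \Rightarrow \mathcal{N}_i.BV^{(x)}[e] = 1$, which follows immediately from the definition of bit-OR aggregation. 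One subtlety worth a remark is that, as with all bit-vector filters, this pruning is a one-sided filter (hashing conflicts may cause it to retain a non-matching entry), so the lemma only needs the \emph{no false negatives} direction, precisely the direction the contrapositive argument supplies.
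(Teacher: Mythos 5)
Your proposal is correct, and it follows the same overall route as the paper: reduce index-level pruning to the claim that no vertex under $\mathcal{N}_i$ can satisfy the keyword containment $q_j.W \subseteq v_l.W$, then invoke Definition~\ref{def:S3AND} (equivalently Lemma~\ref{lemma:keyword_pruning}) to conclude the entry is safely prunable. The difference is one of rigor rather than strategy: the paper's proof simply \emph{asserts} that the disjunction condition ``means that all vertices in $\mathcal{N}_i$ do not contain some keyword in $q_j.W$,'' whereas you actually prove this implication, by contraposition, using the bit-OR monotonicity of the aggregated vectors ($\mathcal{N}_i.BV^{(x)} = \bigvee_{v_l \in \mathcal{N}_i} v_l.BV^{(x)}$, so any bit set in $v_l.BV^{(x)}$ is set in $\mathcal{N}_i.BV^{(x)}$), checked at the level of an arbitrary bit position and group. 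This is precisely the step that makes the lemma true, and the paper leaves it implicit; your version also correctly flags the one-sided nature of the filter (hash collisions can only cause false positives, never unsafe pruning), which is the reason the lemma needs only this direction. In short, your argument is the paper's argument with the missing justification supplied.
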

\begin{proof}
    If $\bigvee_{x=1}^m \left(\mathcal{N}_i.BV^{(x)} \: \bigwedge \: q_j.BV^{(x)} \neq q_j.BV^{(x)}\right)$ holds, it means that all vertices in $\mathcal{N}_i$ do not contain some keyword in $q_j.W$. According to the constraint of the keyword set containment in Definition~\ref{def:S3AND}, index entry $\mathcal{N}_i$ cannot contain any candidate vertices matching with $q_j$. Thus, $\mathcal{N}_i$ can be safely pruned with respect to query vertex $q_j$.
\end{proof}


\noindent
\textbf{Index-Level ND-Lower-Bound-Based Pruning:} We also propose an index-level candidate node retrieval based on ND lower bounds (via Corollary~\ref{coro:ND_pruning_MAX_SUM}) below. 


\begin{lemma}
    \label{lemma:index_lb_ND_pruning}
    {\bf (Index-Level ND-Lower-Bound-Based Pruning)} Given an index entry $\mathcal{N}_i$, a query vertex $q_j$, and an aggregate threshold $\sigma$, index entry $\mathcal{N}_i$ can be safely pruned with respect to $q_j$, if $lb\_ND(q_j, \mathcal{N}_i) > \sigma$ holds, where we have $lb\_ND(q_j, \mathcal{N}_i)$  $=\min_{\forall v \in \mathcal{N}_i}$ $\{lb\_ND(q_j, v)\}$.
\end{lemma}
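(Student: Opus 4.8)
The plan is to reduce the correctness of this index-level pruning rule to the per-vertex pruning already justified in Corollary~\ref{coro:ND_pruning_MAX_SUM}. Since the entry-level bound is \emph{defined} as $lb\_ND(q_j,\mathcal{N}_i)=\min_{\forall v\in\mathcal{N}_i}\{lb\_ND(q_j,v)\}$, the hypothesis $lb\_ND(q_j,\mathcal{N}_i)>\sigma$ is precisely the statement that the smallest per-vertex lower bound among all vertices stored under $\mathcal{N}_i$ already exceeds the threshold $\sigma$. The goal is then to show that this forces \emph{every} individual vertex under $\mathcal{N}_i$ to be prunable, so that discarding the whole entry loses no S$^3$AND answer.

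First I would fix an arbitrary vertex $v\in\mathcal{N}_i$ and observe, directly from the definition of the minimum, that $lb\_ND(q_j,v)\ge \min_{\forall v'\in\mathcal{N}_i}\{lb\_ND(q_j,v')\}=lb\_ND(q_j,\mathcal{N}_i)$. Chaining this with the lemma hypothesis gives $lb\_ND(q_j,v)\ge lb\_ND(q_j,\mathcal{N}_i)>\sigma$, i.e. $lb\_ND(q_j,v)>\sigma$. By Corollary~\ref{coro:ND_pruning_MAX_SUM}, vertex $v$ can be safely pruned with respect to $q_j$. Since $v$ was arbitrary, no vertex under entry $\mathcal{N}_i$ can match $q_j$, and therefore $\mathcal{N}_i$ contains no candidate for $q_j$; by the keyword/AND requirements of Definition~\ref{def:S3AND} the entry may be discarded w.r.t. $q_j$ without affecting completeness. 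This finishes the logical core of the argument.

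The step I expect to be the real obstacle is not the monotonicity argument above (which is essentially immediate) but justifying that $lb\_ND(q_j,\mathcal{N}_i)$ can be evaluated, or safely under-estimated, from the \emph{aggregated} node summaries $\mathcal{N}_i.NBV^{(x)}$ and $\mathcal{N}_i.nk$ rather than by enumerating the vertices in $\mathcal{N}_i$. For safety, the value actually computed at the entry must be no larger than the true minimum $\min_{\forall v\in\mathcal{N}_i}lb\_ND(q_j,v)$; otherwise a vertex with a genuinely small per-vertex bound could be wrongly pruned. I would establish this by noting that $\mathcal{N}_i.NBV^{(x)}=\bigvee_{\forall v\in\mathcal{N}_i}v.NBV^{(x)}$ is the bit-OR (union) of the neighbor-keyword bit vectors of all contained vertices, so using it in the containment test of Eq.~(\ref{eq:lb_ND3}) can only make more indicator terms $\Phi(\cdot)$ evaluate to $1$ than for any single $v$. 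This maximizes the subtracted sum and hence minimizes the resulting $lb\_ND$, guaranteeing that the aggregate-based quantity lower-bounds every per-vertex $lb\_ND(q_j,v)$ and in particular their minimum. With that monotonicity-of-union observation in place, the pruning rule is sound.
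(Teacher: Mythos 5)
Your proof is correct and takes essentially the same route as the paper: the paper likewise combines the definition of the minimum with the per-vertex bound, merely inlining the reasoning behind Corollary~\ref{coro:ND_pruning_MAX_SUM} as an explicit proof by contradiction (assuming a vertex $v \in \mathcal{N}_i$ lies in an answer subgraph $g$ and deriving $lb\_AND(q,g) > \sigma$ for $f=MAX$ or $f=SUM$) rather than citing the corollary directly, as you do. Your third paragraph---that evaluating the bound with the aggregated vectors $\mathcal{N}_i.NBV^{(x)}$ safely under-estimates $\min_{\forall v \in \mathcal{N}_i}\{lb\_ND(q_j,v)\}$ by monotonicity of the bit-OR union---is sound and practically relevant, but it addresses an implementation concern outside the lemma as stated, since the lemma defines $lb\_ND(q_j,\mathcal{N}_i)$ to be the exact minimum.
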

\begin{proof} (Proof by Contradiction) Assume that some vertex $v$ under index entry $\mathcal{N}_i$ is in an S$^3$AND subgraph answer $g$. Since it holds that $lb\_ND(q_j, \mathcal{N}_i)$ $=\min_{\forall v \in \mathcal{N}_i}\{lb\_ND(q_j, v)\}$, we have $lb\_ND(q_j, v) \geq lb\_ND(q_j, \mathcal{N}_i)$. Moreover, from the lemma assumption that $lb\_ND(q_j, \mathcal{N}_i) > \sigma$, by the inequality transition, we have $lb\_ND(q_j, v) > \sigma$. From Eq.~(\ref{eq:lb_AND}), for either $f=MAX$ or $f=SUM$, we always have $lb\_AND(q, g) = f(f(lb\_ND(q_j, v') | \forall v'\in V(g)\textbackslash\{v\}), lb\_ND(q_j, v)) > \sigma$, which contradicts with our initial assumption that vertex $v$ is in the subgraph answer $g$. Therefore, we can prune all vertices under entry $\mathcal{N}_i$ with respect to $q_j$.
\end{proof}



\noindent
\textbf{An Example of the Pruning via the Tree Index $\mathcal{I}$:} We use Example~\ref{ex:pruning} to illustrate the index pruning process.

\begin{example}
\label{ex:pruning}
    \textbf{(Pruning Over Index $\mathcal{I}$)} We continue with the example in Figures~\ref{fig:AND_application} and ~\ref{fig:indexShow} to illustrate the index pruning in Lemmas \ref{lemma:index keyword set pruning} and \ref{lemma:index_lb_ND_pruning}. For the index-level keyword set pruning, in Figure~\ref{fig:AND_application}, the keyword sets of vertices $v_7$ and $v_{10}$ have no intersection with that of any query vertex in query graph $q$. Since node $\mathcal{N}_{4}$ contains vertices $v_7$ and $v_{10}$, its aggregates $\mathcal{N}_{4}.BV^{(x)}$ satisfy the condition that: $\bigvee_{x=1}^m \left(\mathcal{N}_{4}.BV^{(x)} \: \bigwedge \: q_j.BV^{(x)} \neq q_j.BV^{(x)}\right)$ is true (assuming no conflicts in bit vectors), for each query vertex $q_j \in V(q)$. Thus, we can safely prune entry $\mathcal{N}_{4}$ in root $\mathcal{N}_0$ without accessing this branch.

    For the index-level ND-lower-bound-based pruning, in Figure~\ref{fig:indexShow}, the ND lower bound $lb\_ND(q_1, \mathcal{N}_1)$ for query vertex $q_1$ and node $\mathcal{N}_1$ is given by $\min_{\forall v \in \mathcal{N}_1} \{lb\_ND(q_1, v)\} = \min \{lb\_ND(q_1, v_1),$ $lb\_ND(q_1, v_{12})$. From Eqs.~(\ref{eq:lb_ND1}) and ~\ref{eq:lb_ND3}, we have the tight ND lower bounds $lb\_ND(q_1, v_1)=1$  and $lb\_ND(q_1, v_{12}) = 2$. Thus, we obtain $lb\_ND(q_1, \mathcal{N}_1) = \min \{1,2\} = 1$. Based on index-level ND-lower-bound-based pruning (Lemma \ref{lemma:index_lb_ND_pruning}), if the ND lower bound, $lb\_ND(q_1,\mathcal{N}_1)$ ($=1$), is greater than the threshold $\sigma$,
    we can safely prune entry $\mathcal{N}_1$ in root $\mathcal{N}_0$ without accessing the leftmost branch through pointer $\mathcal{N}_1.ptr$.

\end{example}

\nop{
\begin{lemma}
    \label{lemma:index_lb_ND_SUM_retrieval}
    {\bf (Index-Level ND-Lower-Bound-Based Node Retrieval (SUM Aggregate))} Given an index entry $\mathcal{N}_i$, a query vertex $q_j$, and a SUM aggregate threshold $\sigma_f$  ($=\sigma_{SUM} = \sigma/|V(q)|$), index entry $\mathcal{N}_i$ should be accessed with respect to $q_j$, if $lb\_ND(q_j, \mathcal{N}_i) \leq \sigma_{SUM}$ holds, where $lb\_ND(q_j, \mathcal{N}_i) = \min_{\forall v_i \in \mathcal{N}_i}\{lb\_ND(q_j, v_i)\}$.
    
\end{lemma}

{\color{red} {\bf [please update the proof correspondingly]}
\color{blue}
\begin{proof}
    If exists a vertex $v_i$ in the entry $\mathcal{N}_i$ satisfies $lb\_ND(q_j, v_i) < \sigma_f$ with the aggregate function $f$ being SUM, then according to the pigeonhole principle, it is possible that $lb\_AND(q,s) \leq \sigma$, where $\sigma = \sigma_f \cdot |V(q)|$. Therefore, it is necessary to retrieve the entry $\mathcal{N} $ to prevent the loss of final results.
\end{proof}

}

}

\begin{algorithm}[t]
    \caption{\bf Online S$^3$AND Query Processing}
    \label{alg:online}
    \KwIn{
        \romannumeral1) a data graph $G$,
        \romannumeral2) a query graph $q$,
        \romannumeral3) an aggregate threshold $\sigma$, 
        \romannumeral4) an aggregation function $f(\cdot)$, and
        \romannumeral5) the index $\mathcal{I}$ over $G$ 
    }
    \KwOut{a set, $S$, of subgraphs in $G$ similar to $q$ under AND semantics}

    \tcp{initialization}
    $S \leftarrow \emptyset$;

    \For{each query vertex $q_j\in V(q)$}
    {
        obtain $m$ keyword bit vectors $q_j.BV^{(x)}$ (for $1 \leq x \leq m$)
        
        $q_j.V_{cand} \leftarrow \emptyset$;
    }



    initialize a maximum heap $\mathcal{H}$ accepting entries in the form $(\mathcal{N}, key)$

    insert entry $(root(\mathcal{I}), 0)$ into heap $\mathcal{H}$

    $root(\mathcal{I}).Q = V(q)$;

    \tcp{index traversal}

    \While{$\mathcal{H}$ is not empty}{
        
        $(\mathcal{N}, key)$ $\leftarrow$ $\mathcal{H}.pop()$ 

        \eIf{$\mathcal{N}$ is a leaf node}{

            
            \For{each vertex $v_i \in \mathcal{N}$}{
                \For{each query vertex $q_j \in \mathcal{N}.Q$}
                {
                    \If{$v_i$ cannot be pruned by Lemma~\ref{lemma:keyword_pruning} and Corollary~\ref{coro:ND_pruning_MAX_SUM}}{
    

                        $q_j.V_{cand}  \leftarrow  q_j.V_{cand} \cup \{v_i\}$
            
                    }

                }
            
            }
            
        }{

            \tcp{$\mathcal{N}$ is a non-leaf node}
            \For{each entry $\mathcal{N}_i \in \mathcal{N}$}{

                $\mathcal{N}_i.Q \leftarrow \emptyset$;
                
                \For{each query vertex $q_j \in \mathcal{N}.Q$}
                {

                    \If{$\mathcal{N}_i$ cannot be pruned (w.r.t., $q_j$) by Lemmas~\ref{lemma:index keyword set pruning} and ~\ref{lemma:index_lb_ND_pruning}}{
                

                        $\mathcal{N}_i.Q \leftarrow \mathcal{N}_i.Q \cup \{q_j\}$
    
                    }

                }

                \If{$\mathcal{N}_i.Q$ is not empty}{
                    insert entry $(\mathcal{N}_i, \mathcal{N}_i.nk)$ into heap $\mathcal{H}$
                }                  
            
            }
            
        }
    }

    %


    refine candidate vertex sets $q_j.V_{cand}$ by checking the keyword matching with $q_j.W$ in query vertices $q_j$ (for $1\leq j\leq |V(q)|$)

    \tcp{generate a query plan $Q$}
    obtain the first query vertex $q_j$ with the smallest candidate vertex set $|q_j.V_{cand}|$, and initialize a sorted list (query plan) $Q = \{q_j\}$ 

    \While{$Q \neq V(q)$}{
        for all query vertices $q_i\in Q$, find a neighbor $q_l \in N(q_i)$ with the minimum candidate set size $|q_l.V_{cand}|$
        
        append $q_l$ to the end of the sorted list $Q$
    
    }

     \tcp{candidate subgraph retrieval and refinement}

    $S \leftarrow$ {\bf Refinement}$_f$ $(G, q, Q, S, \emptyset, 0, \sigma)$;
    


    


    


   
    
    





    \Return $S$
\end{algorithm}

\vspace{-1ex}
\subsection{S$^3$AND Query Algorithm}
Algorithm \ref{alg:online} illustrates the pseudo code of our proposed S$^3$AND query answering algorithm, which traverses the index $\mathcal{I}$ to retrieve candidate vertices (via pruning strategies) and refines candidate subgraphs by combining candidate vertices to return actual S$^3$AND answers. 

{\color{black}
\vspace{0.5ex}\noindent
\textbf{Initialization:}
When an S$^3$AND query arrives, we first initialize an empty S$^3$AND query answer set $S$ (containing subgraphs that satisfy both keyword and AND constraints w.r.t. query graph $q$; line 1). Moreover, for each query vertex $q_j \in V(q)$, we hash keywords in $q_j.W$ into $m$ keyword bit vectors $q_j.BV^{(x)}$, and initialize an empty set $q_j.V_{cand}$ to record candidate vertices that match with $q_j$ (lines 2-4).
We also maintain a \textit{maximum heap} $\mathcal{H}$ for the index traversal, accepting entries in the form $(\mathcal{N}, key)$, where $\mathcal{N}$ is an index entry, and $key$ is a heap entry key (defined as the $\mathcal{N}.nk$; intuitively, node entries with large keys tend to contain vertices with lower ND values; line 5). Then, we add the tree root $root(\mathcal{I})$  (in the form $(root(\mathcal{I}), 0)$) to $\mathcal{H}$, and let its corresponding query vertex set $root(\mathcal{I}).Q$ be $V(q)$ (lines 6-7). 
}



\vspace{0.5ex}\noindent
\textbf{Index Traversal:} We next traverse the index $\mathcal{I}$, by utilizing the maximum heap $\mathcal{H}$.
Each time, we pop out an entry $(\mathcal{N}, key)$ with the maximum key from $\mathcal{H}$ (lines 8-9). 
When $\mathcal{N}$ is a leaf node, we will check each vertex $v_i \in \mathcal{N}$. That is, with respect to each query vertex $q_j \in \mathcal{N}.Q$, if vertex $v_i$ cannot be pruned by the \textit{Keyword Set} and \textit{ND Lower Bound Pruning} (Lemma~\ref{lemma:keyword_pruning} and Corollary~\ref{coro:ND_pruning_MAX_SUM}, respectively), then we add $v_i$ to candidate vertex set $q_j.V_{cand}$ of $q_j$ (lines 10-14).


When $\mathcal{N}$ is a non-leaf node, we consider each node entry $\mathcal{N}_i \in \mathcal{N}$ and check whether we need to access the children of entry $\mathcal{N}_i$ (lines 15-22). In particular, we first initialize an empty query set $\mathcal{N}_i.Q$, and then check if entry $\mathcal{N}_i$ can be pruned with respect to each query vertex $q_j \in \mathcal{N}.Q$ by Lemmas~\ref{lemma:index keyword set pruning} and \ref{lemma:index_lb_ND_pruning}. If $\mathcal{N}_i$ cannot be ruled out (w.r.t. $q_j$), we will add $q_j$ to $\mathcal{N}_i.Q$ (lines 17-20). In the case that $\mathcal{N}_i.Q$ is not empty, we insert entry $(\mathcal{N}_i, \mathcal{N}_i.nk)$ into heap $\mathcal{H}$ for later investigation (lines 21-22).


\vspace{0.5ex}\noindent
\textbf{Candidate Subgraph Retrieval and Refinement:} After the index traversal, we obtain a candidate vertex set $q_j.V_{cand}$ for each query vertex $q_j$. Since we used keyword bit vectors for pruning, there may still exist some false positives. We thus need to refine candidate vertices $v_i$ in $q_j.V_{cand}$, by comparing the actual keyword sets (i.e., checking $q_j.W \subseteq v_i.W$; line 23). 

Then, we will compute a query plan $Q$, which is a sorted list of query vertices in $q$ to guide the order of candidate vertex concatenation and obtain candidate subgraphs (lines 24-27). Specifically, we initialize the first vertex in $Q$ with a query vertex $q_j$ with the smallest candidate set size $|q_j.V_{cand}|$ (line 24). Next, each time we append a query vertex $q_l \in V(q)$ to the end of the sorted list $Q$, until $Q=V(q)$ holds, where $q_l$ is a neighbor of $q_i$ (for all $q_i\in Q$) with the minimum candidate set size $|q_l.V_{cand}|$ (lines 25-27). After that, we call function {\bf Refinement}$_f$ $(G,q,Q,S,\emptyset,0,\sigma)$ in Algorithm~\ref{alg:css}, and return final S$^3$AND subgraph answers in $S$ (lines 28-29).

\begin{algorithm}[t]
    \caption{\bf Refinement$_{f}$}
    \label{alg:css}
    \KwIn{
        \romannumeral1) a data graph $G$,
        \romannumeral2) a query graph $q$,
        \romannumeral3) a sorted candidate vertex list (query plan) $Q$,
        \romannumeral4) a vertex list, $M$, matching with query vertices in $Q$,
        \romannumeral5) a recursion depth $dep$, and
        \romannumeral6) an aggregate threshold $\sigma$
    }
    \KwOut{a set, $S$, of subgraphs that satisfy keyword and AND constraints for $q$}

    \eIf{$|Q|=dep$}{

        \If{subgraph $g$ with vertices $V(g) = M$ is connected and $AND(q, g) \leq \sigma$}{
        
            $S \leftarrow S \cup \{g\}$
            
        }
        
    
    }{



            

            \For{each candidate vertex $v \in Q[dep].V_{cand}$ and $v \notin M$}{


                \If{vertex $v$ is connected to some vertex $M[i]$ (for $0 \leq i < dep$) or some candidate vertex in $Q[i].V_{cand}$ (for $dep+1 \leq i <  |Q|$)}{

                    $M[dep] = v$

                    {\bf Refinement}$_{f}$ $(G, q, Q, S, M, dep+1, \sigma)$

                }


                    

                }

            
    
    }

    \Return $S$
    


    
    
\end{algorithm}

\vspace{0.5ex}\noindent \textbf{Discussions on How to Retrieve and Refine Candidate Subgraphs:} Algorithm~\ref{alg:css} illustrates the pseudo code of a recursive function to retrieve and refine candidate subgraphs. Specifically, in the base case that the recursive depth $dep$ is $|Q|$, we have all the matching pairs of vertices between $M$ and $Q$, and check the S$^3$AND constraints between subgraph $g$ (with vertices in $M$) and query graph $q$. If candidate subgraph $g$ is the S$^3$AND query answer, then we add $g$ to the answer set $S$ (lines 1-3). 

When the recursive depth $dep$ has not reached $|Q|$, we will consider each candidate vertex $v$ in $Q[dep].V_{cand}$ (not a duplicate in $M$; line 5). In particular, if candidate vertex $v$ is not connected to some vertex $M[i]$ ($0\leq i < dep$) in the current vertex list $M$ and any vertex in $Q[i].V_{cand}$ ($dep+1\leq i <|Q|$), then it implies that the resulting subgraph $g$ will not be connected and we can terminate the recursive call; otherwise, we can set the matching vertex $M[dep]$ to $v$, and recursively invoke function {\bf Refinement}$_f (G,q,Q,S,M,dep+1,\sigma)$ for the next depth $(dep+1)$ (lines 6-8). Finally, we return the S$^3$AND query answer set $S$ (line 9).


\nop{
For a candidate vertex, we use the Algorithm~\ref{alg:css} to obtain the candidate subgraphs that satisfy the query keyword requirements (i.e., $q.BV$ in Algorithm~\ref{alg:online}).
First, we initialize two attributes for each candidate vertex $v_i$: (1) $v_i.U$, the vertices set of the connected aggregate subgraph, and (2) $v_i.I$, the keywords index set of the connected aggregate subgraph. 
Where $v_i.U$ is used to determine whether the current aggregate subgraph can be returned as a candidate subgraph, and $v_i.I$ is used to determine whether the keywords of the aggregate subgraph satisfy the query keywords requirement.
We also initialize two empty sets: (1) $Vis$, which is used to store vertices that have been traversed; (2) $s_{cand}$, which is used to store candidate subgraphs (line 1).
Second, we maintain a structural-stack $\mathcal{S}$ to store the vertices to be checked while we push $v_i$ to $\mathcal{S}$ to start the search (line 2).
When $\mathcal{S}$ is not empty, we pop the top vertex $v_t$ of $\mathcal{S}$, and add $v_t$ to the $Vis$ (lines 3-5).
If the size of $v_t.U$ equals the size of $V(q)$, then we add $v_t.U$ to $s_{cand}$ as a candidate subgraph (lines 6-7).
Otherwise, we check whether the keywords of the current vertex $v_t$ satisfy the remaining query. 
Specifically, if $v_t$ belong to the candidate vertex set $V_{cand}$ without the index $v_t.I$, then we can add $v_t$ to $v_t.U$, and add the index of where $v_t$ in the $V_{cand}$ (i.e., $V_{cand}.index(v_t)$) to $v_t.I$ (lines 8-10).
Next, we look for potential candidate vertices from the neighbors of $v_t$, and this is a diffusion processing. 
For each vertex $v_l \in N(v_t)$, if $v_l$ has not been visited yet (i.e., $v_l \notin Vis$) and $v_l$ is a candidate vertex in $V_{cand}$, then we update the $v_l.U$ as $v_t.U$ and update the $v_l.I$ as $v_t.I$ (lines11-13).
Next, we add $v_l$ to the stack $\mathcal{S}$ for the subsequent next round of filtering (line 14).
Finally, until the stack $\mathcal{S}$ is empty, we return all candidate subgraphs $s_{cand}$ related to $v_i$ (line 15).
}

\vspace{0.5ex}\noindent
\textbf{Complexity Analysis: }
In Algorithm~\ref{alg:online}, the time complexity of the initialization is $O(\sum_{j=1}^{|V(q)|} |q_j.W|)$. Let $PP_i$ denotes the pruning power (i.e., the percentage of node entries that can be pruned) on the $i$-th level of the tree index $\mathcal{I}$, where $1 \leq i \leq \lceil log_{fanout} |V(G)| \rceil$.
Then, for the index traversal process, the number of visited nodes is $\sum_{i=1}^{\lceil log_{fanout} |V(G)| \rceil} fanout \cdot (1-PP_i)$.
The update cost of each candidate vertex set is $O(1)$. In the refinement process, the time complexity of generating the query plan $Q$ is $O(|V(q)|)$. Since we adopt a recursive strategy to handle candidate nodes in the query plan, with a recursion depth of $|Q|$ and an update cost of $ O(1)$ each time,  the worst-case time complexity is $O(\prod_{i=1}^{|Q|} |Q[i].V_{cand}|)$.
 
Therefore, the overall time complexity of online S$^3$AND query processing (i.e.,  Algorithm~\ref{alg:online}) is given by $O\big(\sum_{j=1}^{|V(q)|} |q_j.W| + \\ \sum_{i=1}^{\lceil log_{fanout} |V(G)| \rceil}  fanout \cdot (1-PP_i)+|V(q)| +\prod_{i=1}^{|Q|}|Q[i].V_{cand}|\big)$.

\nop{

\begin{algorithm}[!t]
    \caption{\bf Refinement$_{SUM}$}
    \label{alg:css_sum}
    \KwIn{
        \romannumeral1) a data graph $G$,
        \romannumeral2) a query graph $q$,
        \romannumeral3) a sorted candidate vertex list (query plan) $Q$,
        \romannumeral4) a vertex list, $M$, matching with query vertices in $Q$,
        \romannumeral5) a recursion depth $dep$, and
        \romannumeral6) an aggregate threshold $\sigma$
    }
    \KwOut{a set, $S$, of subgraphs that satisfy keyword and AND constraints for $q$}
    \eIf{$|Q|=dep$}{

        \If{subgraph $g$ with vertices $V(g) = M$ is connected and $AND(q, g) \leq \sigma$}{
        
            $S \leftarrow S \cup \{M\}$
            
        }
        
    
    }{

         \For{each candidate vertex $v \in Q[dep].V_{cand}^{(SUM)}$ obtained by Corollary~\ref{coro:SUM} and $v \notin M$}{

            \If{vertex $v$ is connected to some vertex $M[i]$ (for $0 \leq i < dep$) or some candidate vertex in $Q[i].V_{cand}$ (for $dep+1 \leq i <  |Q|$)}{

                    $M[dep] = v$

                    {\bf Refinement}$_{MAX}$ $(G, q, Q, S, M, dep+1, \sigma)$

                }

            \For{each neighbor $v_l \in N(v)$}{

                \If{$v_l$ is connected to some vertex $M[i]$ (for $0 \leq i < dep$) or some candidate vertex in $Q[i].V_{cand}$ (for $dep+1 \leq i <  |Q|$)}{

                    $M[dep] = v_l$

                    {\bf Refinement}$_{MAX}$ $(G, q, Q, S, M, dep+1, \sigma)$
                }
                
            }
            
         }
    
    }
    
\end{algorithm}

}

\section{Experimental Evaluation}
\label{sec:exper}

In this section, we evaluate the performance of our proposed S$^3$AND approach (i.e., Algorithm~\ref{alg:online}) on real/synthetic graphs.

\subsection{Experimental Settings}
\label{sec:experimental setting}
 
\textbf{Real-World Graph Data Sets:} 
We use 5 real-world graphs, $Facebook$ \cite{yang2023pane}, $PubMed$ \cite{meng2019co}, $Elliptic$ \cite{weber2019anti}, $TWeibo$ \cite{yang2023pane}, and $DBLPv14$ \cite{tang2008arnetminer}, whose statistics are depicted in Table~\ref{tab:datasets}, where ``Abbr.'' stands for the abbreviation of the name, and ``$|\sum|$'' stands for the keyword domain size.
$Facebook$ is a social network, where two users are connected if they are friends, and keywords of each user are obtained from one's profile.
$PubMed$ is a citation network of scientific publications on diabetes, where keywords are from lexical features.
$Elliptic$ is a Bitcoin transaction network, where each node represents a transaction, edges represent financial connections, and each node contains transaction attribute category keywords.
$TWeibo$ is also a social network, where each node represents a user, each edge represents a following relationship, and the keyword for each node is from the user profile.
$DBLPv14$ is a citation network extracted from DBLP, where each author's keywords were extracted from their relevant paper titles.
It is worth noting that if a vertex in the graph mentioned above does not have keyword attributes, we create a dummy keyword label ``0'' to indicate \textit{None} for this vertex.




\begin{table}[t]
\begin{center}
\caption{Statistics of the Tested Real-World Graph Data Sets.}
\label{tab:datasets}\vspace{-2ex}
\footnotesize
\begin{tabular}{|l||l||l||l||l|}
\hline
\textbf{Name} & \textbf{Abbr.} & $\bm{|V(G)|}$ & $\bm{|E(G)|}$ & $\bm{|\sum|}$ \\
\hline\hline
    $Facebook$~\cite{yang2023pane} & FB & 4,039 & 88,234 & 1,284 \\\hline
    $PubMed$~\cite{meng2019co} & PM & 19,717 & 44,338 & 501 \\\hline
    $Elliptic$~\cite{weber2019anti} & EL & 203,769 & 234,355 & 166 \\\hline
    $TWeibo$~\cite{yang2023pane} & TW & 2,320,895 & 9,840,066 & 1,658 \\\hline
    $DBLPv14$~\cite{tang2008arnetminer} & DB & 2,956,012 & 29,560,025 & 7,990,611 \\\hline
\end{tabular}
\end{center}
\end{table}

\begin{table}[t]
\begin{center}
\caption{Parameter Settings.}
\label{tab:parameters}\vspace{-2ex}
\resizebox{\linewidth}{!}{
\begin{tabular}{|p{5cm}||l|}
\hline
\textbf{Parameters}&\textbf{Values} \\
\hline\hline
    the threshold, $\sigma_{MAX}$ ($=$ $\sigma$), of MAX neighbor difference & {\bf1}, 2, 3, 4 \\\hline
    the threshold, $\sigma_{SUM}$ ($=$ $\sigma$), of SUM neighbor difference & 2, {\bf 3}, 4, 5 \\\hline
    the size, $|v_i.W|$, of keywords per vertex & 1, 2, {\bf 3}, 4, 5 \\\hline
    the size, $|\sum|$, of the keyword domain & 10, 20, {\bf50}, 80 \\\hline
    the size, $|V(q)|$, of query graph $q$ & 3, {\bf 5}, 8, 10 \\\hline
    the size, $|V(G)|$, of data graph $G$ & 10K, 25K, {\bf 50K}, 100K, 250K, 1M, 10M, 30M\\\hline    
\end{tabular}
}
\end{center}
\vspace{2ex}
\end{table}

\noindent
\textbf{Synthetic Graph Data Sets:}
We generate synthetic small-world graphs in the Newman-Watts-Strogatz model~\cite{watts1998collective}, using NetworkX \cite{hagberg2008exploring}, where parameters are depicted in Table~\ref{tab:parameters}.
By using different distributions of keywords in $\sum$ (i.e., \textit{Uniform}, \textit{Gaussian}, and \textit{Zipf}), we obtain three types of synthetic graphs: \textit{Syn-Uni}, \textit{Syn-Gau}, and \textit{Syn-Zipf}, respectively. 

\noindent
\textbf{Query Graph:} For each graph data set $G$, we randomly sample 100 connected subgraphs. For each extracted subgraph $g$, we remove each of its edges with probability 0.3 (as long as the subgraph $g$ is connected after the edge removal). As a result, we obtain 100 query graphs $q$ for the S$^3$AND query evaluation.


\noindent
\textbf{Comparison Methods:}
To our best knowledge, no prior work studied the S$^3$AND problem. Thus, we will compare our S$^3$AND approach (i.e., Algorithm~\ref{alg:online}) with three methods, a straightforward baseline method (named $Baseline$), $CSI\_GED$~\cite{gouda2016csi_ged}, and $MCSPLIT$~\cite{mccreesh2017partitioning}.
Specifically, for each vertex in a query graph $q$, the $Baseline$ method obtains a set of candidate vertices in the data graph that match with keywords in $q$. Then, we aggregate those subgraphs from $|V(q)|$ candidate vertex sets that meet the S$^3$AND query requirements as the returned results. 
$CSI\_GED$~\cite{gouda2016csi_ged} and $MCSPLIT$~\cite{mccreesh2017partitioning} first retrieve a set of candidate subgraphs that are similar to a given query graph $q$, where the graph similarity is measured by the \textit{Graph Edit Distance} (GED) or \textit{Maximum Common Subgraph} (MCS), respectively. After obtaining $CSI\_GED$ (or $MCSPLIT$) candidate subgraphs, we next refine/return subgraphs (in these candidates) so that they satisfy the keyword set constraints and have the same size as the query graph (note: the subgraph connectivity is relaxed and not required). In the experiments, we set the parameters to default values, and let the GED (or MCS) threshold be 1 for $CSI\_GED$ (or $MCSPLIT$) by default.

\nop{
To our best knowledge, no prior work studied the S$^3$AND problem. Thus, we will compare our S$^3$AND approach (i.e., Algorithm~\ref{alg:online}) with a straightforward baseline method, named $Baseline$.
Specifically, for each vertex in a query graph $q$, the $Baseline$ method obtains a set of candidate vertices in the data graph that match with keywords in $q$. Then, we aggregate those subgraphs from $|V(q)|$ candidate vertex sets that meet the S$^3$AND query requirements as the returned results. 
{\color{blue}
Furthermore, we compare prior subgraph similarity metric methods such as CSI\_GED~\cite{gouda2016csi_ged} and  MCSPLIT~\cite{mccreesh2017partitioning} over real data sets to illustrate the effectiveness of our approach and metric.
CSI\_GED is a method for efficiently calculating Graph Edit Distance (GED). We apply it to the S$^3$AND query algorithm. Specifically, we use the CSI\_GED to refine candidate subgraphs and return a set of subgraphs that satisfy the keyword set constraints and have the same size as the query graph.
MCSPLIT is an efficient method for calculating the Maximum Common Subgraph (MCS). Similarly, we use  MCSPLIT to refine candidate subgraphs in the S$^3$AND query algorithm and return a set of subgraphs that satisfy the keyword set constraints and have the same size as the query graph. 
In the experiments, we set the parameters to the default values, with the GED/MCS threshold score set to 1 for CSI\_GED and MCSPLIT.
}

}


\noindent
\textbf{Measures:}
We evaluate the S$^3$AND query performance, in terms of \textit{pruning power} and \textit{wall clock time}. The \textit{pruning power} is the percentage of candidate vertices pruned by our pruning strategies, whereas the \textit{wall clock time} is the average time cost to answer S$^3$AND queries. 
We report the average values of the evaluated metrics over 100 runs.

\noindent
\textbf{Parameter Settings:} 
Table~\ref{tab:parameters} shows the parameter settings, where default values are in bold. 
Each time we vary one parameter, while setting other parameters to default values.
By default, we set the keyword group number $m$ to 5, and the fanout, $fanout$, of index $\mathcal{I}$ to 16. 
For the index construction, $global\_iter$ and $local\_iter$ are set to 5 and 20, respectively.
We ran all the experiments on a PC with 
an AMD Ryzen Threadripper 3990X CPU, 256 GB of memory, and 128 threads.
All algorithms were implemented in Python and executed with Python 3.11 interpreter.

\begin{figure}[t]
    \centering\vspace{-1ex}
    \subfigure[real-world graphs]{
            \includegraphics[width=0.48\linewidth]{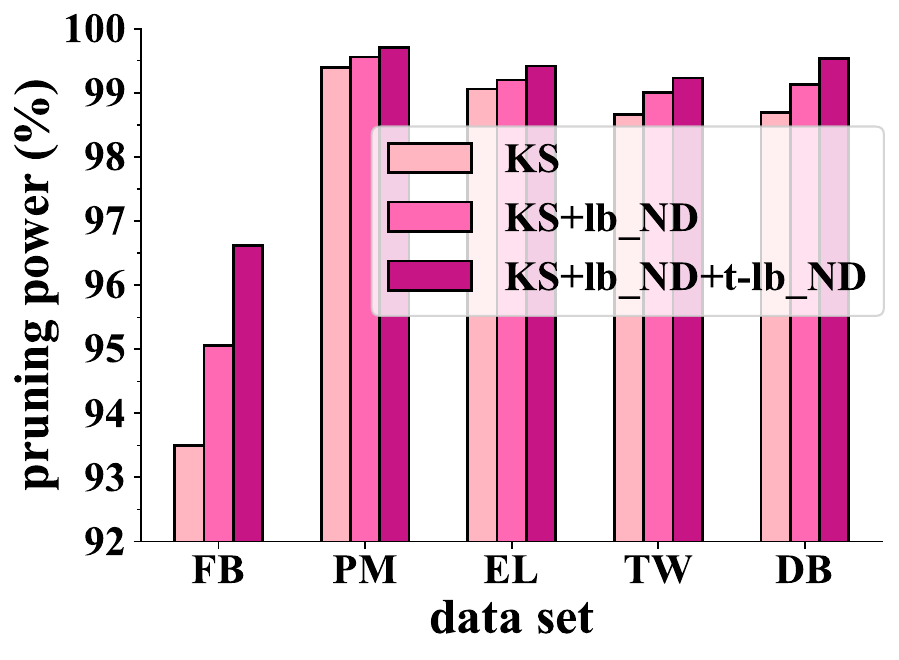}
            \label{subfig:S3AND_real_pruning}
    } \hspace{-0.2cm}
    \subfigure[synthetic graphs]{
        \includegraphics[width=0.48\linewidth]{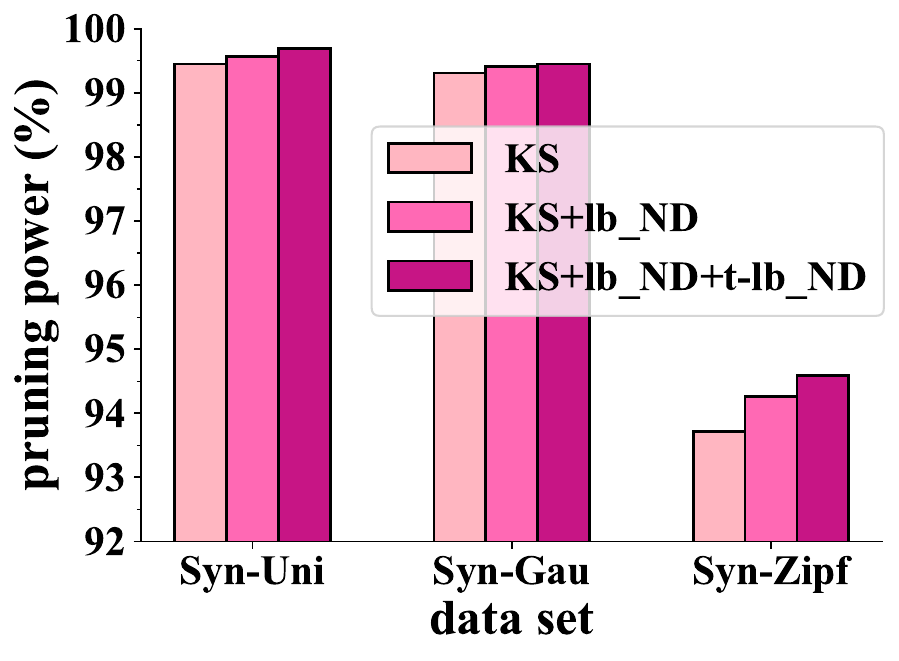}
        \label{subfig:S3AND_synthetic_pruning}
    }\vspace{-2ex}
    \caption{Effectiveness evaluation of S$^3$AND.}
    \label{fig:pruning}
    \vspace{-1ex}
\end{figure}


\begin{figure}[t]
    \centering
    \includegraphics[width=0.98\linewidth]{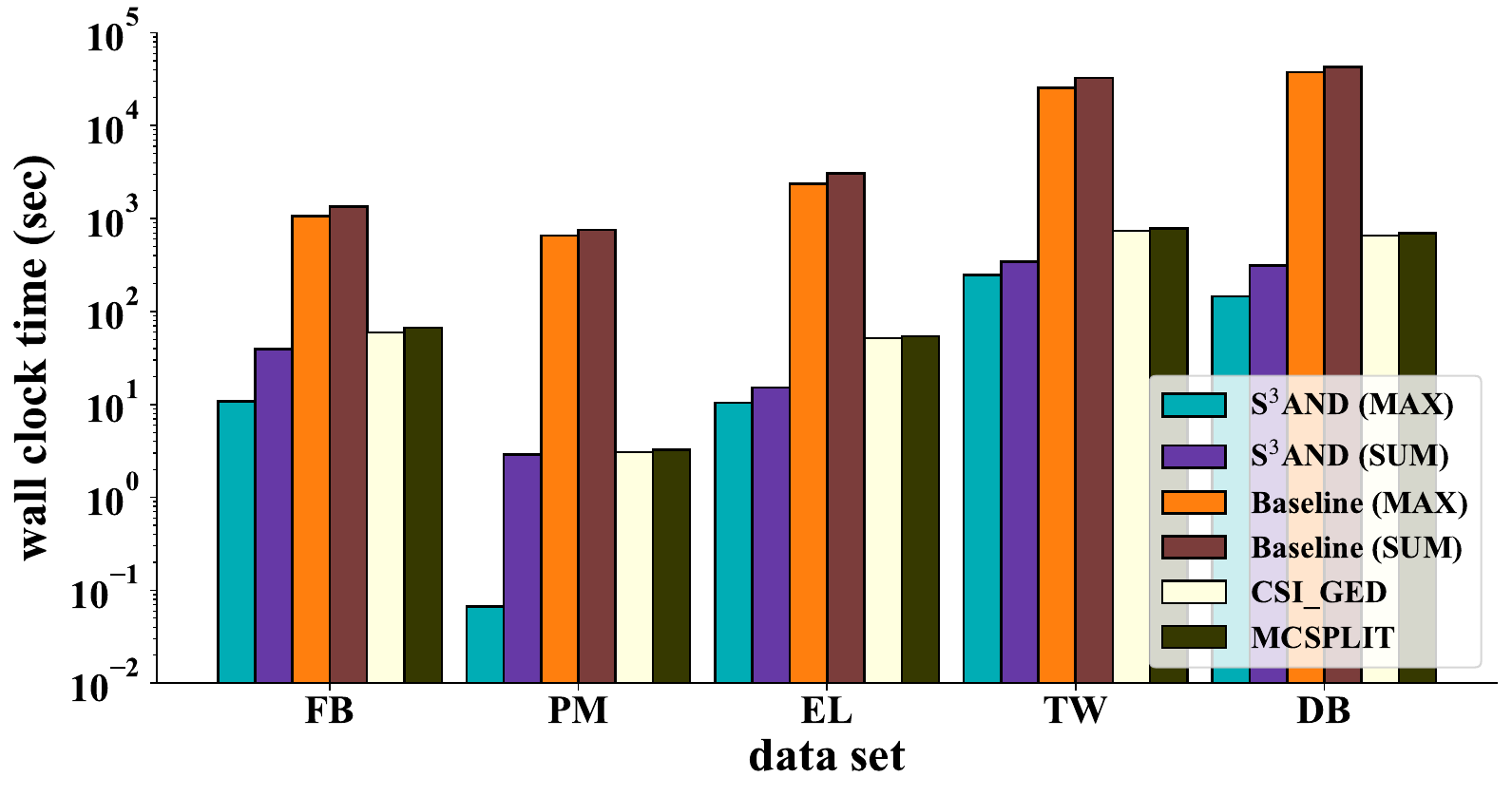}
    \caption{The comparison of our S$^3$AND approach with the \textit{Baseline}, CSI\_GED and MCSPLIT methods over real graphs.}
    \label{fig:baseline}
\end{figure}

\subsection{The S$^3$AND Effectiveness Evaluation}
\label{sec:evaluation}
In this subsection, we report the pruning power of our proposed pruning strategies in Section~\ref{sec: pruning for index} for S$^3$AND query processing over real-world/synthetic graphs. 

\begin{figure*}
    \centering
    \subfigure[$Syn\text{-}Uni$ vs. $\sigma_{MAX}$]{
            \includegraphics[width=0.15\linewidth]{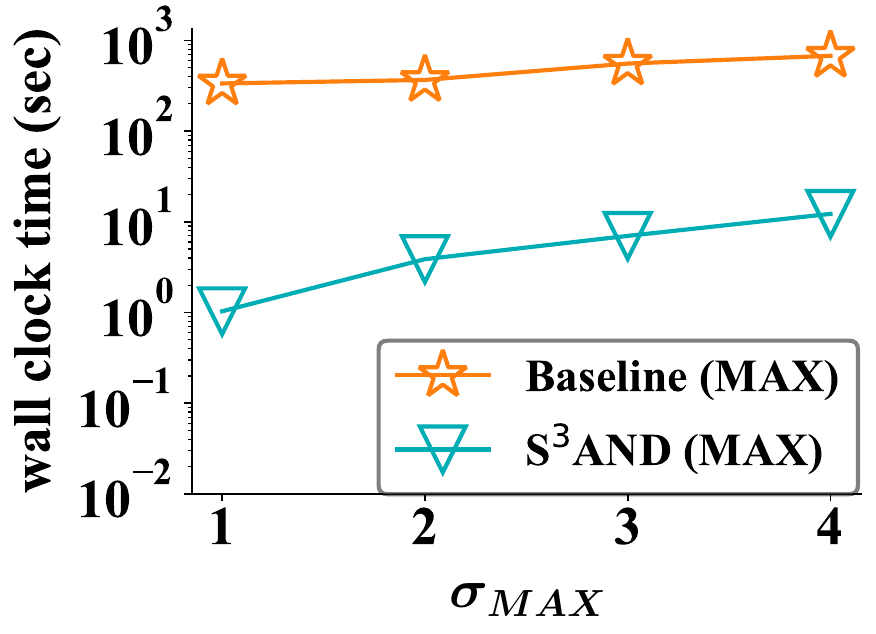}
            \label{subfig:uni-max}
    } \hspace{-0.2cm}
    \subfigure[$Syn\text{-}Gau$ vs. $\sigma_{MAX}$]{
        \includegraphics[width=0.15\linewidth]{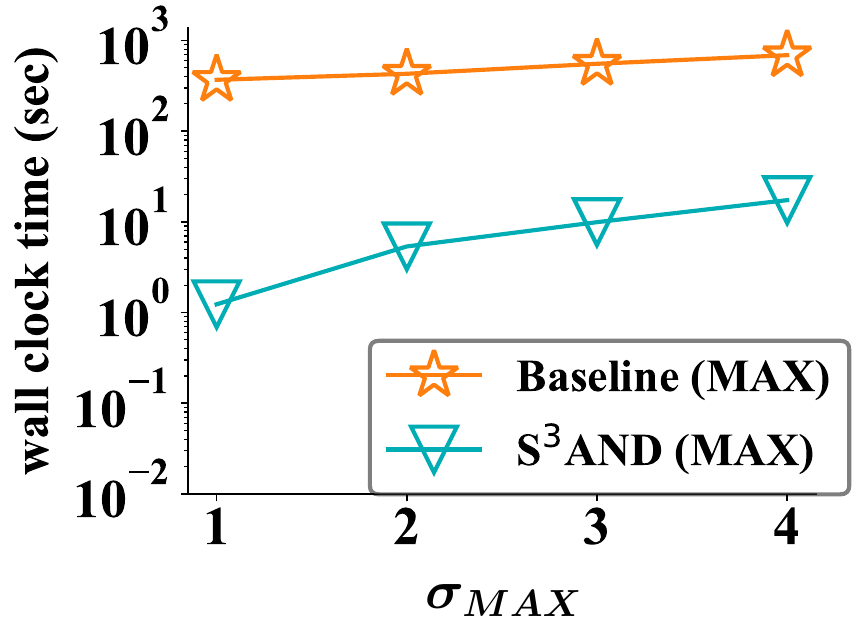}
        \label{subfig:gau-max}
    } \hspace{-0.2cm}
    \subfigure[$Syn\text{-}Zipf$ vs. $\sigma_{MAX}$]{
        \includegraphics[width=0.15\linewidth]{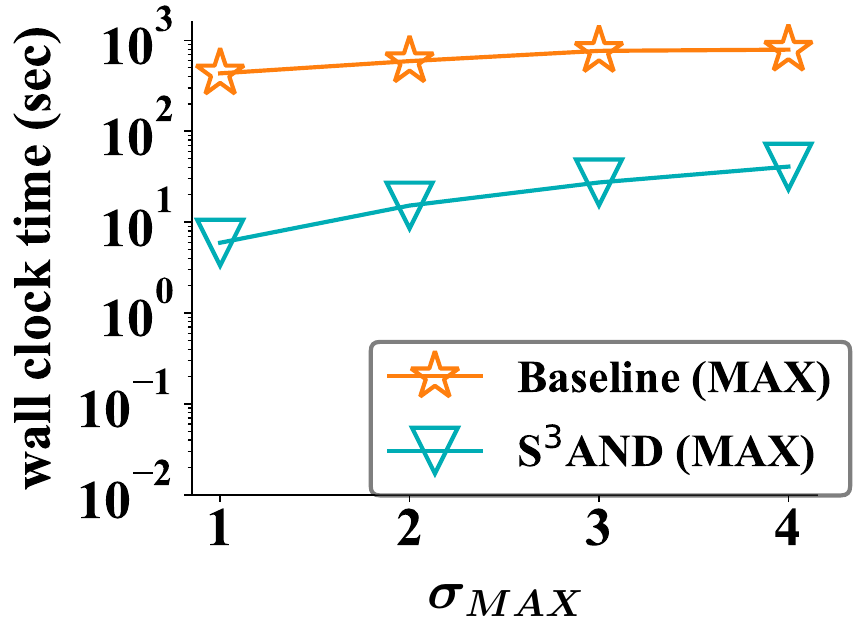}
        \label{subfig:zipf-max}
    } \hspace{-0.2cm}
    \subfigure[$Syn\text{-}Uni$ vs. $\sigma_{SUM}$]{
            \includegraphics[width=0.15\linewidth]{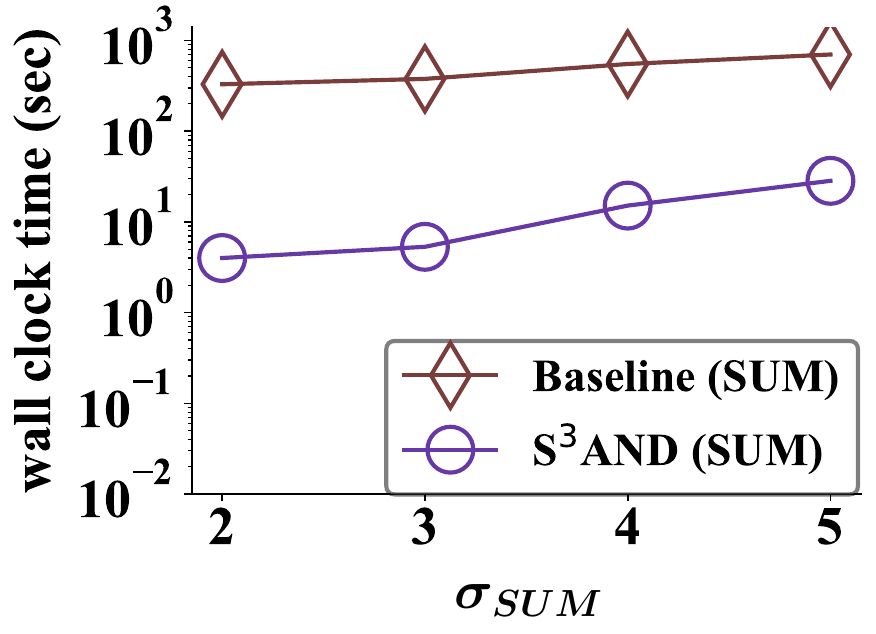}
            \label{subfig:uni-sum}
    } \hspace{-0.2cm}
    \subfigure[$Syn\text{-}Gau$ vs. $\sigma_{SUM}$]{
        \includegraphics[width=0.15\linewidth]{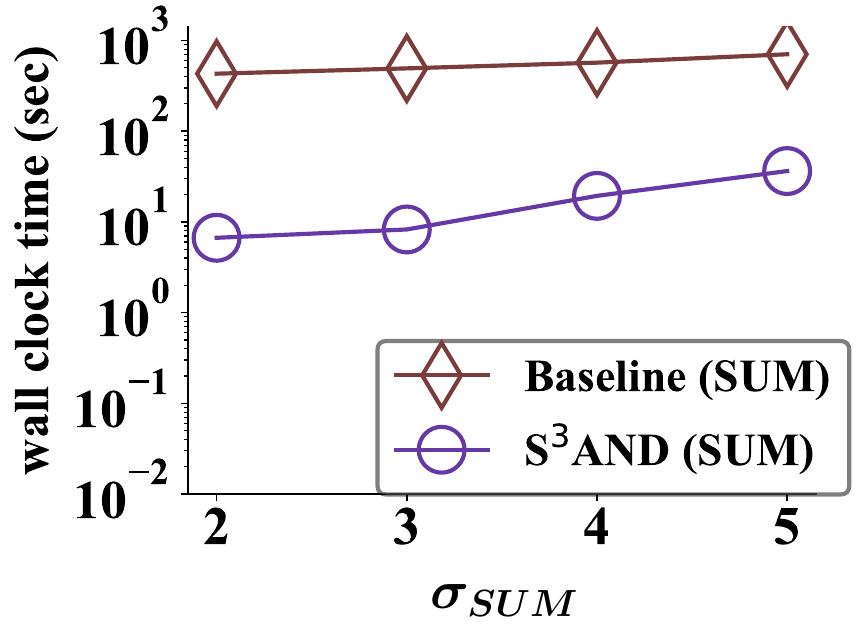}
        \label{subfig:gau-sum}
    } \hspace{-0.2cm}
    \subfigure[$Syn\text{-}Zipf$ vs. $\sigma_{SUM}$]{
        \includegraphics[width=0.15\linewidth]{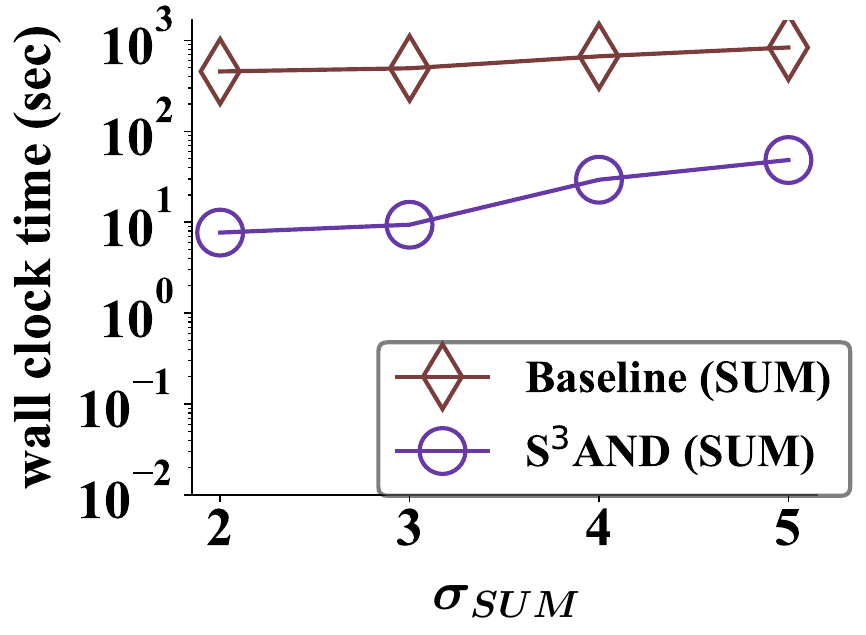}
        \label{subfig:zipf-sum}
    } \\
    \subfigure[$Syn\text{-}Uni$ vs. $|v_i.W|$]{
            \includegraphics[width=0.15\linewidth]{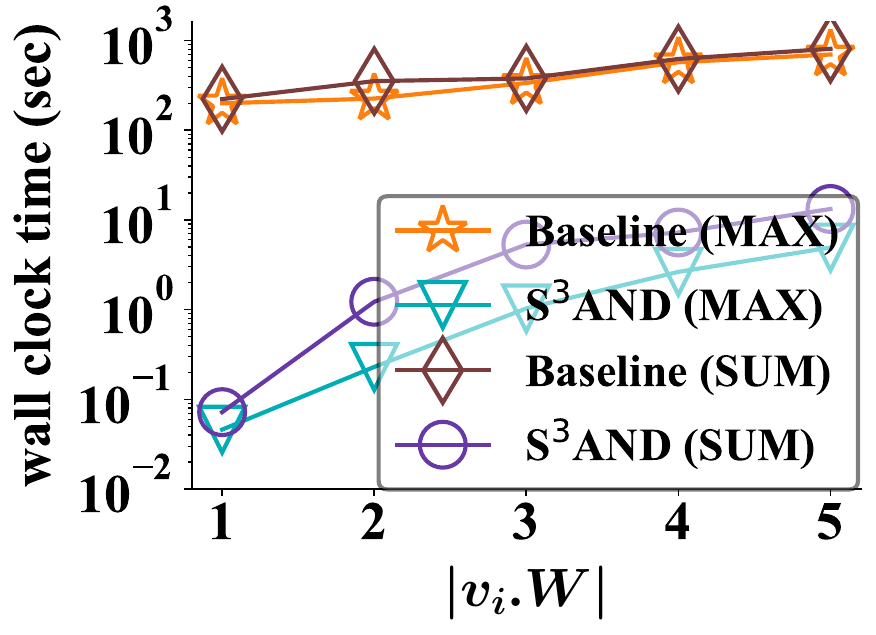}
            \label{subfig:uni-w}
    } \hspace{-0.2cm}
    \subfigure[$Syn\text{-}Gau$ vs. $|v_i.W|$]{
        \includegraphics[width=0.15\linewidth]{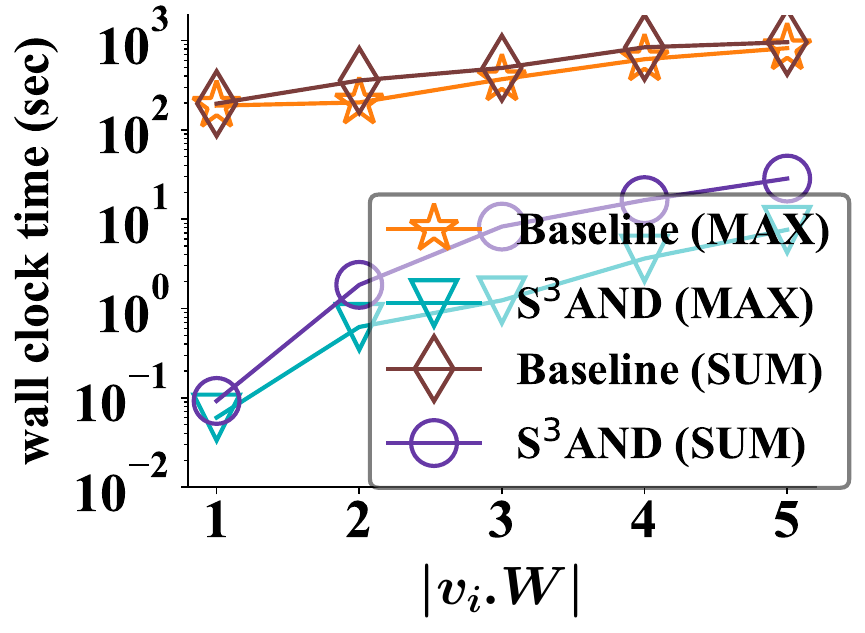}
        \label{subfig:gau-w}
    } \hspace{-0.2cm}
    \subfigure[$Syn\text{-}Zipf$ vs. $|v_i.W|$]{
        \includegraphics[width=0.15\linewidth]{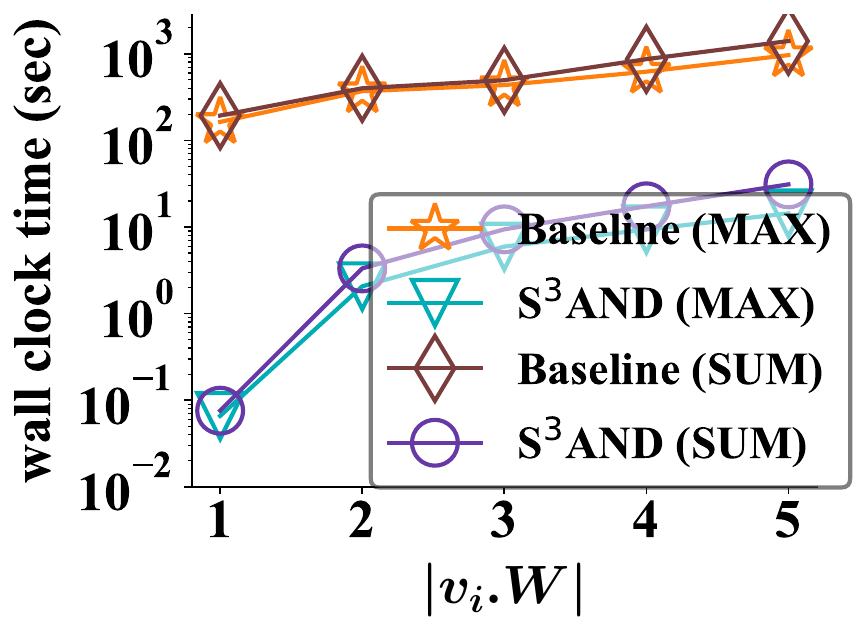}
        \label{subfig:zipf-w}
    } \hspace{-0.2cm}
    \subfigure[$Syn\text{-}Uni$ vs. $|\sum|$]{
            \includegraphics[width=0.15\linewidth]{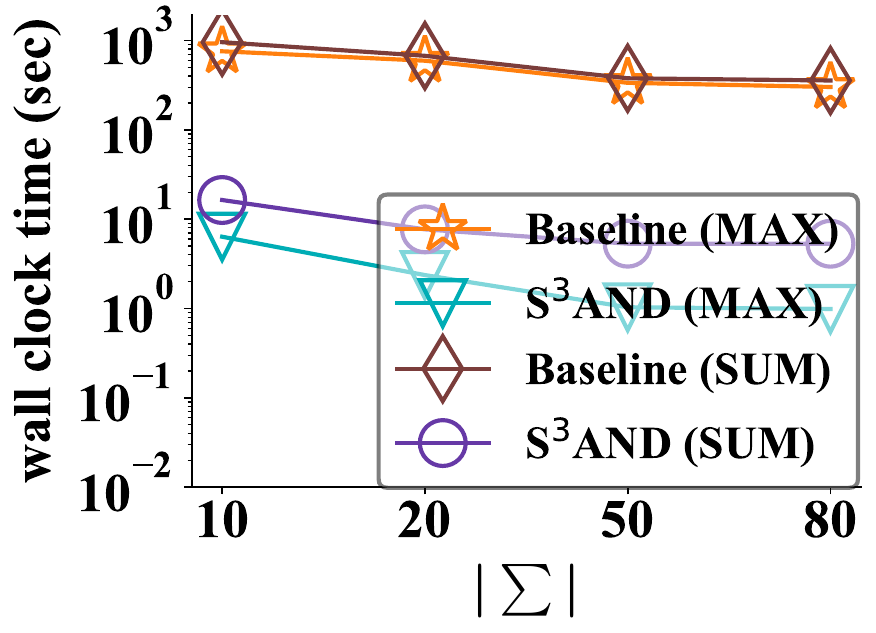}
            \label{subfig:uni-domain}
    } \hspace{-0.2cm}
    \subfigure[$Syn\text{-}Gau$ vs. $|\sum|$]{
        \includegraphics[width=0.15\linewidth]{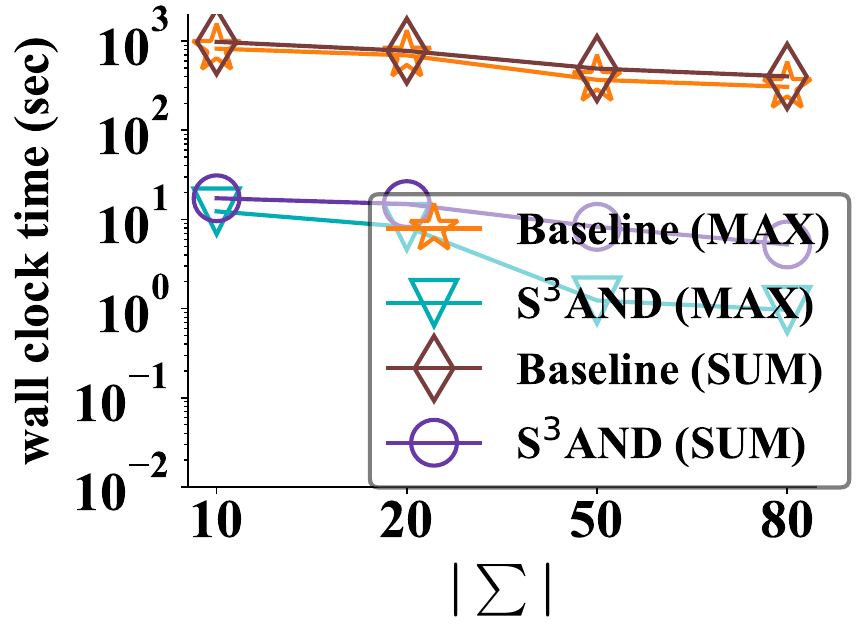}
        \label{subfig:gau-domain}
    } \hspace{-0.2cm}
    \subfigure[$Syn\text{-}Zipf$ vs. $|\sum|$]{
        \includegraphics[width=0.15\linewidth]{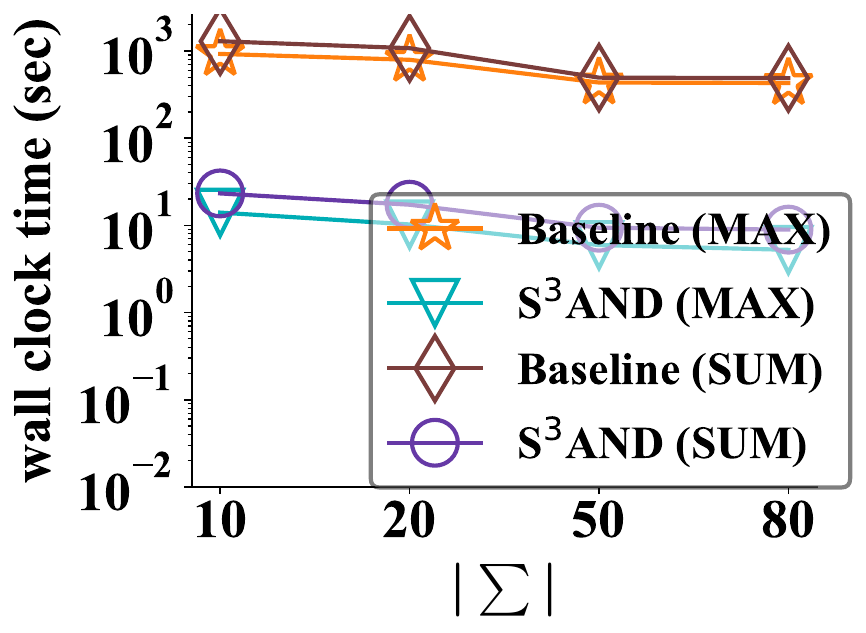}
        \label{subfig:zipf-domain}
    } \\
    \subfigure[$Syn\text{-}Uni$ vs. $|V(q)|$]{
            \includegraphics[width=0.15\linewidth]{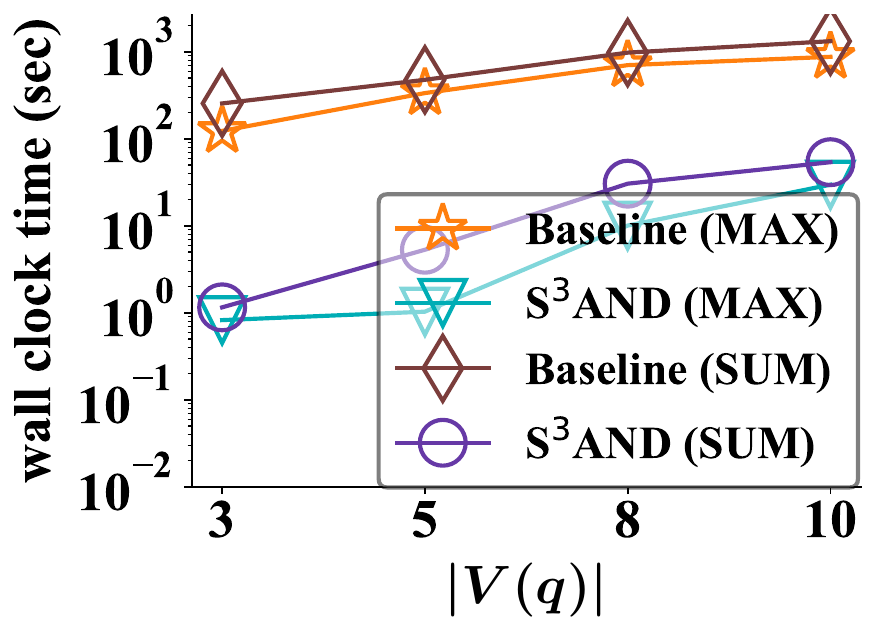}
            \label{subfig:uni-vq}
    } \hspace{-0.2cm}
    \subfigure[$Syn\text{-}Gau$ vs. $|V(q)|$]{
        \includegraphics[width=0.15\linewidth]{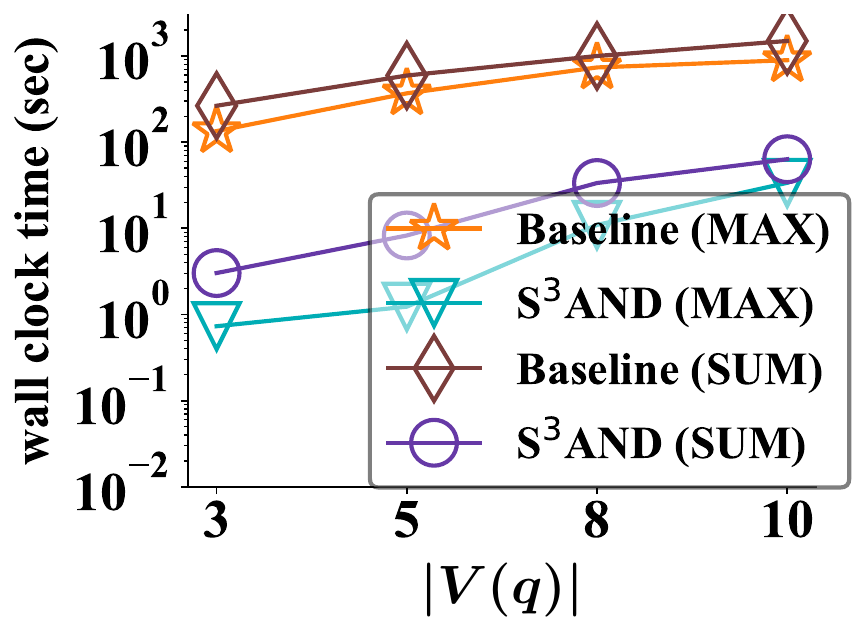}
        \label{subfig:gau-vq}
    } \hspace{-0.2cm}
    \subfigure[$Syn\text{-}Zipf$ vs. $|V(q)|$]{
        \includegraphics[width=0.15\linewidth]{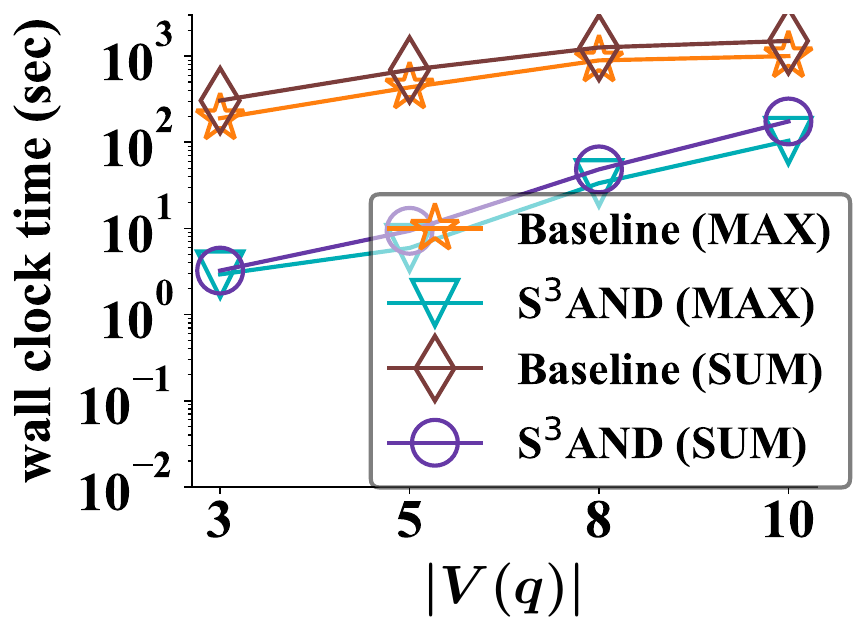}
        \label{subfig:zipf-vq}
    } \hspace{-0.2cm}
    \subfigure[$Syn\text{-}Uni$ vs. $|V(G)|$]{
            \includegraphics[width=0.15\linewidth]{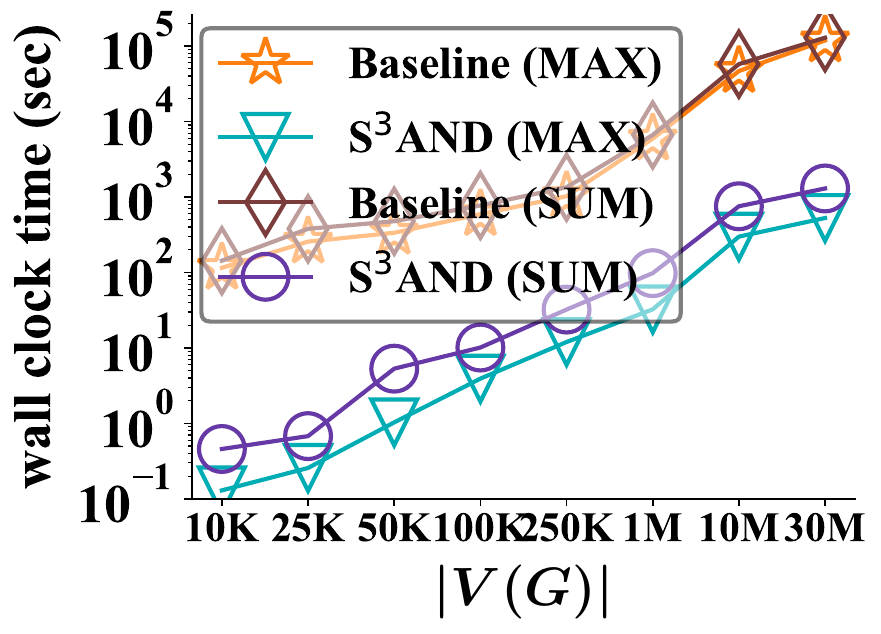}
            \label{subfig:uni-vG}
    } \hspace{-0.2cm}
    \subfigure[$Syn\text{-}Gau$ vs. $|V(G)|$]{
        \includegraphics[width=0.15\linewidth]{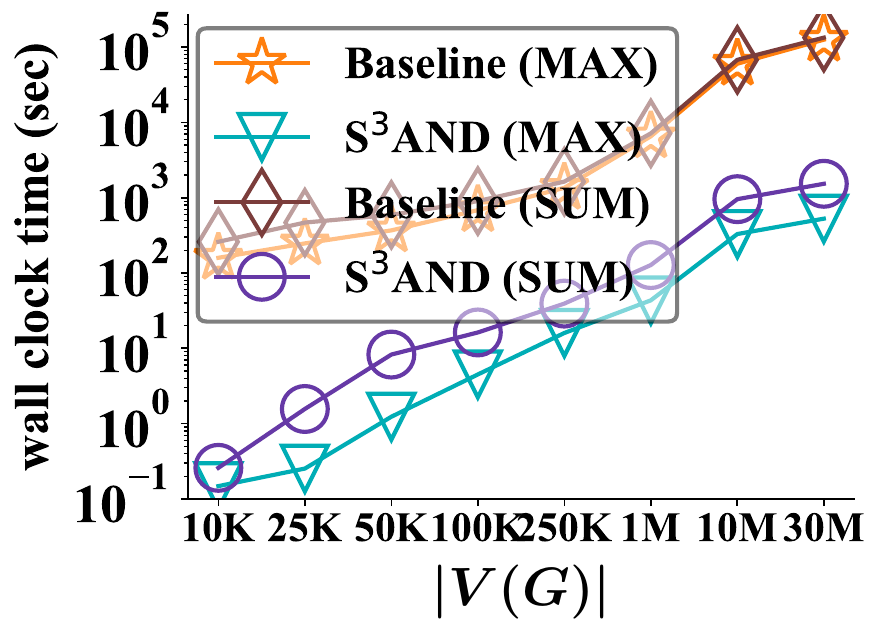}
        \label{subfig:gau-vG}
    } \hspace{-0.2cm}
    \subfigure[$Syn\text{-}Zipf$ vs. $|V(G)|$]{
        \includegraphics[width=0.15\linewidth]{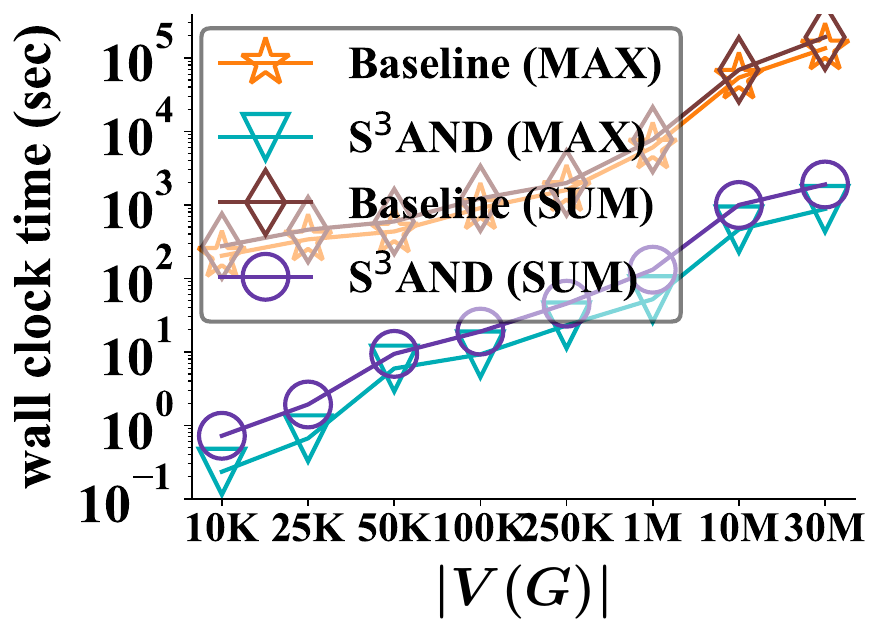}
        \label{subfig:zipf-vG}
    } 
    \vspace{-2ex}
    \caption{The S$^3$AND query efficiency on synthetic graphs, compared with the $Baseline$ method.}
    \label{fig:online_baseline}
    \vspace{-3ex}
\end{figure*}

Figure~\ref{fig:pruning} conducts an ablation study on the pruning power of different pruning combinations in our S$^3$AND approach over real-world/synthetic graphs. Each time we add one more pruning strategy and test three pruning combinations: (1) \textit{keyword set pruning} ($KS$), (2) \textit{keyword set + ND lower bound pruning} ($KS + lb\_ND$; note: $lb\_ND(\cdot)$ is given by Eq.~(\ref{eq:lb_ND1})), (3) \textit{keyword set + ND lower bound pruning + Tighter ND lower bound pruning} ($KS + lb\_ND + t\text{-}lb\_ND$; note: $t\text{-}lb\_ND(\cdot)$ is given by Eq.~(\ref{eq:lb_ND3})), where all parameters are set to default values. From the figures, we can see that our proposed pruning combinations can achieve high pruning powers (i.e., above 93.5\%) for both real and synthetic graphs. With more pruning strategies used, our proposed S$^3$AND approach can have higher pruning power. The overall pruning power with all the three pruning methods can reach $96.62\% \sim 99.70\%$ for real-world graphs and $94.59\% \sim 99.69\%$ for synthetic graphs, which confirms the effectiveness of our proposed pruning strategies.


\subsection{The S$^3$AND Efficiency Evaluation}
\label{sub:s3and}
In this subsection, we compare our online S$^3$AND query algorithm with $Baseline$, $CSI\_GED$, and $MCSPLIT$, under default parameters over real-world graphs, in terms of the wall clock time. Figure~\ref{fig:baseline} illustrates the comparative results over real-world graphs, where parameters are set to their default values. 
From the figure, we can see that the S$^3$AND query efficiency outperforms that of $Baseline$ by 1-3 orders of magnitude, for either SUM or MAX aggregate.
For example, when $f$ is MAX, our S$^3$AND query time is $0.07 \sim 246.32$ $sec$ for real-world graphs and $1.03\sim 5.94$ $sec$ for synthetic graphs. Moreover, our S$^3$AND algorithm incurs lower time cost than $CSI\_GED$ and $MCSPLIT$ methods, due to the costly calculation of GED or MCS. Note that, since $CSI\_GED$ and $MCSPLIT$ have different semantics from S$^3$AND and are used as the filter to retrieve candidate subgraphs, $CSI\_GED$ and $MCSPLIT$ may not return subgraph answers of good quality (please refer to the case study in Figure \ref{fig:case} of Section \ref{subsec:case study}).

To verify the robustness of our S$^3$AND approach, in the sequel, we will compare with $Baseline$ (which also outputs exact S$^3$AND answers) and test different parameters (e.g., $\sigma$, $|v_i.W|$, $|\sum|$, $|V(q)|$, and $|V(G)|$) on synthetic graphs (i.e., $Syn\text{-}Uni$, $Syn\text{-}Gau$, and $Syn\text{-}Zipf$).


\noindent
\textbf{The Efficiency w.r.t. the Threshold, $\sigma_{MAX}$, of MAX Neighbor Difference:} 
Figures~\ref{subfig:uni-max}, ~\ref{subfig:gau-max}, and ~\ref{subfig:zipf-max} illustrate the S$^3$AND query performance for MAX aggregate (i.e., $f=MAX$), compared with $Baseline$, where the AND threshold $\sigma_{MAX}$ varies from 1 to 4, and other parameters are set to default values. 
From the figures, we can see that for both S$^3$AND and $Baseline$, the wall clock time increases for larger $\sigma_{MAX}$ over all three synthetic graphs. This is because a larger MAX threshold $\sigma_{MAX}$ results in more candidate vertices, thereby raising the refinement cost. 
Nevertheless, our S$^3$AND approach outperforms $Baseline$ by 1-3 orders of magnitude, and remains low (i.e., $1.03 \sim 40.83 \: sec$) over three synthetic graphs.


\noindent
\textbf{The Efficiency w.r.t. the Threshold, $\sigma_{SUM}$, of SUM Neighbor Difference:} 
Figures~\ref{subfig:uni-sum}, ~\ref{subfig:gau-sum}, and ~\ref{subfig:zipf-sum} compare the S$^3$AND query performance for SUM aggregate (i.e., $f=SUM$) with that of $Baseline$, where $\sigma_{SUM} = 2, 3, 4,$ and $5$, and other parameters are by default. 
Similar to MAX aggregate threshold $\sigma_{MAX}$, when $\sigma_{SUM}$ increases, more candidate vertices will be retrieved for the refinement, which leads to higher query processing cost.
Nonetheless, for all the three synthetic graphs, our S$^3$AND approach takes $3.99 \sim 48.36 \: sec$ query time, and performs significantly better than $Baseline$ by about 2 orders of maganitude.



\noindent
\textbf{The Efficiency w.r.t. the Number, $|v_i.W|$, of Keywords Per Vertex:}
Figures~\ref{subfig:uni-w}, ~\ref{subfig:gau-w}, and ~\ref{subfig:zipf-w} report the effect of the number, $|v_i.W|$, of keywords per vertex on the S$^3$AND query performance, where $|v_i.W|$ varies from 1 to 5, and default values are used for other parameters. 
With more keywords in $v_i.W$ per vertex $v_i$, the pruning powers of \textit{keyword set} and \textit{ND lower bound pruning} become lower (i.e., with more candidate vertices), which thus leads to higher time cost. Nonetheless, the S$^3$AND query cost remains low (i.e., $0.05 \sim 31.03 \: sec$)  for different $|v_i.W|$ values, and outperforms $Baseline$ by 1-3 orders of magnitude.

\balance

\noindent
\textbf{The Efficiency w.r.t. the Size, $|\sum|$, of the Keyword Domain $\sum$:}
Figures~\ref{subfig:uni-domain}, ~\ref{subfig:gau-domain}, and ~\ref{subfig:zipf-domain} present the S$^3$AND query performance, by setting $|\sum|$ = 10, 30, 50, and 80, where other parameters are set to default values. 
With the same number, $|v_i.W|$, of keywords per vertex, higher $|\sum|$ value incurs more scattered keywords in the keyword domain, and leads to higher pruning power of keyword set pruning, resulting in fewer candidate vertices.
Therefore, as confirmed by figures, for larger $|\sum|$ value, the S$^3$AND query cost decreases and remains low (i.e., $0.98 \sim 23.23 \: sec$), outperforming $Baseline$ by 2-3 orders of magnitude.

\noindent
\textbf{The Efficiency w.r.t. the Size, $|V(q)|$, of Query Graph $q$:}
Figures~\ref{subfig:uni-vq}, ~\ref{subfig:gau-vq}, and ~\ref{subfig:zipf-vq} demonstrate the S$^3$AND query performance for different query graph sizes $|V(q)|$, where $|V(q)|$ = 3,5,8 and 10, and other parameters are set to their default values.
When the query graph size, $|V(q)|$, becomes larger, more sets of candidate vertices w.r.t. query vertices need to be retrieved and refined, resulting in higher query costs. Nevertheless, the time cost of our S$^3$AND approach still remains low (i.e., $0.82 \sim 174.63 \: sec$), which outperforms $Baseline$ by 1-2 orders of magnitude.


\noindent
\textbf{The Efficiency w.r.t. the Size, $|V(G)|$, of Data Graph $G$:}
Figures~\ref{subfig:uni-vG}, ~\ref{subfig:gau-vG}, and ~\ref{subfig:zipf-vG} test the scalability of our S$^3$AND approach for different data graph sizes $|V(G)|$ varying from $10 K$ to $30 M$, where default values are used for other parameters. From figures, we can see that, with the increase of the data graph size $|V(G)|$, the number of candidate vertices also increases, which leads to higher retrieval/refinement costs and in turn larger query time. 
For large-scale graphs with $30 M$ vertices, the time costs are less than $1,894.26 \: sec$ for all the three synthetic graphs, outperforming the $Baseline$ method by 2-3 orders of magnitude, which confirms the efficiency and scalability of our proposed S$^3$AND approach.

\subsection{Evaluation of the S$^3$AND Offline Pre-Computations}
\label{sec:offline_}
Figure~\ref{fig:offline time} presents the S$^3$AND offline pre-computation cost (including time costs of auxiliary data pre-computation and index construction) over real-world/synthetic graphs, where parameters are set to default values.
In Figure~\ref{subfig:S3AND_real_ofline-online}, for real-world graph size from $4K$ to $2.95M$, the overall offline pre-computation time varies from $33.03 \: sec$ to $11.85 \: h$. On the other hand, for synthetic graphs, when the graph size $|V(G)|$ is $50K$, the overall offline pre-computation time in Figure~\ref{subfig:S3AND_synthetic_ofline-online} varies from $44.47 \: sec$ to $46.15 \:sec$.

\begin{figure}[!t]
    \centering
    \subfigure[real-world graphs]{
            \includegraphics[width=0.48\linewidth]{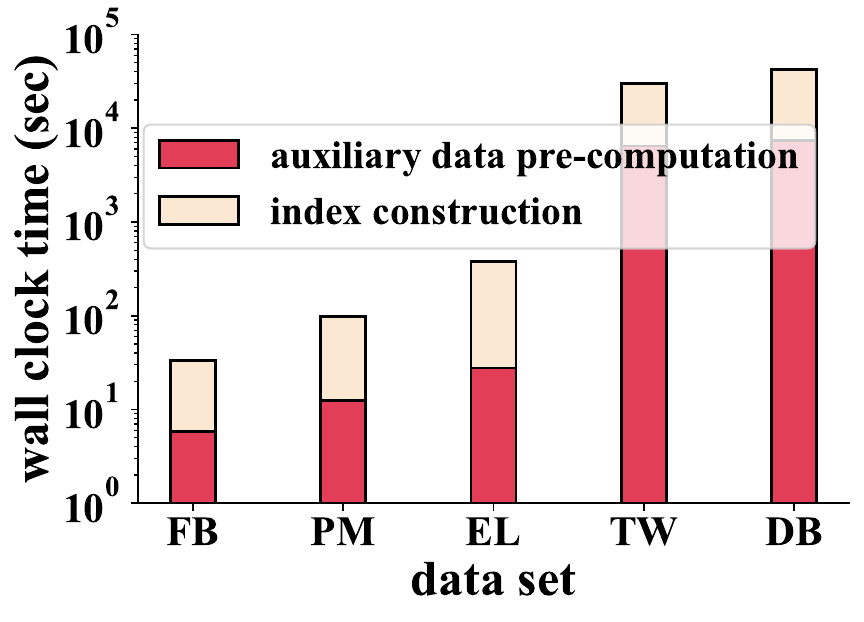}
            \label{subfig:S3AND_real_ofline-online}
    } \hspace{-0.2cm}
    \subfigure[synthetic graphs]{
        \includegraphics[width=0.48\linewidth]{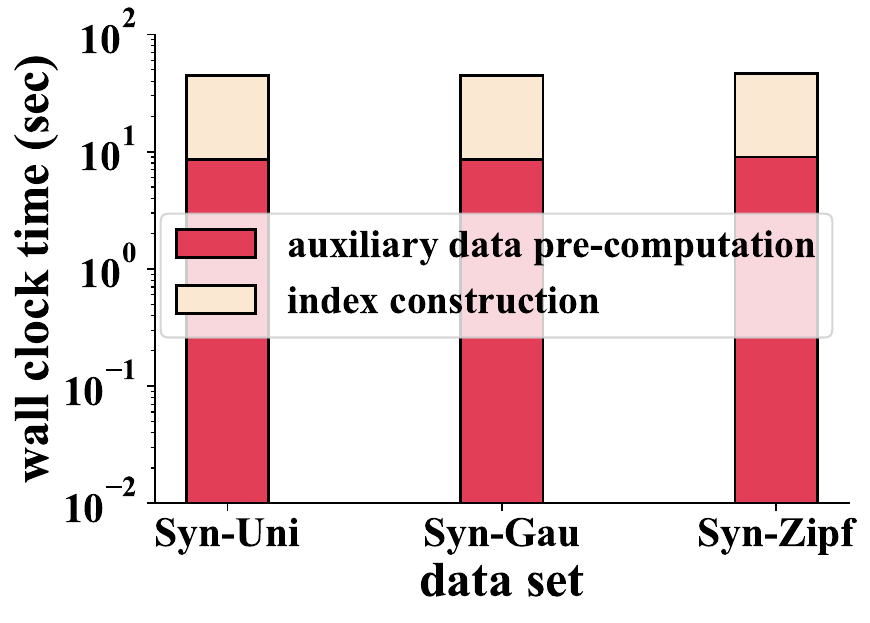}
        \label{subfig:S3AND_synthetic_ofline-online}
    }
    \vspace{-3ex}
    \caption{The time costs of offline auxiliary data pre-computation and index construction.}
    \label{fig:offline time}
\end{figure}

\begin{figure}[t]
    \centering
    \subfigure[real-world graphs]{
            \includegraphics[width=0.48\linewidth]{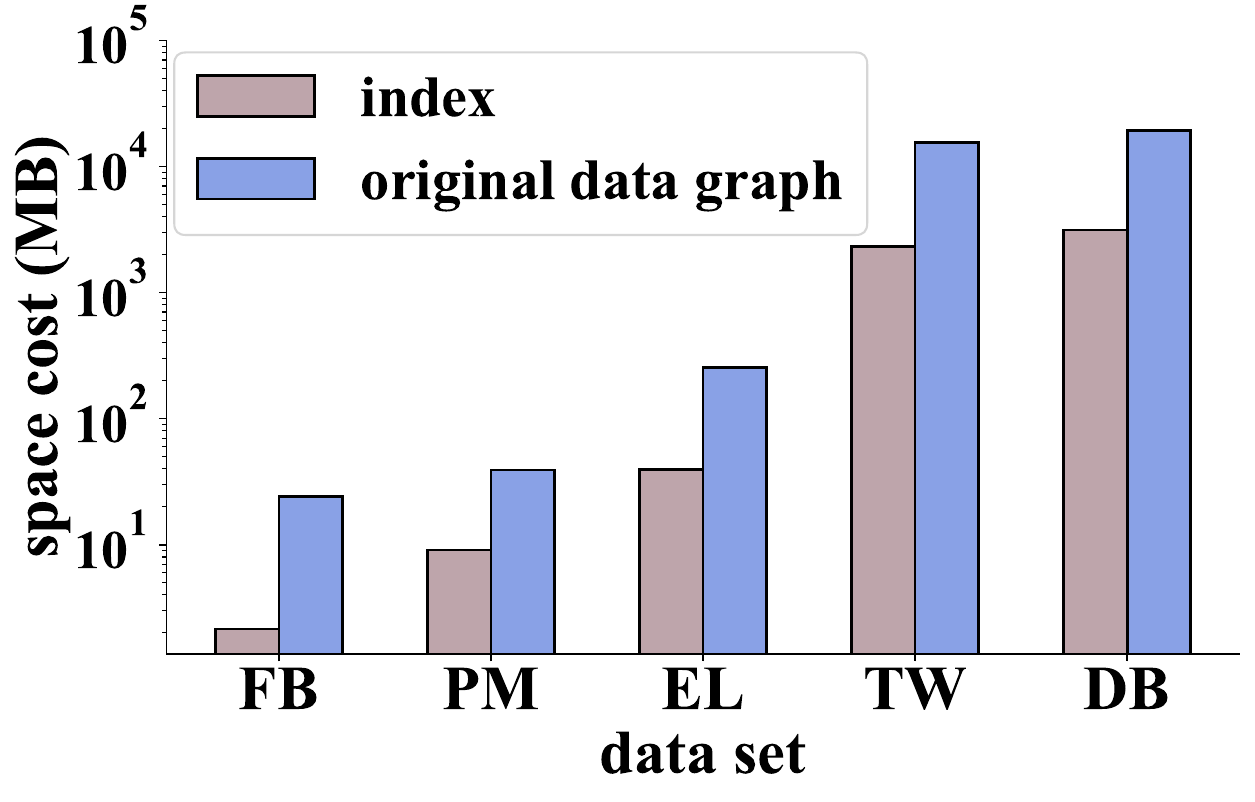}
            \label{subfig:S3AND_index_space_real}
    } \hspace{-0.2cm}
    \subfigure[synthetic graphs]{
        \includegraphics[width=0.48\linewidth]{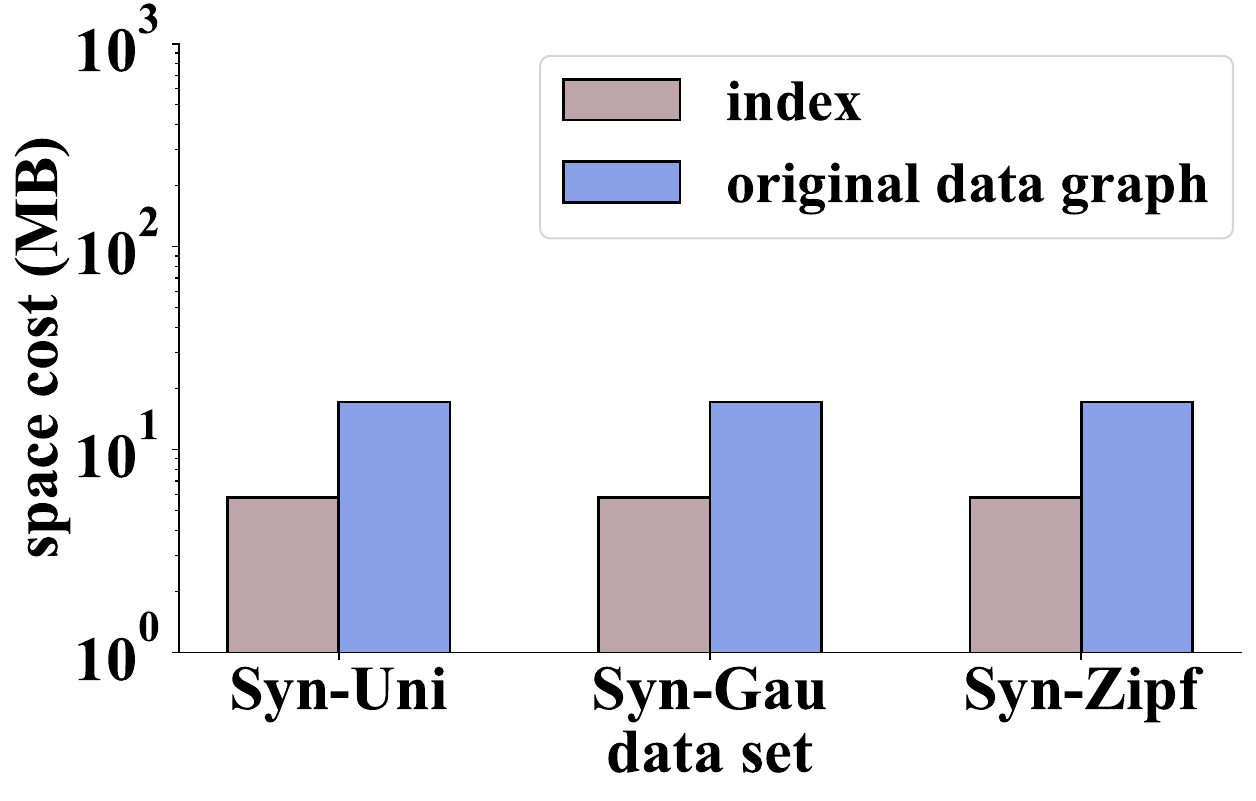}
        \label{subfig:S3AND_index_space_syn}
    }\vspace{-2ex}
    \caption{The space cost of the precomputed index.}
    \label{fig:space}
\end{figure}

Figures~\ref{subfig:S3AND_index_space_real} and \ref{subfig:S3AND_index_space_syn} show the statistics of the space consumption for the precomputed indexes over both real and synthetic graphs. From figures, we can see that for most real/synthetic graphs, the space cost of the index for our S$^3$AND algorithm is about one order of magnitude less than that of the original data graph.


\subsection{Case Study}
\label{subsec:case study}

\begin{figure}[!hb]
    \centering
    \setcounter{subfigure}{0}
    \subfigure[query graph $q$ ($|V(q)| = 5$)]{
    \includegraphics[width=0.475\linewidth]{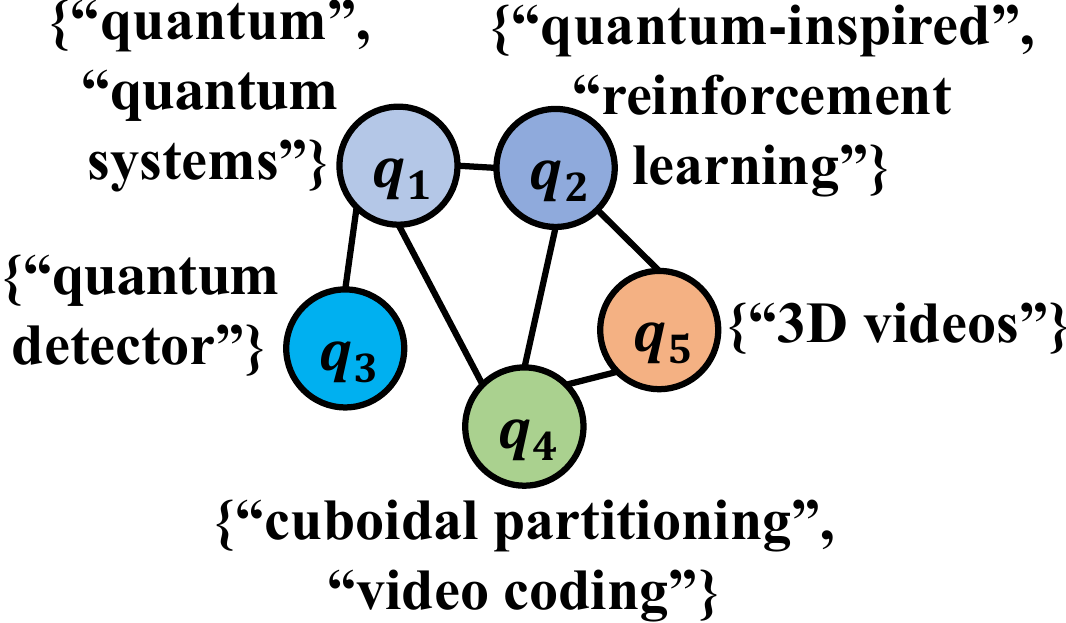}
    \label{subfig:caseq}
    }
    \subfigure[S$^3$AND ($AND(q,g)$$=$$0$; $f$$=$$MAX$)]{
    \includegraphics[width=0.475\linewidth]{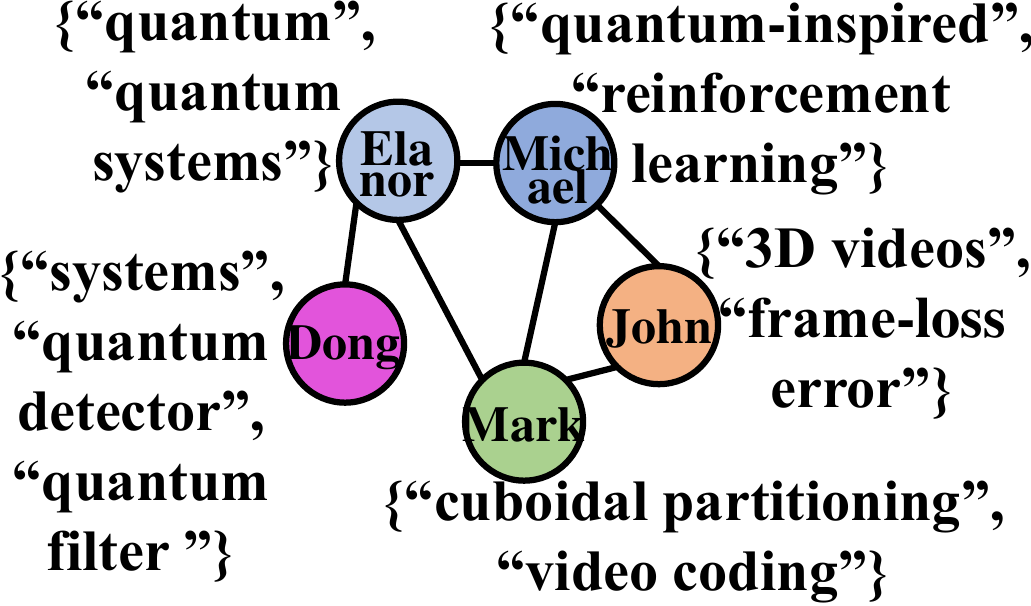}
    \label{subfig:caseS3}
    }\\
    \subfigure[$CSI\_GED$ ($GED(q, g) = 1$)]{
    \includegraphics[width=0.475\linewidth]{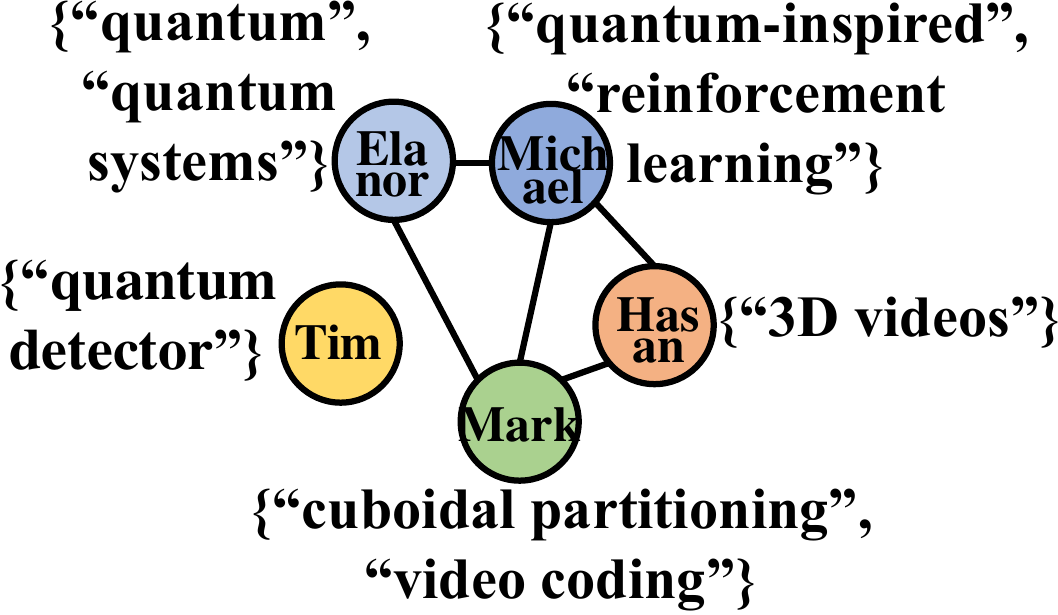}
    \label{subfig:GED}
    }
    \subfigure[$MCSPLIT$ ($|V(q)|$$-$$MCS(q,g)$$=1$)]{
    \includegraphics[width=0.475\linewidth]{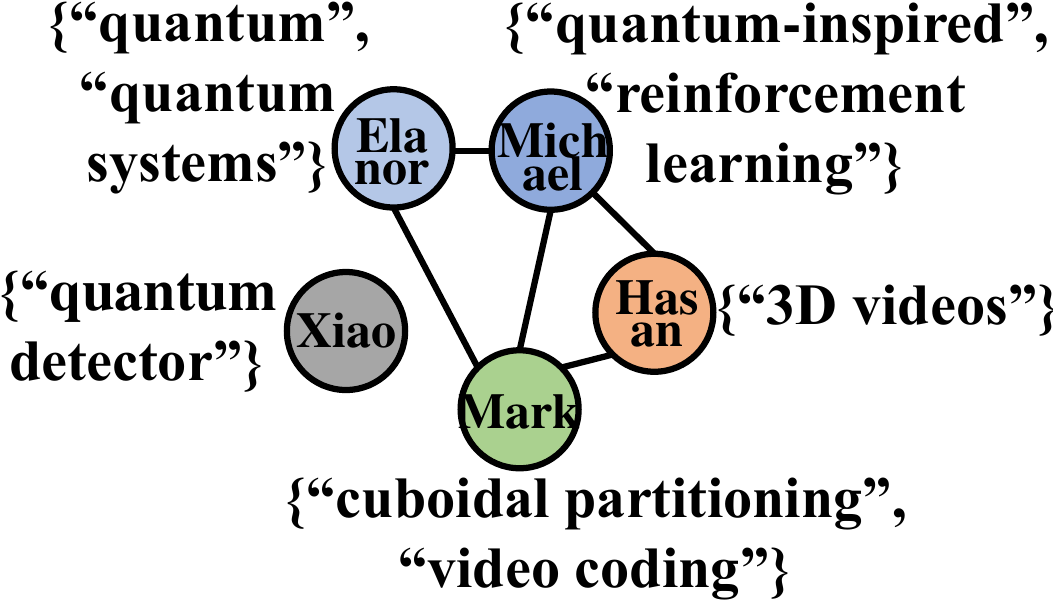}
    \label{subfig:MCS}
    }
    \caption{A case study of S$^3$AND, $CSI\_GED$, and $MCSPLIT$ over $DB$ graph data with query graph size 5.}
    \label{fig:case}
\end{figure}

In order to illustrate the effectiveness of our S$^3$AND semantics, in this subsection, we conduct a case study in Figure~\ref{fig:case} and evaluate top-1 query answer of our S$^3$AND semantics (with the smallest AND score), compared with that of baselines $CSI\_GED$ and $MCSPLIT$ (with the minimum GED or maximum MCS, respectively), over the $DBLPv14$ graph, $DB$, where the query graph size $|V(q)|$ is set to 5.

In this case study, a user may want to conduct new multidisciplinary research, especially in quantum and 3D vision areas. Thus, he/she can search for a collaboration team from the DBLP graph, whose members have quantum and/or 3D vision related background and have co-authored some papers before. Figure~\ref{subfig:caseq} shows the targeted collaboration team (i.e., query graph $q$), whereas Figure~\ref{subfig:caseS3} provides a subgraph answer satisfying the S$^3$AND constraints (under MAX aggregate). In particular, the author, ``Dong'', is an expert in the systems and quantum (i.e., matching with query vertex ``$q_3$''), and co-authored with ``Elanor'' before (matching with ``$q_1$''). Other authors, $Michael$, $Mark$, and $John$, have the expertise (keywords) related to the quantum and 3D videos, which include all the query keywords in each of query vertices $q_2$, $q_4$, and $q_5$, respectively.

In contrast, Figures~\ref{subfig:GED} and ~\ref{subfig:MCS} return top-1 query result of $CSI\_GED$ and $MCSPLIT$, respectively. However, these returned subgraph answers contain isolated vertices (e.g., ``Tim'' in Figure~\ref{subfig:GED} and ``Xiao'' in Figure~\ref{subfig:MCS}), which are not the desired collaboration teams (as some authors did not have co-author relationships with other members, and may incur high communication or technical cross-learning costs). Therefore, our S$^3$AND query is more effective to return subgraph answers that satisfy both keyword and neighbor difference conditions in such an application scenario/case.



\subsection{Parameter Tuning}
\label{sec:tuning}
In this subsection, we vary values of parameters such as $n$, $global\_iter$, and $local\_iter$, and evaluate/discuss how to tune these parameters. Moreover, we also discuss how to tune/choose the AND score threshold $\sigma_{MAX}$ or $\sigma_{SUM}$.

\underline{The Number, $n$, of Partitions:} The number, $n$, of partitions is one of inputs in Algorithm \ref{alg:cm}, which is invoked by line 5 of Algorithm \ref{alg:index2}. The number, $n$, of partitions in Algorithm \ref{alg:cm} is related to the fanout, $fanout$, of the index node in Algorithm \ref{alg:index2}, which is defined as the total node space (page size) divided by the space cost of each entry in the index node. Thus, in our experiments, we set $n$ to 16 ($= 4KB / 256 \: bytes$) by default, where $4KB$ is the space cost of a node (page), and $256 \: bytes$ is the space cost of an index entry (i.e., space costs of keyword bit vectors, neighbor keyword bit vectors, and the maximum number of distinct neighbor keywords).

We also test the effect of different $n$ values (i.e., 2, 8, 16, 24, and 32) on our S$^3$AND query performance in Figure~\ref{ngl:n}. From experimental results, we can see that as $n$ increases, the wall clock time slightly decreases first and then increases. This is because, larger $n$ values will lead to more branches in the tree index, possibly with smaller height, which incurs higher pruning power on branches and lower cost to traverse from root to leaf nodes. On the other hand, larger fanout $n$ will also increase the computation cost of searching within each index node. Nevertheless, from the experimental results, we can see that the wall clock time is not very sensitive to $n$. In our experiments, we simply set $n$ to 16.

\underline{The Number, $global\_iter$, of Global Iterations:} During the graph partitioning, we run multiple (i.e., $global\_iter$) global iterations with random starts of center vertices, in order to prevent our algorithm from falling into local optimality. Small $global\_iter$ value may lead to low partitioning quality (with local optimality), whereas large $global\_iter$ value may incur high time cost. 

Figure~\ref{ngl:g} illustrates the effect of $global\_iter$ on the S$^3$AND query cost over Facebook graph data, where $global\_iter$ varies from 5 to 20. We can see that the wall clock time is not very sensitive with respect to different $global\_iter$ values, which indicates that setting $global\_iter=5$ by default is sufficient for the index construction to facilitate our S$^3$AND approach.

\underline{The Number, $local\_iter$, of Local Iterations:} Each local iteration updates center vertices and performs re-assignment of vertices to partitions. Thus, larger $local\_iter$ values may achieve better partitioning strategies with higher quality, however, incur higher offline computation cost. 

Figure~\ref{ngl:l} varies parameter $local\_iter$ from 20 to 100 for the index construction over Facebook graph. Similar to previous experimental results, the wall clock time of our S$^3$AND approach over the resulting index is not very sensitive to $local\_iter$ values. Therefore, in our experiments, we set $local\_iter$ to 20 by default.

\underline{The Tuning of Threshold Parameters $\sigma_{MAX}$ and $\sigma_{SUM}$:} 
We also conduct a set of experiments on the frequency distributions of the AND scores over $FB$ and $PM$ graph data sets, for tuning threshold parameters, $\sigma_{MAX}$ and $\sigma_{SUM}$ (w.r.t, $MAX$ and $SUM$ aggregates, respectively).
Specifically, Figure~\ref{fig:ANDscore} presents the frequency distributions of the AND scores in $FB$ and $PM$ graphs, for the AND scores from 0 to 3 under $MAX$ aggregate and from 0 to 6 under $SUM$ aggregate, where other parameters are set to default values. From figures, we can see that, for both $FB$ and $PM$ graphs, most AND scores under $MAX$ aggregate have 0 or 1 frequency in Figure \ref{subfig:ANDscore_MAX}, whereas that with $SUM$ aggregate are distributed between 0 and 2 in Figure \ref{subfig:ANDscore_SUM}. Therefore, in order to tune the threshold parameters (e.g., $\sigma_{MAX}$ or $\sigma_{SUM}$ for a new graph data set), we can collect such statistics (i.e., the AND score histogram) with query graphs from historical logs, and set appropriate threshold $\sigma_{MAX}$ or $\sigma_{SUM}$ based on a user-specified query selectivity (i.e., the percentage/number of answer subgraphs that the user wants to obtain).

To summarize, our proposed S$^3$AND approach can achieve high pruning power and low wall clock time (compared with the baseline method, $Baseline$).

\begin{figure}[t]
    \centering
    \subfigure[$n$]{
        \includegraphics[width=0.308\linewidth]{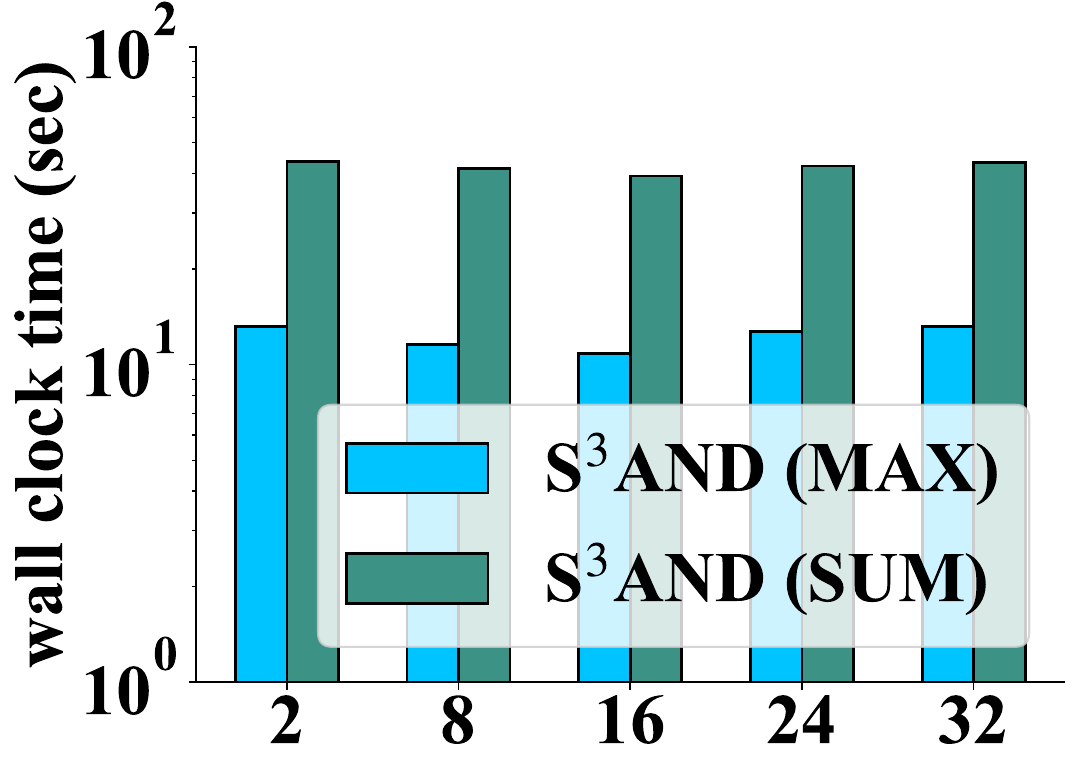}
        \label{ngl:n}
    }
    \subfigure[$global\_iter$]{
        \includegraphics[width=0.308\linewidth]{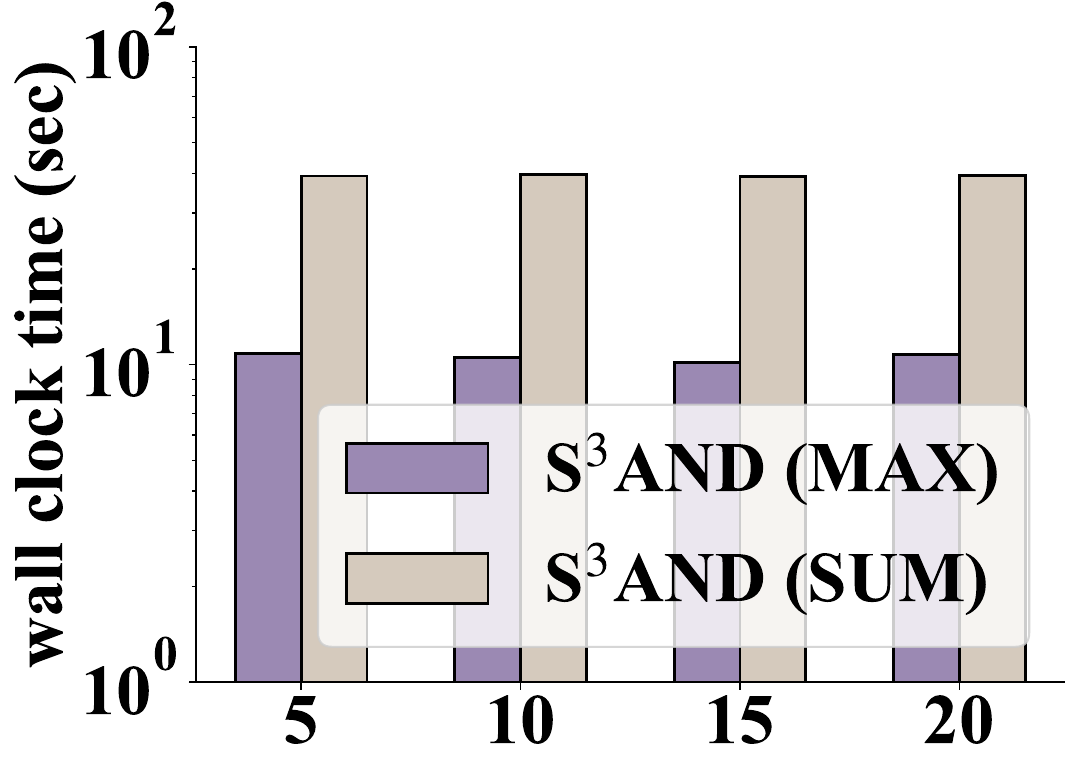}
        \label{ngl:g}
    }
    \subfigure[$local\_iter$]{
        \includegraphics[width=0.308\linewidth]{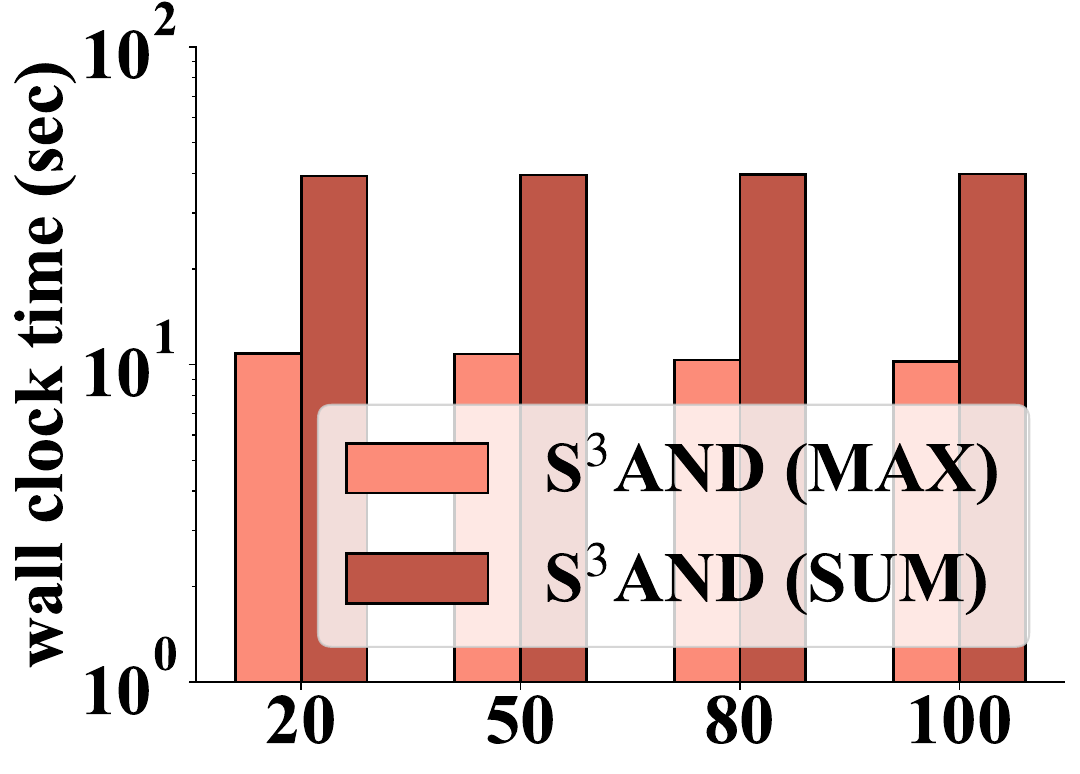}
        \label{ngl:l}
    }
    \caption{The S$^3$AND query efficiency on Facebook graph data set vs. $n$, $global\_iter$, and $local\_iter$.}
    \label{fig:ngl}
\end{figure}

\begin{figure}[H]
    \centering
    \subfigure[$AND$ score with $MAX$ aggregate]{
    \includegraphics[width=0.45\linewidth]{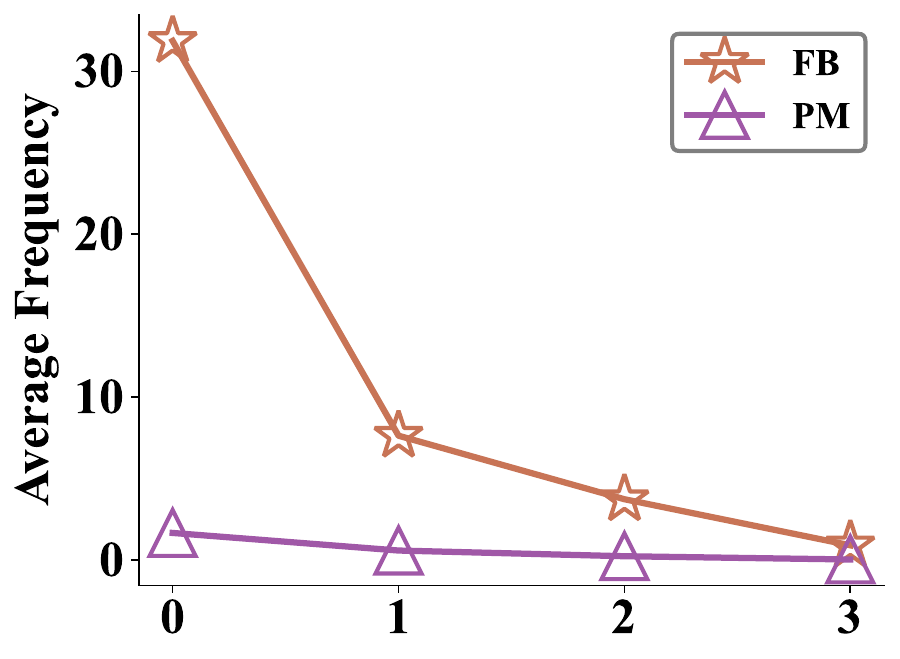}
    \label{subfig:ANDscore_MAX}
    }
    \subfigure[$AND$ score with $SUM$ aggregate]{
    \includegraphics[width=0.45\linewidth]{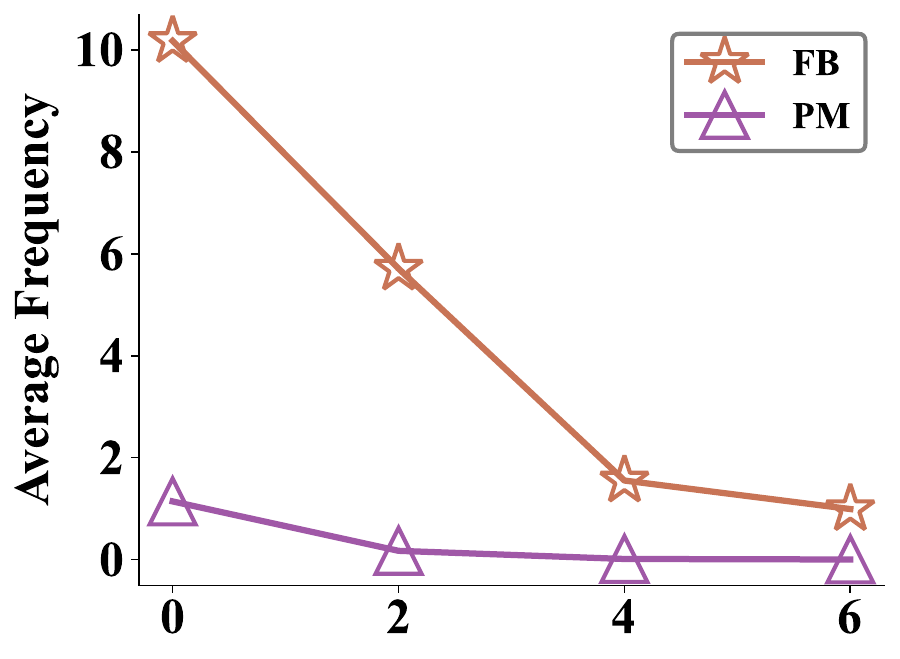}
    \label{subfig:ANDscore_SUM}
    }
    \vspace{-2ex}
    \caption{The average frequency distributions of the AND scores over FB and PM graph data sets.}
    \label{fig:ANDscore}
\end{figure}

\section{Related Work}
\label{sec:related_work}

In this section, we discuss previous works on subgraph matching and subgraph similarity search. 

{ \color{black}
\subsection{Subgraph Matching} 
\label{sec:sm}
Subgraph Matching (SM) is a fundamental task of graph mining that aims to discover important substructures over data graph \cite{YeLC24, ZhangLZL24, AraiFO23, Sun023, DBLP:conf/icde/YuanMZW22, DBLP:conf/icde/JinY0YQP21}. Recently, SM has been widely studied in community detection \cite{DBLP:conf/kdd/WuXZJSSZY22,berahmand2018community}, social network analysis \cite{DBLP:journals/vldb/HuangHZGYZ22, han2019efficient}, anomaly detection \cite{DBLP:conf/kdd/MaWYS23,DBLP:journals/tvcg/GuoJCGZC22}, etc. 
Previous works on SM can be divided into exact subgraph matching \cite{ammar2018distributed, lai2019distributed, sun2020rapidmatch} and approximate subgraph matching \cite{bhattarai2019ceci, han2019efficient} based on different classification standards.

\noindent{\bf Exact Subgraph Matching:} Existing works on exact subgraph matching considered backtracking-search-based~\cite{DBLP:conf/sigmod/BiCLQZ16, DBLP:conf/sigmod/HanKGPH19, DBLP:conf/sigmod/BhattaraiLH19} and multi-way-join-based algorithms~\cite{DBLP:journals/pvldb/LaiQYJLWHLQZZQZ19, sun2020rapidmatch, lai2019distributed}. 
The backtracking-search-based algorithm performs deep matching of the given query graph by vertex-to-vertex mapping and backtracks when the state matching fails. 
In particular, CECI~\cite{DBLP:conf/sigmod/BhattaraiLH19}, CFLMatch~\cite{DBLP:conf/sigmod/BiCLQZ16}, DP-iso~\cite{DBLP:conf/sigmod/HanKGPH19} optimize the overhead of generation of intermediate results by using preprocessing enumeration paradigms to execute a query. 
Examples of the multi-way-join-based algorithm include GpSM~\cite{tran2015fast}, which merges candidate edges based on pairwise join to obtain matching results and is suitable for tree-shaped or non-cyclic graph queries, and Graphflow~\cite{kankanamge2017graphflow}, which prunes neighbor nodes of candidate nodes based on worst-case optimal join to obtain the matching results and is suitable for dense cyclic graph queries.
Recently, embedding or learning-based approaches such as GNN-PE~\cite{ye2024efficient} considered the classic exact subgraph matching problem under the graph isomorphism semantics. GNN-PE~\cite{ye2024efficient} employed path embeddings for the exact subgraph matching problem over a data graph, where path embeddings are defined as the concatenation of embedding vectors from vertices on the path (i.e., embeddings of these vertices and their 1-hop neighbors), learned by \textit{Graph Neural Networks} (GNNs). In GNN-PE, it is assumed that each vertex in the data graph is only associated with a single keyword (rather than a keyword set in S$^3$AND), and the subgraph matching considers the graph isomorphism (instead of S$^3$AND matching semantics such as keyword set containment and aggregated neighbor difference constraints). Therefore, with a different graph data model and query semantics, we cannot directly use previous techniques in GNN-PE for tackling our S$^3$AND problem.

\vspace{0.5ex}\noindent{\bf Approximate Subgraph Matching:} When the response time is much more important than the accuracy, approximate subgraph matching improves the efficiency of subgraph matching by returning top-$k$ approximate subgraphs that are similar to the query graph, and is widely used in real applications~\cite{DBLP:conf/kdd/MaWYS23,DBLP:journals/tvcg/GuoJCGZC22,han2019efficient}. Existing approximate subgraph matching algorithms usually searched for top-$k$ similar subgraphs from a (large) data graph, by setting different similarity metrics for various scenarios, e.g., GED~\cite{gouda2016csi_ged, ibragimov2013gedevo, zeng2009comparing} and GBD~\cite{li2018efficient}.  Although these matches can give answers quickly, they do not ensure the accuracy of the returned subgraph answers and are more limited to the task scenario (e.g., the algorithms cannot give the exact locations of similarity subgraphs in the data graph).


\subsection{Subgraph Similarity Search}
\label{sec:sss}

Previous works on subgraph similarity search have conducted extensive research on subgraph partitioning~\cite{zhao2013partition, liang2017similarity}, filtering optimization~\cite{chen2019efficient, yang2014schemaless}, and indexing retrieval~\cite{wang2012efficient, wang2010efficiently, khan2013nema} to improve the efficiency. 
NeMa~\cite{khan2013nema} obtained top-$k$ subgraphs with the minimum matching costs, defined as the sum of \textit{keyword matching} and \textit{distance proximity costs} between query and data vertices. Here, the \textit{keyword matching cost} is given by the \textit{Jaccard similarity} over keyword sets from a pair of query and data vertices. Moreover, the \textit{distance proximity cost} is defined as the difference between \textit{neighborhood vectors}~\cite{khan2013nema} from a pair of query and data vertices, where the \textit{neighborhood vector} contains the distances from the query/data vertex to its neighbors within $h$-hop away from the vertex. In contrast, our S$^3$AND query semantics consider the containment relationship of keyword sets for the vertex matching (i.e., different from the \textit{Jaccard similarity measure} in NeMa), and take into account the structural difference between subgraph $g$ and query graph $q$ (i.e., aggregated 1-hop neighbor difference of each vertex $v_i$, compared with query vertex $q_j$) which differs from the NeMa semantics (i.e., the distance proximity cost, caring more about the similarity of distances from query/data vertices to their $h$-hop neighbors). Thus, due to distinct query semantics, we cannot directly borrow the techniques proposed for NeMa to solve our S$^3$AND problem.
SLQ~\cite{yang2014schemaless} obtains the top-$k$ subgraphs with the highest ranking scores, given by the sum of \textit{edge} and \textit{node matching costs}, where the \textit{edge matching cost} (or \textit{node matching cost}) is defined as the (weighted) number of transformation functions (pre-defined in a library) that can transform the data edge (or data node) to the query edge (or query node). Different from SLQ that considered the graph data model with semantic information in vertices and edges, the graph model in our S$^3$AND problem assumes vertices associated with keyword sets. Furthermore, our S$^3$AND query semantics focus on the 1-hop neighbor structural difference between subgraph $g$ and query graph $q$, which differs from the \textit{ranking scores} in SLQ. Thus, with different graph data model and query semantics, we cannot directly apply the approaches proposed in SLQ to tackle our S$^3$AND problem.
}
Recently, with the development of neural networks, e.g., GNN and GCN, more and more embedding-based subgraph similarity search algorithms~\cite{bause2022embassi, qin2020ghashing, YeLC24, bai2019simgnn, li2019graph} have been proposed, which can achieve faster online processing time.
However, the accuracy and model training cost of these methods are still insufficient for the needs of critical applications.
Due to different graph similarity semantics, we cannot directly borrow previous works on subgraph similarity search to solve our S$^3$AND problem.


\section{Conclusions}
\label{sec:conclusions}

In this paper, we formulate a novel problem, \textit{subgraph similarity search under aggregated neighbor difference semantics} (S$^3$AND), which has broad applications (e.g., collaborative team detection and fraud syndicate identification) in real-world scenarios.
To enable efficient online S$^3$AND queries, we propose two pruning strategies (i.e., \textit{keyword set} and \textit{AND lower bound pruning}), to filter out false alarms of candidate vertices/subgraphs. We also devise a tree index on offline pre-computed data, which can help apply our proposed pruning strategies to retrieve candidate subgraphs during the index traversal. Finally, we conduct extensive experiments to confirm the effectiveness and efficiency of our proposed S$^3$AND approach on real and synthetic graphs.


\clearpage

\bibliographystyle{ACM-Reference-Format}

\bibliography{dblp}

\end{document}